\documentclass[aps,notitlepage,twocolumn,nofootinbib,superscriptaddress,longbibliography]{revtex4-1}
\usepackage{graphicx}
\usepackage{amsmath,amsfonts,amssymb}
\usepackage{mathtools}
\usepackage{multirow}
\usepackage[utf8]{inputenc}
\usepackage[T1]{fontenc}

\usepackage{bm}
\usepackage{algorithm,algorithmic}
\usepackage{mathtools}
\usepackage{mathrsfs}
\usepackage[shortlabels]{enumitem}

\usepackage{graphicx,epic,eepic,epsfig,amsmath,latexsym,amssymb,verbatim,color}
\usepackage{amsfonts}  
\usepackage{bbm}
 
\usepackage{float}
\usepackage{tikz}
\usetikzlibrary{chains}
\usetikzlibrary{fit}
\usepackage{pgflibraryarrows}		
\usepackage{pgflibrarysnakes}		

\usepackage{epsfig}
\usetikzlibrary{shapes.symbols,patterns} 
\usepackage{pgfplots}

\usepackage[strict]{changepage}
\usepackage{hyperref}
\hypersetup{colorlinks=true,citecolor=blue,linkcolor=blue,filecolor=blue,urlcolor=blue,breaklinks=true}

\usepackage[marginal]{footmisc}
\usepackage{url}
\usepackage{theorem}
\newtheorem{definition}{Definition}
\newtheorem{proposition}{Proposition}
\newtheorem{lemma}[proposition]{Lemma}

\newtheorem{theorem}[proposition]{Theorem}
\newtheorem{corollary}[proposition]{Corollary}


\def\squareforqed{\hbox{\rlap{$\sqcap$}$\sqcup$}}
\def\qed{\ifmmode\squareforqed\else{\unskip\nobreak\hfil
\penalty50\hskip1em\null\nobreak\hfil\squareforqed
\parfillskip=0pt\finalhyphendemerits=0\endgraf}\fi}
\def\endenv{\ifmmode\;\else{\unskip\nobreak\hfil
\penalty50\hskip1em\null\nobreak\hfil\;
\parfillskip=0pt\finalhyphendemerits=0\endgraf}\fi}
\newenvironment{proof}{\noindent \textbf{{Proof~} }}{\hfill $\blacksquare$}

\newcounter{remark}

\newcounter{example}

\mathchardef\ordinarycolon\mathcode`\:
\mathcode`\:=\string"8000
\def\vcentcolon{\mathrel{\mathop\ordinarycolon}}
\begingroup \catcode`\:=\active
  \lowercase{\endgroup
  \let :\vcentcolon
  }

\usepackage{cleveref}
\usepackage{graphicx}
\usepackage{xcolor}

\definecolor{darkblue}{RGB}{0,76,156}
\definecolor{darkkblue}{RGB}{0,0,153}
\definecolor{blue2}{RGB}{102,178,255}
\definecolor{darkred}{RGB}{195,0,0}

\RequirePackage[framemethod=default]{mdframed}
\newmdenv[skipabove=7pt,
skipbelow=7pt,
backgroundcolor=darkblue!15,
innerleftmargin=5pt,
innerrightmargin=5pt,
innertopmargin=5pt,
leftmargin=0cm,
rightmargin=0cm,
innerbottommargin=5pt,
linewidth=1pt]{tBox}

\newmdenv[skipabove=7pt,
skipbelow=7pt,
backgroundcolor=blue2!25,
innerleftmargin=5pt,
innerrightmargin=5pt,
innertopmargin=5pt,
leftmargin=0cm,
rightmargin=0cm,
innerbottommargin=5pt,
linewidth=1pt]{dBox}

\newmdenv[skipabove=7pt,
skipbelow=7pt,
backgroundcolor=darkred!15,
innerleftmargin=5pt,
innerrightmargin=5pt,
innertopmargin=5pt,
leftmargin=0cm,
rightmargin=0cm,
innerbottommargin=5pt,
linewidth=1pt]{rBox}

\newcommand{\nc}{\newcommand}
\nc{\bra}[1]{\langle#1|}
\nc{\ket}[1]{|#1\rangle}
\nc{\ketbra}[2]{|#1\rangle\!\langle#2|}
\nc{\braket}[2]{\langle#1|#2\rangle}
\newcommand{\braandket}[3]{\langle #1|#2|#3\rangle}
\nc{\proj}[1]{| #1\rangle\!\langle #1 |}
\nc{\avg}[1]{\langle#1\rangle}
\nc{\rank}{\operatorname{Rank}}
\nc{\smfrac}[2]{\mbox{$\frac{#1}{#2}$}}
\nc{\tr}{\operatorname{tr}}
\nc{\ox}{\otimes}
\nc{\dg}{\dagger}
\nc{\dn}{\downarrow}
\nc{\var}{{\operatorname{var}}}
\nc{\rar}{\rightarrow}
\nc{\lrar}{\longrightarrow}
\nc{\argmin}{{\operatorname{argmin}}}
\nc{\id}{{\operatorname{id}}}


\nc{\Choi}{Choi-Jamio\l{}kowski}

\usepackage{hyperref}
\hypersetup{colorlinks=true,citecolor=blue,linkcolor=blue,filecolor=blue,urlcolor=blue,breaklinks=true}

\makeatletter
\def\grd@save@target#1{%
  \def\grd@target{#1}}
\def\grd@save@start#1{%
  \def\grd@start{#1}}
\tikzset{
  grid with coordinates/.style={
    to path={%
      \pgfextra{%
        \edef\grd@@target{(\tikztotarget)}%
        \tikz@scan@one@point\grd@save@target\grd@@target\relax
        \edef\grd@@start{(\tikztostart)}%
        \tikz@scan@one@point\grd@save@start\grd@@start\relax
        \draw[minor help lines,magenta] (\tikztostart) grid (\tikztotarget);
        \draw[major help lines] (\tikztostart) grid (\tikztotarget);
        \grd@start
        \pgfmathsetmacro{\grd@xa}{\the\pgf@x/1cm}
        \pgfmathsetmacro{\grd@ya}{\the\pgf@y/1cm}
        \grd@target
        \pgfmathsetmacro{\grd@xb}{\the\pgf@x/1cm}
        \pgfmathsetmacro{\grd@yb}{\the\pgf@y/1cm}
        \pgfmathsetmacro{\grd@xc}{\grd@xa + \pgfkeysvalueof{/tikz/grid with coordinates/major step}}
        \pgfmathsetmacro{\grd@yc}{\grd@ya + \pgfkeysvalueof{/tikz/grid with coordinates/major step}}
        \foreach \x in {\grd@xa,\grd@xc,...,\grd@xb}
        \node[anchor=north] at (\x,\grd@ya) {\pgfmathprintnumber{\x}};
        \foreach \y in {\grd@ya,\grd@yc,...,\grd@yb}
        \node[anchor=east] at (\grd@xa,\y) {\pgfmathprintnumber{\y}};
      }
    }
  },
  minor help lines/.style={
    help lines,
    step=\pgfkeysvalueof{/tikz/grid with coordinates/minor step}
  },
  major help lines/.style={
    help lines,
    line width=\pgfkeysvalueof{/tikz/grid with coordinates/major line width},
    step=\pgfkeysvalueof{/tikz/grid with coordinates/major step}
  },
  grid with coordinates/.cd,
  minor step/.initial=.2,
  major step/.initial=1,
  major line width/.initial=2pt,
}
\makeatother

\usepackage{thmtools}
\usepackage{thm-restate}
\usepackage{etoolbox}
\makeatletter
\def\problem@s{}
\newcounter{problems@cnt}

\newcommand{\allproblems}{\problem@s}
\makeatother

\usepackage{tcolorbox}
\usepackage{relsize}
\usepackage{booktabs}

\usepackage[qm]{qcircuit}
\usepackage{array}

\usepackage{tikz}
\usetikzlibrary{positioning}
\usetikzlibrary{shapes.geometric}
\usetikzlibrary{calc}
\definecolor{tensorblue}{rgb}{0.8,0.9,1}
\tikzset{ten/.style={fill=tensorblue}}
\newcommand{\diagram}[1]{ \begin{array}{cc}\begin{tikzpicture}[scale=.5,every node/.style={sloped,allow upside down},baseline={([yshift=+0ex]current bounding box.center)}] #1 \end{tikzpicture} \end{array} }

\usepackage{extarrows}


\allowdisplaybreaks

\begin{document}

\title{Fundamental limitations on optimization in variational quantum algorithms}
\author{Hao-Kai Zhang}
\affiliation{Institute for Quantum Computing, Baidu Research, Beijing 100193, China}
\affiliation{Institute for Advanced Study, Tsinghua University, Beijing 100084, China}
\author{Chengkai Zhu}
\affiliation{Institute for Quantum Computing, Baidu Research, Beijing 100193, China}
\author{Geng Liu}
\affiliation{Institute for Quantum Computing, Baidu Research, Beijing 100193, China}
\author{Xin Wang}
\email{wangxin73@baidu.com}
\affiliation{Institute for Quantum Computing, Baidu Research, Beijing 100193, China}

\begin{abstract}
Exploring quantum applications of near-term quantum devices is a rapidly growing field of quantum information science with both theoretical and practical interests. A leading paradigm to establish such near-term quantum applications is variational quantum algorithms (VQAs). These algorithms use a classical optimizer to train a parameterized quantum circuit to accomplish certain tasks, where the circuits are usually randomly initialized. In this work, we prove that for a broad class of such random circuits, the variation range of the cost function via adjusting any local quantum gate within the circuit vanishes exponentially in the number of qubits with a high probability. This result can unify the restrictions on gradient-based and gradient-free optimizations in a natural manner and reveal extra harsh constraints on the training landscapes of VQAs. Hence a fundamental limitation on the trainability of VQAs is unraveled, indicating the essential mechanism of the optimization hardness in the Hilbert space with exponential dimension. We further showcase the validity of our results with numerical simulations of representative VQAs. We believe that these results would deepen our understanding of the scalability of VQAs and shed light on the search for near-term quantum applications with advantages.
\end{abstract}

\date{\today}
\maketitle


Enormous efforts have been made to develop noisy intermediate scale quantum (NISQ) devices~\cite{Preskill2018} toward achieving near-term quantum advantage for practical applications in key areas including many-body physics~\cite{Wecker2015,Ho2018,Uvarov2020}, chemistry~\cite{McArdle2018a}, finance~\cite{Egger2020,Herman2022,Bouland2020}, and machine learning~\cite{Biamonte2017b}. The hybrid quantum-classical computation framework, including variational quantum algorithms (VQAs)~\cite{McClean2016,Cerezo2021a,Bharti2021,Endo2020}, is widely believed to be promising in making use of NISQ devices to deliver meaningful quantum applications.  Specifically, VQAs use a classical optimizer to train a parameterized quantum circuit (PQC) in order to solve problems in various topics such as ground state preparation~\cite{Peruzzo2014}, quantum linear algebra~\cite{Xu2019a,Huang2019b,Bravo-Prieto2019,Wang2020d}, quantum metrology~\cite{Beckey2022,Koczor2020,Meyer2021}, quantum entanglement~\cite{Wang2020,Bravo-Prieto2019a,Chen2021,Zhao2021}, and machine learning~\cite{Schuld2018a,LaRose2020,Schuld2021}. 

With the aim to outperform classical algorithms and show quantum advantage on certain tasks, a critical issue is whether VQAs can be extended to solve large-scale systems, i.e., the scalability of VQAs. Unfortunately, many studies point out that training in VQAs requires exponential resources with the system size under certain conditions~\cite{McClean2018,Arrasmith2020,Cerezo2021c,Arrasmith2021,Wang2021,Holmes2021,Bittel2021,OrtizMarrero2021,StilckFranca2021,Uvarov2021,Campos2021,DePalma2022}. Besides the practical limitations such as noises~\cite{Wang2021}, even ideal quantum devices will suffer from the so-called \textit{barren plateau} phenomenon~\cite{McClean2018}. It was shown that the gradient of the cost function vanishes exponentially in the number of qubits with a high probability for a random initialized PQC with sufficient depth, analogous to the vanishing gradient issue in classical neural networks. Consequently, exponentially vanishing gradients demand an exponential precision in the cost function measurement on a quantum device~\cite{Knill2006} to make progress in the gradient-based optimization, and hence an exponential complexity in the number of qubits.

\begin{figure}
    \includegraphics[width=\columnwidth]{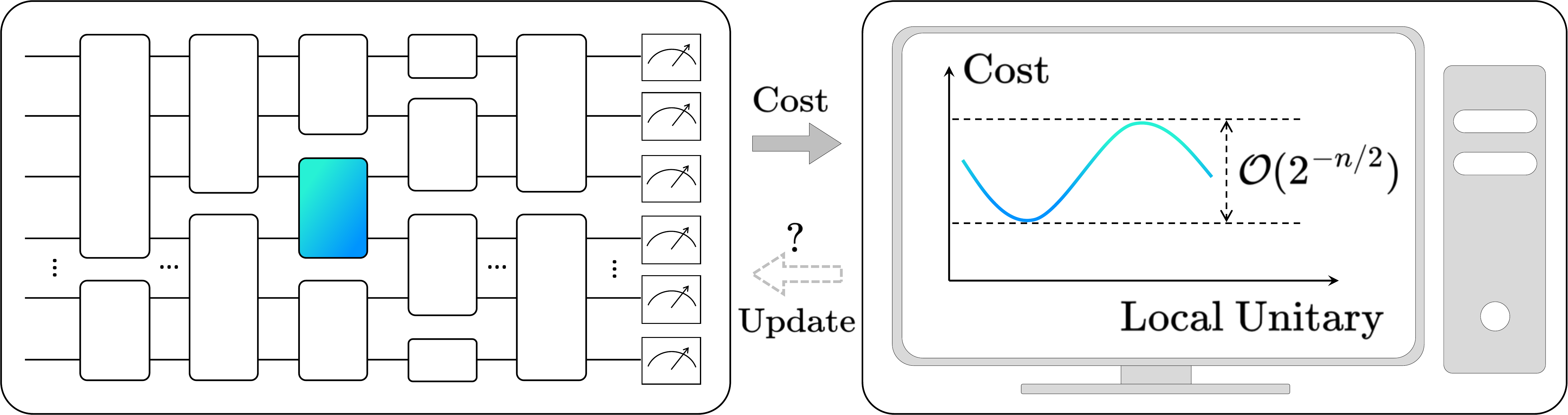}
    \caption{
        \textbf{Summary of the main results.} The left part depicts a randomly initialized PQC on $n$ qubits forming a $2$-design. The right part symbolically depicts the cost function on a classical device vs. the local unitary highlighted in the left part. The two parts together show a generic VQA routine: training a quantum circuit using a classical optimizer. This work proves that the cost function will fluctuate in an exponentially small range in the number of qubits with a high probability when we vary an arbitrary local unitary within the circuit.
    }
    \label{fig:luo_main}
\end{figure}

Several attempts have been made to avoid barren plateaus, such as higher order derivatives~\cite{Huembeli2021}, gradient-free optimizers including gate-by-gate optimization~\cite{Nakanishi2019,Ostaszewski2019}, proper initialization~\cite{Grant2019}, pre-training including adaptive methods~\cite{Verdon2019,Grimsley2019,Zhang2021,Skolik2021,Grimsley2022}, circuit architectures~\cite{Pesah2021,Liu2022a} and cost function choices~\cite{Cerezo2021,Kieferova2021}. More efforts are needed to study the general effectiveness of these attempts~\cite{Arrasmith2020,Cerezo2021c} and develop new strategies to improve the trainability and scalability of VQAs. As a guidance for exploring effective training strategies, it is crucial to uncover the essential mechanisms behind the barren plateau phenomenon.

However, few rigorous scaling results are known for generic VQAs besides gradient analyses and their descendent~\cite{Arrasmith2020,Cerezo2021c,Arrasmith2021}. It would be quite helpful for designing efficient algorithms if we could gain information on the training landscape beyond gradients. Naturally, we would like to know the entire variation range of the cost function when adjusting a single~\cite{Nakanishi2019,Ostaszewski2019} or several parameters as a guidance for the optimization, instead of just the limited information of the vicinity from gradient analyses. Combined with the fact that parameters usually enter the circuit independently through local quantum gates, e.g., the single-qubit rotation gates, all of which motivate our work where we are chiefly concerned with the variation range of the cost function via varying a local unitary within a quantum circuit.

In this work, we present a new rigorous scaling theorem on the trainability of VQAs beyond gradients. As summarized in Fig.~\ref{fig:luo_main}, we prove that when varying a local unitary within a sufficiently random circuit, the expectation and variance of the variation range of the cost function vanish exponentially in the number of qubits. Then through simple derivations, we show that this theorem implies exponentially vanishing gradients and cost function differences, and hence unifies the restrictions on gradient-based and gradient-free optimizations. Meanwhile, this theorem further delivers extra meaningful information about the training landscapes and optimization possibilities of VQAs. In this sense, we obtain a fundamental limitation on optimization in VQAs. Next we illustrate the applications of our theorem on representative VQAs. A tighter bound for the fidelity-type cost function is provided specifically even for shallow random circuits. At last, we perform numerical simulations on these representative VQAs, where the scaling exponents coincide with our analytical results almost precisely.

\vspace{3mm}

\noindent\textbf{Results}\\
\textbf{Limitations of local unitary optimization.} 
We start by introducing a general setting of VQAs used throughout our analysis. VQAs usually use a classical optimizer to train a quantum circuit $\mathbf{U}$ with an input state $\rho$ by minimizing a task-dependent cost function $C$, which is typically chosen as the expectation value of some Hermitian operator $H$
\begin{equation}\label{Eq:VQA_cost_def}
    C_{H,\rho}(\mathbf{U}) = \tr(H \mathbf{U}\rho \mathbf{U}^\dagger).
\end{equation}
Divide the whole qubit system into two parts $A, B$ with $m$ qubits and $n-m$ qubits, respectively. Here $m$ is a fixed constant not scaling with $n$ so that we call $A$ a local subsystem. The circuit $\mathbf{U}$ is often composed of local unitaries on real devices, such as the single-qubit rotation gates and the CNOT gate. We focus on a local unitary $U_A$ within $\mathbf{U}$ acting on subsystem $A$. As shown in Fig.~\ref{fig:main-circuit}, we denote the sub-circuit of $\mathbf{U}$ before $U_A$ as $V_1$ and that behind $U_A$ as $V_2$, such that $\mathbf{U}=V_2(U_A\otimes I_B)V_1$ where $I_B$ is the identity operator on $B$. $V_1$, $V_2$ and $U_A$ are independent of each other. 

To characterize the training landscape beyond the limited information of the vicinity from gradient analyses, we introduce a central quantity throughout this work, i.e., the \textit{variation range of the cost function} via varying a local unitary.
\begin{definition}
For a generic VQA cost function $C_{H,\rho}(\mathbf{U})$ in Eq.~{\rm (\ref{Eq:VQA_cost_def})}, we define its variation range with given $V_1,V_2$ as
\begin{equation}\label{Eq:def_delta_V1V2}
    \Delta_{H,\rho}(V_1,V_2) := \max_{U_A} C_{H,\rho}(\mathbf{U}) - \min_{U_A} C_{H,\rho}(\mathbf{U}),
\end{equation}
where the maximum and minimum with respect to $U_A$ are taken over the unitary group $\mathcal{U}(2^m)$ of degree $2^m$.
\end{definition}

The quantity $\Delta_{H,\rho}(V_1,V_2)$ intuitively reflects the maximal possible influence that the local unitary $U_A$ can have on the VQA cost function. We establish an upper bound on $\Delta_{H,\rho}(V_1,V_2)$ in the sense of probability by Theorem~\ref{theorem:main-theorem}, which thus delivers a limitation on optimizing an arbitrary local unitary. To be specific, we prove that if either $V_1$, $V_2$, or both match the Haar distribution up to the second moment, i.e., are sampled from unitary 2-designs~\cite{Dankert2009}, the expectation of $\Delta_{H,\rho}(V_1,V_2)$ vanishes exponentially in the number of qubits. Supplementary Note \textcolor{blue}{1} introduce some preliminaries on unitary designs. The proofs of our results are sketched in the Methods and detailed in the Supplementary Information.

\begin{figure}
    \centering
    \includegraphics[width=\columnwidth]{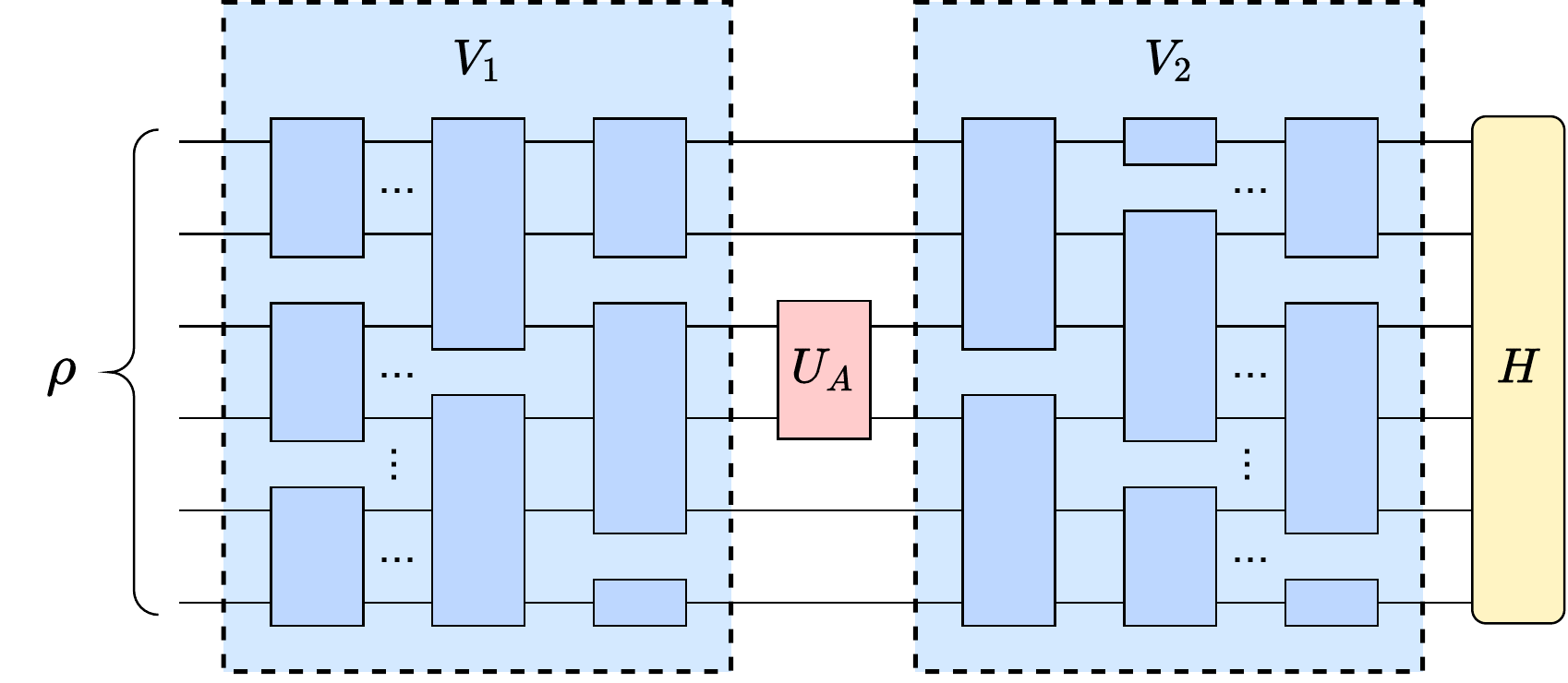}
    \caption{\textbf{Partition of the quantum circuit in our analysis}. A parameterized quantum circuit used in a VQA with an $n$-qubit input quantum state $\rho$ (either pure or mixed). We denote $U_A$ as a tunable local unitary implemented by some local quantum gates. Then the left part of the circuit implements unitary $V_1$, and the right part implements unitary $V_2$. A generic cost function of a VQA is the expectation value over some objective operator $H$.}
    \label{fig:main-circuit}
\end{figure}

\begin{theorem}\label{theorem:main-theorem}
Suppose $\mathbb{V}_1, \mathbb{V}_2$ are ensembles from which $V_1, V_2$ are sampled, respectively. If either $\mathbb{V}_1$ or $\mathbb{V}_2$, or both form unitary $2$-designs, then for arbitrary $H$ and $\rho$, the following inequality holds
\begin{equation}\label{Eq:main-theorem}
    \mathbb{E}_{V_1, V_2} [ \Delta_{H,\rho}(V_1,V_2) ] \leq \frac{ w(H) }{2^{n/2-3m-2}},
\end{equation}
where $\mathbb{E}_{V_1,V_2}$ denotes the expectation over $\mathbb{V}_1,\mathbb{V}_2$ independently. $w(H) = \lambda_{\max}(H) - \lambda_{\min}(H)$ denotes the spectral width of $H$, where $\lambda_{\max}(H)$ is the maximum eigenvalue of $H$ and $\lambda_{\min}(H)$ is the minimum. 
\end{theorem}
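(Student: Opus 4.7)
My plan is to isolate the $U_A$-dependent part of the cost through a partial-trace decomposition, then control its supremum over $U_A$ by combining 2-design moment estimates with a discretization argument. First, set $\tilde H := V_2^\dagger H V_2$, $\tilde\rho := V_1\rho V_1^\dagger$, $\pi_A := I_A/2^m$, and split off the $\pi_A$-product parts: $\tilde\rho = \pi_A\otimes\tilde\rho_B + \delta$ and $\tilde H = \pi_A\otimes\tilde H_B + \hat H$, with $\tilde\rho_B = \tr_A\tilde\rho$, $\tilde H_B = \tr_A\tilde H$, and $\tr_A\delta = \tr_A\hat H = 0$. Because $(U_A\otimes I_B)$ commutes with any $\pi_A\otimes M$ and any $\pi_A\otimes M$ is Hilbert--Schmidt orthogonal to any $A$-traceless operator, the cost reduces to $C(U_A) = C_0(V_1,V_2) + R(U_A)$ with $R(U_A) := \tr[\hat H(U_A\otimes I_B)\delta(U_A^\dagger\otimes I_B)]$ and $C_0$ independent of $U_A$, giving $\Delta_{H,\rho}(V_1,V_2) \leq 2\sup_{U_A}|R(U_A)|$. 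Recentering $H\mapsto H - \tfrac{\lambda_{\max}+\lambda_{\min}}{2}I$ leaves $R$ invariant (since $\tr_A\delta = 0$) but enforces $\|H\|_\infty \leq w(H)/2$ and $\tr(H^2) \leq 2^n (w(H)/2)^2$.

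Next, for each fixed $U_A$, I would bound $\mathbb{E}[R(U_A)^2]$ using the standard 2-design Weingarten identity $\mathbb{E}_V[VOV^\dagger\otimes VOV^\dagger] = \tfrac{(\tr O)^2 - \tr(O^2)/d}{d^2-1}\,I^{\otimes 2} + \tfrac{\tr(O^2) - (\tr O)^2/d}{d^2-1}\,S$ applied to whichever of $V_1,V_2$ is a 2-design (the other factor being bounded deterministically when only one is). The vanishing partial traces $\tr_A\delta = \tr_A\hat H = 0$ kill the $I^{\otimes 2}$ contribution, reducing $\mathbb{E}[R(U_A)^2]$ to explicit trace expressions in $\tr(H^2), \tr(\rho^2), 2^m$, and $2^n$; substituting the bound on $\tr(H^2)$ then produces $\mathbb{E}[R(U_A)^2] \lesssim c(m)\, w(H)^2 / 2^n$ with $c(m)$ polynomial in $2^m$ and independent of $U_A$.

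Finally, to convert the pointwise variance into $\mathbb{E}\sup_{U_A}|R(U_A)|$, cover $\mathcal{U}(2^m)$ by an $\epsilon$-net $\mathcal{N}$ of size $|\mathcal{N}| \leq (C/\epsilon)^{4^m}$ (using $\dim_{\mathbb{R}}\mathcal{U}(2^m) = 4^m$) and exploit that $R(U_A)$ is Lipschitz in $U_A$ with constant $\lesssim w(H)$ in the Frobenius metric. Pointwise Chebyshev over $\mathcal{N}$ combined with the Lipschitz estimate, after optimizing $\epsilon$, produces $\mathbb{E}\sup_{U_A}|R(U_A)| \leq 2^{O(m)} w(H)/2^{n/2}$; carefully tracking the numerical constants then matches the claimed $w(H)/2^{n/2-3m-2}$. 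This passage from pointwise variance to supremum is the main obstacle, since the naive uniform Cauchy--Schwarz $|R(U_A)| \leq \|\hat H\|_F\|\delta\|_F$ only gives the useless $w(H)\sqrt{2^n}$ because $\mathbb{E}\|\hat H\|_F^2 \sim \tr(H^2)$ can be as large as $2^n w(H)^2$. The net-and-concentration route is therefore essential, and the $2^{3m+2}$ prefactor emerges precisely from the $4^m$-dimensional counting of $\mathcal{U}(2^m)$.
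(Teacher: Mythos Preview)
Your decomposition isolating the $U_A$-dependent part $R(U_A)$ and your pointwise second-moment estimate $\mathbb{E}[R(U_A)^2]\lesssim c(m)\,w(H)^2/2^n$ are both fine and close in spirit to the paper. The gap is the passage from the pointwise variance to $\mathbb{E}\sup_{U_A}|R(U_A)|$ via an $\epsilon$-net plus Chebyshev: this step does not close with only second-moment information. Your Lipschitz constant for $R$ is indeed $\Theta(w(H))$, so to make the Lipschitz error at most $w(H)/2^{n/2}$ you must take $\epsilon\lesssim 2^{-n/2}$, hence $|\mathcal N|\gtrsim 2^{\,n\,4^m/2}$. A union bound with Chebyshev tails $\Pr[|R(U_A)|\ge t]\le c(m)\,w(H)^2/(2^n t^2)$ then gives, after integrating the tail and optimizing, $\mathbb{E}\max_{\mathcal N}|R|\lesssim \sqrt{|\mathcal N|}\,w(H)/2^{n/2}$, which is $\gtrsim w(H)\,2^{\,n(4^m-2)/4}$ and blows up in $n$ for every $m\ge 1$. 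Polynomial (second-moment) tails simply cannot survive a union bound over a net whose size is forced to be exponential in $n$; to make a net argument work here you would need sub-Gaussian concentration, which a $2$-design does not provide. So the claimed emergence of the $2^{3m+2}$ prefactor ``from the $4^m$-dimensional counting'' is not correct.

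The paper avoids the supremum-over-$U_A$ issue altogether by producing a \emph{$U_A$-independent} random upper bound before taking expectations. Concretely, it expands $\tilde H$ (or, in the other case, $\tilde\rho$) in an orthogonal unitary basis $\{\Lambda_j^A\}_{j=0}^{d_A^2-1}$ on $A$ as $\tilde H=\sum_j \Lambda_j^A\otimes O_j^B$, so that each term satisfies
\[
\bigl|\tr\bigl[(\Lambda_j^A\otimes O_j^B)(U_A\otimes I_B)\tilde\rho(U_A^\dagger\otimes I_B)\bigr]\bigr|
=\bigl|\tr\bigl[U_A^\dagger\Lambda_j^A U_A\,\tr_B((I_A\otimes O_j^B)\tilde\rho)\bigr]\bigr|
\le \|\Lambda_j^A\|_2\,\bigl\|\tr_B((I_A\otimes O_j^B)\tilde\rho)\bigr\|_2,
\]
by H\"older and unitary invariance; the right-hand side no longer contains $U_A$. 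One then bounds $\mathbb{E}_{V_1}\|\tr_B((I_A\otimes O_j^B)V_1\rho V_1^\dagger)\|_2$ by Jensen and the $2$-design second-moment formula, and sums over the $d_A^2=4^m$ terms. The $2^{3m}$ in the final bound comes from this $d_A^2$ count together with $\|\Lambda_j^A\|_2=\sqrt{d_A}$, not from any covering of $\mathcal U(2^m)$. Replacing your net-and-concentration step by this H\"older-after-tensor-decomposition trick is exactly the missing idea.
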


We make several remarks on Theorem~\ref{theorem:main-theorem}. Firstly, due to the non-negativity and boundedness of the variation range, i.e., $\Delta_{H,\rho}\in[0, w(H)]$, the variance of $\Delta_{H,\rho}$ can be bounded by its expectation times $w(H)$. Thus from Theorem~\ref{theorem:main-theorem} we know that the variance also vanishes exponentially
\begin{equation}\label{Eq:VarV1V2_delta}
    \operatorname{Var}_{V_1,V_2}[\Delta_{H,\rho}(V_1,V_2)] 
    \leq \frac{ w^2(H) }{2^{n/2-3m-2}}.
\end{equation}
Note that $w(H)\in\mathcal{O}(poly(n))$ holds for common VQAs. Moreover, Theorem~\ref{theorem:main-theorem} together with Markov's inequality provides an upper bound of the probability that $\Delta_{H,\rho}(V_1,V_2)$ deviates from zero. Namely, the following concentration inequality
\begin{equation}\label{Eq:PrV1V2_delta}
    \operatorname{Pr}[\Delta_{H,\rho}(V_1,V_2) \geq \epsilon] \leq  \frac{1}{\epsilon} \cdot \frac{w(H)}{2^{n/2-3m-2}},
\end{equation}
holds for any $\epsilon>0$, which means that the probability that the variation range $\Delta_{H,\rho}$ is non-zero to some fixed precision is exponentially small in the number of qubits.

Secondly, the exponentially small bound in (\ref{Eq:main-theorem}) is still non-trivial when $U_A$ is a \textit{global unitary} and satisfies the parameter-shift rule~\cite{Guerreschi2017,Mitarai2018,Schuld2018,Crooks2019,Mari2021} if both $\mathbb{V}_1$ and $\mathbb{V}_2$ form $2$-designs. Suppose $U_A = e^{-i\theta\Omega}$ with the Hermitian generator $\Omega$ satisfying $\Omega^2=I$. Since $\Omega$ has only two different eigenvalues $\pm 1$, there exists a unitary $W$ such that $We^{-i\theta\Omega}W^\dagger$ becomes a local unitary acting on a single qubit non-trivially. $W$ and $W^\dagger$ could be absorbed into $2$-design ensembles with $W^\dagger \mathbb{V}_1$ and $\mathbb{V}_2 W$ still forming $2$-designs~\cite{Kaznatcheev2009}. Therefore, the proof for global unitaries satisfying the parameter-shift rule can be reduced back to the case of local unitaries.

Moreover, it is worth noticing that the compact bound in (\ref{Eq:main-theorem}) only involves the spectral width $w(H)$ and does not depend on any detail of the Hermitian operator $H$. But if some specific structures about $H$ are known, e.g., the Pauli decomposition of $H$, a tighter bound could be derived in Supplementary Note~\textcolor{blue}{2} which depends on the coupling complexity of $H$. In addition, if the cost function reduces to the form of the fidelity between pure states, we could have a tighter bound with scaling $\mathcal{O}(2^{-n})$ in Proposition~\ref{prop:state-learning} below. Theorem~\ref{theorem:main-theorem} can be generalized to arbitrary dimensions besides qubit systems of dimension $2^n$, e.g., qutrit and qudit systems. A detailed proof is provided in Supplementary Note~\textcolor{blue}{2}. Finally, we point out that local operations making small influences on the whole system is a physically natural but mathematically non-trivial argument. For instance, it is easy to prove that even a single-qubit unitary is enough to rotate an arbitrary $n$-qubit pure state to a new state with a zero fidelity with the original one, which is a practical example that local operations make a great influence. So Theorem~\ref{theorem:main-theorem} may be invaluable as a rigorous version of the above argument in the context of VQAs and random quantum circuits.

\vspace{2mm}
\noindent\textbf{Implications of Theorem~{\rm\ref{theorem:main-theorem}}.}
Here we briefly demonstrate how Theorem~\ref{theorem:main-theorem} implies the restrictions on both gradient-based~\cite{McClean2018,Cerezo2021c} and gradient-free optimizations~\cite{Arrasmith2020} in a more natural manner, and indicates the extra restrictions besides them. In the following we focus on a PQC applicable for Theorem~\ref{theorem:main-theorem} with $M$ trainable parameters $\{\theta_{\mu}\}_{\mu=1}^M$ and denote the variation range of the cost function via varying $\theta_\mu$ as $\Delta_\mu$.

Consider the gradient-based optimization first. On the one hand, in the case where the parameter-shift rule is valid~\cite{Guerreschi2017,Mitarai2018,Schuld2018,Crooks2019,Mari2021}, Theorem~\ref{theorem:main-theorem} can strictly deduce vanishing gradients. Suppose $\{\theta_{\mu}\}_{\mu=1}^M$ are applicable for the parameter-shift rule (e.g., hardware-efficient ansatzes). Namely, $\theta_\mu$ enters the unitary $e^{-i\theta_\mu\Omega_\mu}$ within the circuit where $\Omega_\mu$ is a Hermitian generator satisfying $\Omega_\mu^2=I$. From Theorem~$\ref{theorem:main-theorem}$ we know that the expectation of $\Delta_{\mu}$ vanishes exponentially. Therefore, the derivative $\partial_\mu C := \frac{\partial C}{\partial \theta_\mu}$ with respect to $\theta_{\mu}$ satisfies
\begin{equation}
    \begin{aligned}
        \mathbb{E}[\left| \partial_\mu C \right|] 
        & = \mathbb{E}\left[ \left| C\left(\bm{\theta} + \frac{\pi}{4} \bm{{\rm e}}_\mu \right) - C\left(\bm{\theta} - \frac{\pi}{4} \bm{{\rm e}}_\mu \right) \right| \right] \\ 
        & \leq \mathbb{E}[\Delta_{\mu}] \in \mathcal{O}(2^{-n/2}),
    \end{aligned}
\end{equation}
where $\bm{{\rm e}}_{\mu}$ is the unit vector in the parameter space corresponding to $\theta_{\mu}$. From Markov's inequality as in (\ref{Eq:PrV1V2_delta}), we know that the probability that the derivative $\partial_\mu C$ deviates from zero by a small constant is exponentially small in the number of qubits. 

\begin{figure}
    \includegraphics[width=\columnwidth]{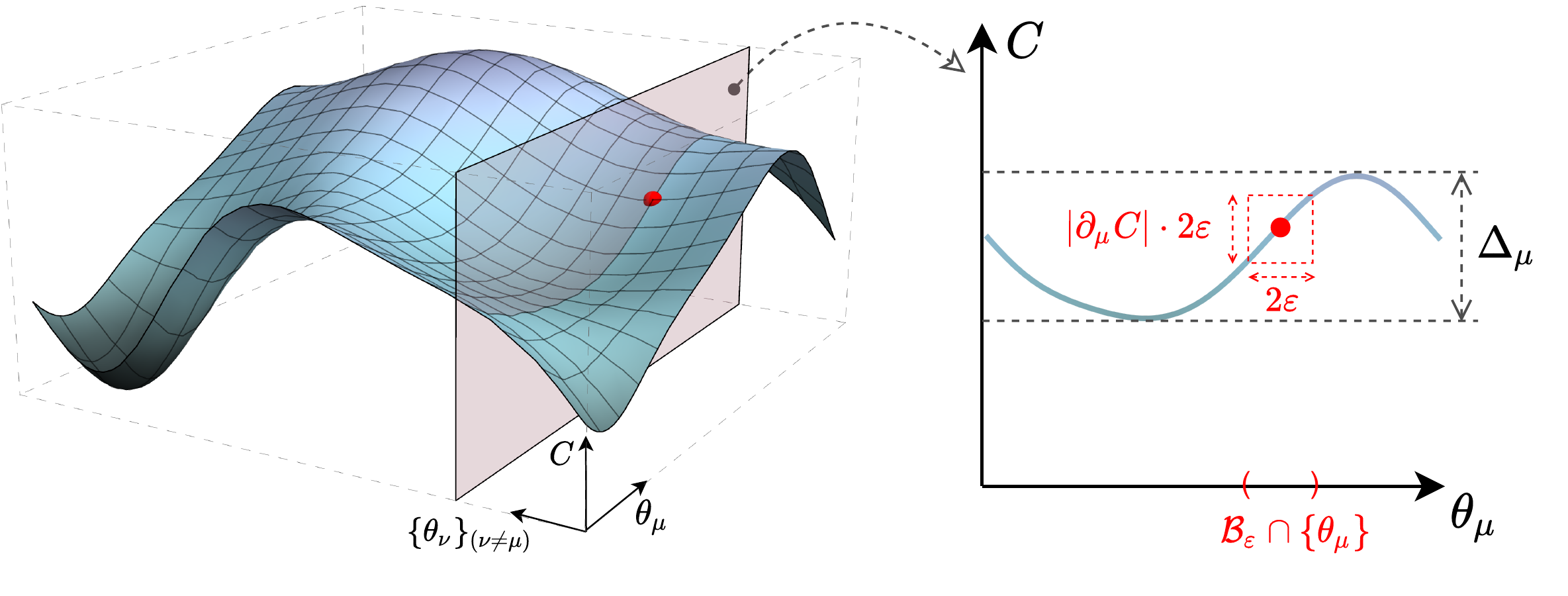}
    \caption{
    \textbf{Sketch of our results implying vanishing gradients.} The left panel sketches the whole training landscape with one of the parameters $\theta_\mu$ as the $x$-axis, all of other parameters $\{\theta_\nu\}_{\nu\neq\mu}$ as the $y$-axis symbolically and the cost function value $C$ as the $z$-axis. The right panel depicts a typical sample of the $z$-$x$ cross-section from the landscape on the left with variation range $\Delta_\mu$. Up to the linear approximation error, $\Delta_\mu$ can serve as an upper bound for the absolute derivative $|\partial_\mu C|$ times the vicinity size $2\varepsilon$. Thus Theorem~\ref{theorem:main-theorem} implies vanishing gradients even in the absence of the parameter-shift rule.
    }
    \label{fig:relation-BP}
\end{figure}

On the other hand, even in the absence of the parameter-shift rule, vanishing gradients could still be obtained approximately by the following arguments. Consider the vicinity of a random initialized parameter point where the linear approximation error is negligible, denoted as an $\varepsilon$-ball $\mathcal{B}_{\varepsilon}$ of radius $\varepsilon$ (here $\varepsilon$ plays the same role as the learning rate). As shown in Fig. \ref{fig:relation-BP}, the linearity in $\mathcal{B}_{\varepsilon}$ together with Theorem~\ref{theorem:main-theorem} leads to
\begin{equation}
    \mathbb{E}\left[|\partial_{\mu}C|\right]
    \leq \mathbb{E}\left[\frac{\Delta_{\mu}}{2\varepsilon}\right] 
    \in \mathcal{O}(2^{-n/2} \frac{1}{\varepsilon}),
\end{equation}
up to the linear approximation error, where $1/\varepsilon$ is not an essential factor since it reflects the frequencies of the landscape fluctuation rather than magnitudes, similar to the role of the factor $\tr(\Omega_{\mu}^2)$ in the expression of $\operatorname{Var}[\partial_\mu C]$~\cite{McClean2018}.

Then consider the gradient-free optimization. The basis for a gradient-free optimizer to update parameters are cost function differences. For the cost function difference between any two parameter points $\bm{\theta}'$ and $\bm{\theta}$, Theorem~\ref{theorem:main-theorem} leads to
\begin{equation}
    \begin{aligned}
        & \mathbb{E} \left[ \left| C(\bm{\theta}')-C(\bm{\theta}) \right| \right] \\
        & \leq \mathbb{E} \left[ \sum_{\mu=1}^M \left| C\left( \bm{\theta}^{(\mu)} \right) - C\left( \bm{\theta}^{(\mu - 1)} \right) \right| \right] \\
        & \leq \sum_{\mu=1}^M \mathbb{E} \left[ \left| \Delta_\mu \right| \right] \in \mathcal{O}(M 2^{-n/2}),
    \end{aligned}
\end{equation}
where $\bm{\theta}^{(\mu)} = \bm{\theta} + \sum_{\nu=1}^{\mu} \left( \theta'_\nu - \theta_\nu \right)\bm{{\rm e}}_\nu$ for $\mu=1,...,M$ and $\bm{\theta}^{(\mu)} = \bm{\theta}$ for $\mu=0$. Thus, as long as the number of parameters satisfies $M\in\mathcal{O}(poly(n))$, the cost function difference between any two points vanish exponentially in the number of qubits with a high probability, demanding an exponential precision to make progress in the gradient-free optimization.

Furthermore, Theorem~\ref{theorem:main-theorem} goes beyond vanishing gradients and vanishing differences between two fixed points. The exponentially vanishing quantity claimed by Theorem~\ref{theorem:main-theorem} is the variation range of the cost function in the \textit{whole parameter subspace} corresponding to a local unitary, e.g., the subspace of the $3$ Euler angles in a single-qubit rotation gate from $\mathcal{SU}(2)$, or the subspace of the $15$ parameters in a two-qubit rotation gate from $\mathcal{SU}(4)$, etc. Therefore, Theorem~\ref{theorem:main-theorem} can be regarded as a fundamental limitation on optimization in VQAs and a essential mechanism behind barren plateaus.

\vspace{2mm}
\noindent\textbf{Application on representative VQAs.} 
To better illustrate the meaning of our findings in practice, we further investigate the applications of Theorem~\ref{theorem:main-theorem} on three representative VQAs, including the variational quantum eigensolver (VQE), quantum autoencoder, and quantum state learning. The corresponding numerical simulation results are summarized in Fig. \ref{fig:simulation}.

\paragraph{Application on VQE.}
The variational quantum eigensolver is the most famous VQA with the goal to prepare the ground state of a given Hamiltonian $\hat{H}$ of a physical system~\cite{Peruzzo2014}. The cost function is naturally chosen to be the expectation of the Hamiltonian with respect to an ansatz state $\mathbf{U}\ket{0}$, i.e.
\begin{equation}
    C_{\rm VQE}(\mathbf{U}) = \braandket{0}{ \mathbf{U}^\dagger \hat{H} \mathbf{U} }{0}. 
\end{equation}
For most physical models with local interactions, the spectral width is proportional to the system size, i.e., $w(\hat{H})\in\mathcal{O}(n)$. Hence from Theorem~\ref{theorem:main-theorem} we know that $\Delta_{\rm VQE}(V_1,V_2)$ vanishes exponentially with a high probability for random circuits forming $2$-designs. For common repeated-layer-type ansatzes, e.g., the hardware-efficient ansatzes~\cite{Kandala2017}, linear depth $\mathcal{O}(n)$ is enough to make a randomly initialized circuit to be a sample from an approximate $2$-design ensemble~\cite{McClean2018,Harrow2009,Brandao2016}. We conduct numerical simulations for the variation range of the VQE cost function $\Delta_{{\rm VQE}}$ using the $1$-dimensional spin-$1/2$ antiferromagnetic Heisenberg model
\begin{equation}\label{Eq:Heisenberg-model}
    \hat{H}=\sum_{i=1}^{n}\left( X_{i} X_{i+1} + Y_{i} Y_{i+1} + Z_{i} Z_{i+1} \right),
\end{equation}
with periodic boundary condition, as shown in Fig. \ref{fig:simulation}(a).

\begin{figure}[tp]
    \includegraphics[width=\columnwidth]{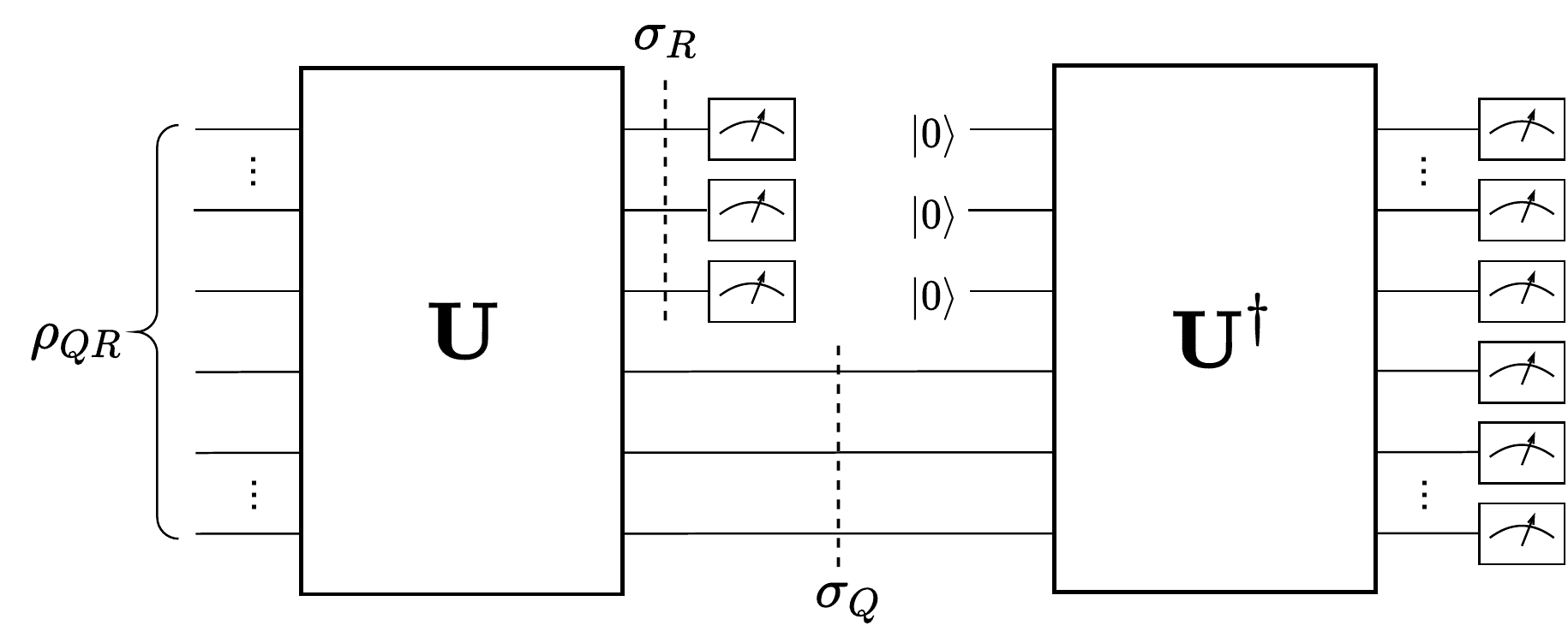}
    \caption{\textbf{Circuit setting of the quantum autoencoder.} $\rho_{QR}$ is the given state to be compressed and $\sigma_Q$ is the compressed state through the encoder $\mathbf{U}$. The quantum autoencoder aims to train $\mathbf{U}$ such that $\rho_{QR}$ can be reconstructed from $\sigma_Q$ with a high fidelity through the decoder $\mathbf{U}^\dagger$ combined with an ancilla zero state $\ketbra{0}{0}_R$. $\sigma_R$ denotes the state of the discarded part after compression.}
    \label{fig:autoencoder}
\end{figure}

\paragraph{Application on quantum autoencoder.}
The quantum autoencoder (QAE) is an approach for quantum data compression~\cite{Romero2017,Cao2020}. As shown in Fig.~\ref{fig:autoencoder}, a quantum circuit $\mathbf{U}$ is trained as an encoder to compress a given state $\rho_{QR}$ on a bipartite system $QR$ into a reduced state $\sigma_{Q} = \tr_{R}(\mathbf{U} \rho_{QR} \mathbf{U}^{\dagger})$ on subsystem $Q$, such that $\rho_{QR}$ can be reproduced from $\sigma_{Q}$ by the decoder isometry $\bra{0}_R\mathbf{U}^\dagger$ with a high fidelity. According to the monotonicity of the fidelity under partial trace, an easy-to-measure cost function could be reduced from the fidelity between $\rho_{QR}$ and the reconstructed state as
\begin{equation}\label{Eq:cost_QAE}
    C_{\rm QAE}(\mathbf{U}) := 1 - \tr\left((\ketbra{0}{0}_R\otimes I_Q) \mathbf{U} \rho_{QR} \mathbf{U}^\dagger \right).
\end{equation}
where the second term is exactly the fidelity between the state of the discarded part $\sigma_R = \tr_{Q}(\mathbf{U} \rho_{QR} \mathbf{U}^{\dagger})$ and the zero state $\ket{0}_R$ on subsystem $R$. The spectral width for the QAE cost function (\ref{Eq:cost_QAE}) is $w(H_{{\rm QAE}}) = 1$ with $H_{{\rm QAE}}=I_{QR}-\proj{0}_R\otimes I_{Q}$. Thus again from Theorem~\ref{theorem:main-theorem} we know that $\Delta_{\rm QAE}(V_1,V_2)$ vanishes exponentially in the number of qubits, specifically with the scaling $\mathcal{O}(2^{-n/2})$ as shown in Fig.~\ref{fig:simulation}(b). 

\begin{figure*}[ht]
    \includegraphics[width=\textwidth]{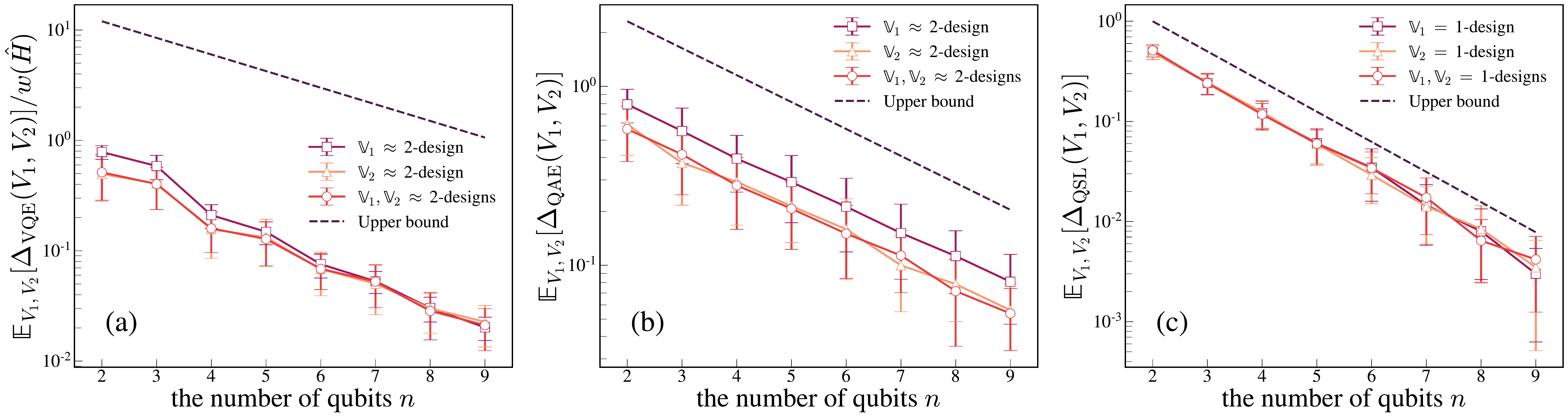}
    \caption{\textbf{Exponentially vanishing variation range of the cost function via varying a local unitary.} The data points represent the sample averages of the cost variation range $\Delta_{H,\rho}$ via varying a single-qubit unitary over the spectral width $w(H)$ as a function of the number of qubits on semi-log plots. Panel (a) and (b) correspond to the VQE with the $1$-dimensional Heisenberg model and the quantum autoencoder with one qubit discarded, respectively, where the error bars represent the standard deviations over samples. Panel (c) corresponds to the quantum state learning with the cost function being the fidelity with the zero state. Different legends stand for $\mathbb{V}_1$, $\mathbb{V}_2$ or both being approximate $2$-designs in (a), (b) and $1$-designs in (c). The dashed lines depict our theoretical upper bounds for the three tasks where the scaling exponents shows a good coincidence with the experimental results.}
    \label{fig:simulation}
\end{figure*}
\paragraph{Application on quantum state learning.}
The fidelity between pure states is a special case of the VQA cost function in (\ref{Eq:VQA_cost_def}) with a low-rank observable. Many useful VQA applications make use of the fidelity as their cost functions~\cite{Lee2018a,Shirakawa2021,Bravo-Prieto2019}. Here we uniformly call them quantum state learning (QSL) tasks. Denote the input state as $\ket{\psi}$ and the target state as $\ket{\phi}$. The QSL cost function can be written as
\begin{equation}\label{Eq:loss_QSL}
    C_{\rm QSL}(\mathbf{U}) = 1 - \left|\braandket{\phi}{\mathbf{U}}{\psi}\right|^2,
\end{equation}
Theorem~\ref{theorem:main-theorem} certainly can be applied here with $H_{{\rm QSL}} = I - \ketbra{\phi}{\phi}$ and $w(H_{\rm QSL})=1$. But here we can provide a tighter and stronger bound for the variation range $\Delta_{QSL}$ in this special case as Proposition~\ref{prop:state-learning}, which generally holds for the Bures fidelity $F(\rho,\sigma)=\left(\tr\sqrt{\rho^{1/2}\sigma\rho^{1/2}}\right)^2$ between arbitrary density matrices $\rho$ and $\sigma$. A generalized version of the QSL cost function is
\begin{equation}\label{Eq:qsl_any_density_matirx}
    C_{\rm QSL}(\mathbf{U}) = 1 -  F\left(\mathbf{U}\rho \mathbf{U}^\dagger, \sigma\right).
\end{equation}
The proof of Proposition~\ref{prop:state-learning} is sketched in the Methods and detailed in Supplementary Note~\textcolor{blue}{3}.
\begin{proposition}\label{prop:state-learning}
Suppose $\mathbb{V}_1, \mathbb{V}_2$ are ensembles from which $V_1, V_2$ are sampled, respectively. If either $\mathbb{V}_1$ or $\mathbb{V}_2$, or both form unitary $1$-designs, then for arbitrary $\rho$ and $\sigma$, the following inequality holds
\begin{equation}
    \mathbb{E}_{V_1,V_2} \left[ \Delta_{\rm QSL}(V_1,V_2) \right] \leq \frac{1}{2^{n-2m}}.
\end{equation}
\end{proposition}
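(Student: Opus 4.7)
My approach is to (i) bound $\Delta_{\rm QSL}$ by the maximum pure-state fidelity, (ii) reduce that maximum to a trace-norm problem using the fact that $U_A$ acts only on $m$ qubits, (iii) apply the rank bound coming from this localization, and (iv) average using the $1$-design hypothesis, with a final lift to mixed states by purification. Since $F\in[0,1]$, one always has $\Delta_{\rm QSL}(V_1,V_2)\le\max_{U_A}F(\mathbf{U}\rho\mathbf{U}^\dagger,\sigma)$, so it suffices to control $\mathbb{E}[\max_{U_A}F]$.

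I will first handle pure inputs $\rho=\ketbra{\psi}{\psi}$ and $\sigma=\ketbra{\phi}{\phi}$. Setting $\ket{\psi'}=V_1\ket{\psi}$, $\ket{\phi'}=V_2^\dagger\ket{\phi}$ and letting $P,Q$ be the $2^m\times 2^{n-m}$ matricizations of the amplitudes in the $A\otimes B$ basis, a direct index computation yields $\braandket{\phi}{\mathbf{U}}{\psi}=\tr[U_A PQ^\dagger]$. Maximizing over $U_A$ via the dual characterization of the trace norm gives $\max_{U_A}|\braandket{\phi}{\mathbf{U}}{\psi}|^2=\|PQ^\dagger\|_1^2$, and because $PQ^\dagger$ is a $2^m\times 2^m$ matrix, the Cauchy--Schwarz singular-value inequality yields $\|PQ^\dagger\|_1^2\le 2^m\|PQ^\dagger\|_2^2$. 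I will then identify $\|PQ^\dagger\|_2^2=\tr[\rho'_B\sigma'_B]$ after recognizing $P^\dagger P$ and $Q^\dagger Q$ as the transposes of the reduced states $\rho'_B=\tr_A(V_1\rho V_1^\dagger)$ and $\sigma'_B=\tr_A(V_2^\dagger\sigma V_2)$. The $1$-design hypothesis on $\mathbb{V}_1$ collapses $\mathbb{E}_{V_1}[\tr(\rho'_B\sigma'_B)]$ to $1/2^{n-m}$ (symmetrically for $\mathbb{V}_2$), and chaining the estimates delivers $\mathbb{E}[\Delta_{\rm QSL}]\le 2^m/2^{n-m}=2^{2m-n}$.

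To lift the result to arbitrary density matrices, I will purify $\rho$ and $\sigma$ on the enlarged system $A\otimes B\otimes\mathrm{anc}$ and invoke Uhlmann's theorem, writing $F(\mathbf{U}\rho\mathbf{U}^\dagger,\sigma)=\max_{W\in\mathcal{U}(\mathrm{anc})}|\braandket{\Psi_\rho}{\mathbf{U}^\dagger\otimes W}{\Psi_\sigma}|^2$. For each fixed $W$ the inner maximum over $U_A$ is again a pure-state problem with $U_A$ acting on the same $m$-qubit factor, so the pure-state reduction of the previous paragraph applies verbatim and yields an upper bound $2^m\tr[\alpha_{B,\mathrm{anc}}(I_B\otimes W)\beta_{B,\mathrm{anc}}(I_B\otimes W^\dagger)]$, where $\alpha_{B,\mathrm{anc}}$ and $\beta_{B,\mathrm{anc}}$ are the reduced states of the rotated purifications after tracing out $A$. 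The $1$-design identity $\mathbb{E}_{V_1}[\alpha_{B,\mathrm{anc}}]=(I_B/2^{n-m})\otimes\rho_\mathrm{anc}$ combined with $\tr_B[(I_B\otimes W)\beta_{B,\mathrm{anc}}(I_B\otimes W^\dagger)]=W\sigma_\mathrm{anc}W^\dagger$ collapses the right-hand side, for every $W$, to $(1/2^{n-m})\tr[\rho_\mathrm{anc}W\sigma_\mathrm{anc}W^\dagger]\le 1/2^{n-m}$, so the $2^{2m-n}$ scaling is preserved.

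The pure-state reduction is a clean trace-norm calculation and the $1$-design step is elementary once the $2^m$ rank bound is in place. The hard part will be the mixed-state extension: since naively $\mathbb{E}[\max_W\,\cdot\,]\ne\max_W\mathbb{E}[\,\cdot\,]$, one must justify commuting the Uhlmann maximum with the expectation over $V_1,V_2$ without losing the $2^{2m-n}$ factor. A plausible route is to absorb the $W$-freedom into the choice of purification of $\sigma$ before averaging, or to combine monotonicity of $F$ under partial trace with its joint concavity so that Jensen can be invoked against the maximally mixed state $\mathbb{E}_{V_1}[\rho'_B]=I_B/2^{n-m}$; making this interchange rigorous is the central technical step of the proof.
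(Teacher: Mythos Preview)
Your pure–state argument is correct and, once unpacked, coincides with the paper's. The identity $\max_{U_A}|\langle\phi'|(U_A\otimes I_B)|\psi'\rangle|^2=\|PQ^\dagger\|_1^2$ is Uhlmann's theorem applied to the purifications $|\psi'\rangle,|\phi'\rangle$ of $\rho'_B,\sigma'_B$ on the $A$–register, so what you compute is exactly $F(\rho'_B,\sigma'_B)$; the paper reaches the same quantity via monotonicity of $F$ under $\tr_A$. Your rank bound $\|PQ^\dagger\|_1^2\le 2^m\|PQ^\dagger\|_2^2$ is then the paper's inequality $F(\rho'_B,\sigma'_B)\le\operatorname{rank}(\rho'_B\sigma'_B)\,\tr(\rho'_B\sigma'_B)$ with $\operatorname{rank}\le d_A$, and the $1$–design average is identical.

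For the mixed–state extension, the $\max_W$/$\mathbb{E}$ interchange gap you flag is genuine and the purification route does not close cleanly: the inequality $\mathbb{E}_{V_1}[\max_W(\cdot)]\ge\max_W\mathbb{E}_{V_1}[\cdot]$ goes the wrong way, and neither ``absorbing $W$ into the purification'' nor joint concavity reverses it. The paper sidesteps the issue entirely by never purifying. It uses precisely the second route you mention: monotonicity of the Bures fidelity under the channel $\tr_A$ gives, for \emph{any} density matrices,
\[
F\bigl((U_A\otimes I_B)V_1\rho V_1^\dagger(U_A^\dagger\otimes I_B),\,V_2^\dagger\sigma V_2\bigr)\ \le\ F\bigl(\tr_A(V_1\rho V_1^\dagger),\,\tr_A(V_2^\dagger\sigma V_2)\bigr),
\]
and the right–hand side no longer contains $U_A$ (nor any auxiliary $W$). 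One then bounds $F(\rho_B,\sigma_B)\le\operatorname{rank}(\rho_B\sigma_B)\,\tr(\rho_B\sigma_B)\le d_A\,\tr(\rho_B\sigma_B)$ and averages $\tr(\rho_B\sigma_B)$ using the $1$–design identity $\mathbb{E}_{V_1}[V_1\rho V_1^\dagger]=I/d$, which yields $1/d_B$. So the fix is simply: drop Uhlmann and apply monotonicity under $\tr_A$ directly to the mixed states; the remaining steps are the ones you already have.
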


Compared with Theorem~\ref{theorem:main-theorem}, the bound $\mathcal{O}(2^{-n})$ becomes tighter and the demanded randomness becomes weaker in this special case. Note that a unitary $2$-design is always a $1$-design by definition and the depth of a random circuit being a $1$-design is much shallower, e.g., a single layer of $\mathcal{SU}(2)$ elements on each qubit parameterized by $3$ Euler angles is enough to form a $1$-design ensemble. Like in (\ref{Eq:VarV1V2_delta}) and (\ref{Eq:PrV1V2_delta}), the variance and the probability that $\Delta_{{\rm QSL}}$ deviates from zero also vanish exponentially, but only require random circuits forming unitary $1$-designs. Moreover, still with $1$-designs, Proposition~\ref{prop:state-learning} implies exponentially vanishing cost gradients and cost differences in the same way as Theorem~\ref{theorem:main-theorem}, which may be considered as the underlying mechanism behind the severe barren plateaus for global cost functions even with shallow quantum circuits~\cite{Cerezo2021}.

\vspace{2mm}

\noindent\textbf{Numerical simulations of experiments.} 
To show the validity of our results, we carry out numerical simulations of experiments on the three tasks discussed above via Paddle Quantum~\cite{Paddle-Quantum} on the PaddlePaddle Deep Learning Platform~\cite{Ma2019}.

\setcounter{paragraph}{0}
\paragraph{Circuit setting.} 
We first introduce the circuit settings used in our simulations. We consider subsystem $A$ only containing a single qubit, namely $m=1$, and parameterize the local unitary $U_A\in\mathcal{U}(2)$ with $3$ Euler angles up to a global phase, i.e., $U_A(\phi, \theta, \alpha) = R_z(\phi) R_y(\theta) R_z(\alpha)$, where $R_y$ and $R_z$ are single-qubit rotation gates with generators being $Y$ and $Z$ Pauli matrices. To construct random circuits forming $2$-designs as $V_1$ or $V_2$ used in the VQE and QAE examples, we employ the following hardware-efficient ansatz as in~\cite{McClean2018} for comparison. 
\begin{equation}\label{Cir:hardware_efficient}
    \begin{array}{c}
        \centering
        \Qcircuit @C=1em @R=0.3em {
        & \gate{R_y(\frac{\pi}{4})} & \gate{R_{P_{1,1}}(\theta_{1,1})} & \ctrl{1} & \qw & \qw & \qw & \qw &\cdots\\
        & \gate{R_y(\frac{\pi}{4})} & \gate{R_{P_{1,2}}(\theta_{1,2})} & \control \qw & \ctrl{1} & \qw & \qw &\qw & \cdots\\
        & \gate{R_y(\frac{\pi}{4})} & \gate{R_{P_{1,3}}(\theta_{1,3})} & \qw & \control \qw & \ctrl{2} & \qw & \qw & \cdots\\
        \\
        & \cdots & \cdots &  & \cdots &  & & & \cdots\\
        \\
        & \gate{R_y(\frac{\pi}{4})} & \gate{R_{P_{1,n}}(\theta_{1,n})} & \qw & \qw & \qw & \ctrl{-2} \qw & \qw & \cdots & \dstick{\times 10n} \gategroup{1}{3}{7}{8}{.7em}{--}  \\
        } 
    \end{array}
\end{equation}
A single layer of $R_y(\pi/4)=\exp(-i Y\pi/8)$ gates are laid at the very beginning of the circuit to make the three rotation axes have equal status, then followed by $10\times n$ repeated layers. Each layer consists $n$ single-qubit rotation gates $R_P(\theta)$ on each qubit together with $n-1$ controlled phase gates between nearest neighboring qubits aligned as 1-dimensional array, where the rotation axes $P\in \{x,y,z\}$ is chosen with uniform probability and $\theta \in [0, 2\pi)$ is also chosen uniformly. A such random circuit with $\mathcal{O}(n)$ repeated layers could be considered as an approximate $2$-design (here we employ $10\times n$)~\cite{McClean2018,Harrow2009,Brandao2016}. Experimental results with different numbers of layers are also presented in Supplementary Note~\textcolor{blue}{4} to show how the expectation of the cost variation range $\Delta_{H,\rho}$ vanishes with the circuit depth. To construct random circuits forming $1$-designs as $V_1$ or $V_2$ used in the QSL example, we just replace the repeated layers above by a single layer of $\mathcal{SU}(2)$ elements $R_z(\phi) R_y(\theta) R_z(\alpha)$ on each qubit with $\phi,\theta,\alpha\in[0,2\pi)$ are chosen with uniform probability, which is enough to form a unitary $1$-design ensemble.

\paragraph{Implementation procedure.}
To implement the maximization and minimization in the definition of $\Delta_{H,\rho}(V_1,V_2)$ with respect to $U_A$, we employ the Adam optimizer to update $U_A$ iteratively until convergence for each sample of $V_1,V_2$, and hence we get $\Delta_{H,\rho}(V_1,V_2)$. We consider the converged value as a good estimation of the true values of $\max_{U_A}C$ and $\min_{U_A}C$ with a tolerable error, at least for circuits with a small number of qubits ($\leq 10$) and a modest depth ($\leq 10\times n$). We repeat this procedure for different number of qubits and different statistics of $\mathbb{V}_1$ and $\mathbb{V}_2$, i.e., $\mathbb{V}_1$ or $\mathbb{V}_2$ being a $2$-design ($1$-design) while the other being identity.

\paragraph{Numerical results.}
We summarize the simulation results for the three examples in the three semi-log plots in Fig.~\ref{fig:simulation}, respectively. Note that the slopes of the lines imply the rates of exponential decay. The data points represent the sample averages of the cost variation range $\Delta_{H,\rho}$ via varying $U_A$ over $w(H)$ and the error bars represent the standard deviations over samples. We specially rescale the error bar in the QSL example as a quarter of the standard deviation for better presentation on semi-log plots. One can see that in all the cases, the expectations of $\Delta_{H,\rho}(V_1,V_2)$ vanish exponentially in the number of qubits. The data lines are almost parallel to the dashed lines depicting the theoretical upper bounds. That is to say, the scaling behaviors, $\mathcal{O}(2^{-n/2})$ in Fig.~\ref{fig:simulation}(a), (b) and $\mathcal{O}(2^{-n})$ in (c), almost coincide with the predictions from Theorem~\ref{theorem:main-theorem} and Proposition~\ref{prop:state-learning}. The VQE example actually has a fitted slope a little steeper than $-0.5$ with a model-dependent odd-even oscillation. A detailed derivation can be found in Supplementary Note~\textcolor{blue}{2} for the tighter task-dependent upper bounds used in Fig.~\ref{fig:simulation}(a) and (b).

\vspace{3mm}
\noindent\textbf{Discussion}\\
In this work, we have shown that the maximal possible influence of a local unitary within a random quantum circuit on the cost function vanishes exponentially in the number of qubits with a high probability. We remark that the randomness required is just a $2$-design for the general VQA cost function in Theorem~\ref{theorem:main-theorem} and a $1$-design for the fidelity-type cost function in Proposition~\ref{prop:state-learning}, in spite that the integrand $\Delta_{H,\rho}(V_1,V_2)$ is not necessarily a polynomial of degree at most $2$ or $1$ in the entries of $V_1$ and $V_2$ due to the maximization and minimization operations. The exponential bound still works non-trivially for global unitaries satisfying the parameter-shift rule.

Significantly, it is worth noticing that the exponentially vanishing quantity we claimed is the \textit{entire} variation range of the cost function in the \textit{whole parameter subspace} corresponding to an arbitrary local unitary, e.g., the subspace of the $3$ Euler angles in a single-qubit rotation gate from $\mathcal{SU}(2)$, or the subspace of the $15$ parameters in a two-qubit rotation gate from $\mathcal{SU}(4)$, etc, which is in sharp contrast with the gradient analyses on the VQA training landscapes.

Theorem~\ref{theorem:main-theorem} together with Proposition~\ref{prop:state-learning}, as new rigorous scaling results for generic VQAs beyond gradients, can unify the restrictions on gradient-based and gradient-free optimizations in a natural way and be regarded as the underlying mechanism behind the barren plateau phenomenon. Additionally, the results are completely independent with quantum circuit details such as gate parameterization. Therefore, a fundamental limitation is unravelled on optimization in VQAs, which can serve as a guidance for designing better training strategies to improve the scalability of VQAs. A direct consequence is that, since our results hold regardless of optimizer choices, the gate-by-gate optimization strategy is ineffective no matter what optimizers are utilized. Reparameterization within local unitaries is also unhelpful. Further excluding the strategies which have already been ruled out by the gradient analyses~\cite{McClean2018,Arrasmith2020,Cerezo2021c}, the hopeful strategies may exist in proper initialization~\cite{Grant2019}, pre-training including adaptive methods~\cite{Verdon2019,Grimsley2019,Zhang2021,Skolik2021,Grimsley2022}, circuit architectures~\cite{Pesah2021,Liu2022a} and cost function choices~\cite{Cerezo2021,Kieferova2021}, etc.

\vspace{3mm}
\noindent\textbf{Methods}\\
Here we give a sketch of the proof of Theorem~\ref{theorem:main-theorem} and Proposition~\ref{prop:state-learning} and remain the details in the Supplementary Note~\textcolor{blue}{2} and \textcolor{blue}{3}, respectively. 

\vspace{3mm}
\noindent\textbf{Sketch of the proof of Theorem~\ref{theorem:main-theorem}}. 
Without loss of generality, we assume that $H$ is traceless since (\ref{Eq:main-theorem}) is invariant if $H$ is added by a homothety $H\rightarrow H + c I$, $c\in\mathbb{R}$. Moreover, we only need to find a upper bound for the maximization term in $\Delta_{H,\rho}$ since the bound for the minimization term could be obtained similarly by replacing $H$ with $-H$. 

On the one hand, if $\mathbb{V}_1$ is a $2$-design, we can first take the expectation over $\mathbb{V}_1$, i.e.,
\begin{equation}\label{Eq:V_1-2design-EdeltaC}
    \mathbb{E}_{V_1} \max_{U_A} \left[ \tr\left( H \mathbf{U} \rho \mathbf{U}^\dagger \right) \right].
\end{equation}
where $\mathbf{U}=V_2(U_A\otimes I_B)V_1$. To find an upper bound for this expectation, we expand the traceless Hermitian operator $\tilde{H} = V_2^\dagger H V_2$ in terms of Pauli strings  $\hat{\sigma}_j^A$ on subsystem $A$ as 
\begin{equation}\label{Eq:H_decomposition}
     \tilde{H} = \tr_B(\tilde{H}) \otimes \frac{I_B}{2^{n-m}} + \frac{I_A}{2^m} \otimes \tr_A(\tilde{H}) + \sum_{j=1}^{4^{m}-1} \hat{\sigma}_j^A\otimes O_j^B,
\end{equation}
where the first term only acts on subsystem $A$ non-trivially, the second term only acts on subsystem $B$ non-trivially and the last terms act on $A$ and $B$ both non-trivially. $O_j^B$ represents the sum of Pauli strings on $B$ corresponding to $\hat{\sigma}_j^A$. Note that every decomposed term has a bipartite tensor product structure. We bound (\ref{Eq:V_1-2design-bound}) by taking the maximization and expectation for each term respectively and add them together at last. For each term, we use Hölder's inequality to extract $U_A$ out and bound the remaining part with specific calculations of $2$-design element-wise integrals (see Supplementary Note~\textcolor{blue}{1}). Consequently, we arrive at
\begin{equation}\label{Eq:V_1-2design-bound}
    \mathbb{E}_{V_1} \max_{U_A} \left[ \tr\left( H \mathbf{U} \rho \mathbf{U}^\dagger \right) \right]
    \leq \frac{w(\tilde{H})}{ 2^{n/2-3m-1}},
\end{equation}
Since $w(\tilde{H}) = w(V_2^\dagger H V_2) = w(H)$ for arbitrary $V_2$, the bound in (\ref{Eq:V_1-2design-bound}) is valid even after taking expectation over $\mathbb{V}_2$ no matter what the ensemble $\mathbb{V}_2$ contains. On the other hand, if $\mathbb{V}_2$ is a $2$-design, we can perform the decomposition for $V_1\rho V_1^\dagger$ and evaluate the upper bound in a similar spirit.

\vspace{3mm}
\noindent\textbf{Sketch of the proof of Proposition~\ref{prop:state-learning}}. Suppose $\mathbb{V}_1$ is a $1$-design and absorb $V_2$ into the definition of $\sigma$ as $V_2\sigma V_2^\dagger$. The case of $\mathbb{V}_2$ being a $1$-design can be proved similarly. 

Due to the non-negativity of the fidelity, the cost variation range $\Delta_{{\rm QSL}}$ is not larger than the maximization term in its definition (\ref{Eq:def_delta_V1V2}). Then according to the monotonicity of the fidelity function under partial trace, the fidelity between two quantum states can be upper bounded by the fidelity between the corresponding reduced states on a subsystem, i.e.
\begin{equation}\label{Eq:F_monotonicity}
    F((U_A\otimes I_B)V_1\rho V_1^\dagger(U_A\otimes I_B)^\dagger,\sigma) \leq F(\rho_{B},\sigma_{B}),
\end{equation}
where $\rho_{B}=\tr_A(V_1\rho V_1^\dagger)$ and $\sigma_{B}=\tr_A(\sigma)$. (\ref{Eq:F_monotonicity}) holds for arbitrary $U_A$ and hence holds for the case of maximizing with respect to $U_A$. Then by definition the Bures fidelity is the squared Schatten $1$-norm of the square root of the product of two density matrices, which can be upper bounded by the squared Schatten $2$-norm times its rank. This leads to
\begin{equation}
    F(\rho_{B},\sigma_{B}) \leq \operatorname{rank}(\rho_{B}\sigma_{B}) \tr(\rho_{B}\sigma_{B}).
\end{equation}
The rank is not larger than the Hilbert space dimension of subsystem $A$ and the expectation of $\tr(\rho_{B}\sigma_{B})$ over $\mathbb{V}_1$ can be exactly calculated by use of the condition of unitary $1$-design, where a exponentially small factor $2^{-n}$ would emerge.

\vspace{3mm}
\noindent\textbf{Acknowledgements.}\\
H. Z. and C. Z. contributed equally to this work. Part of this work was done when H. Z., C. Z., and G. L. were research interns at Baidu Research. We would like to thank Runyao Duan and Zihe Wang for helpful discussions.

\bibliography{autoref2}


\clearpage

\appendix
\setcounter{subsection}{0}
\setcounter{table}{0}
\setcounter{figure}{0}

\vspace{2cm}
\onecolumngrid
\vspace{2cm}

\begin{center}
\large{Supplementary Information for \\ \textbf{
{Fundamental limitations on optimization in variational quantum algorithms}}}
\end{center}

\renewcommand{\theequation}{S\arabic{equation}}
\renewcommand{\thesubsection}{\normalsize{Supplementary Note \arabic{subsection}}}
\renewcommand{\theproposition}{S\arabic{proposition}}
\renewcommand{\thedefinition}{S\arabic{definition}}
\renewcommand{\thefigure}{S\arabic{figure}}
\setcounter{equation}{0}
\setcounter{table}{0}
\setcounter{section}{0}
\setcounter{proposition}{0}
\setcounter{definition}{0}
\setcounter{figure}{0}

In this Supplementary Information, we present detailed proofs of the theorems, propositions in the manuscript ``Fundamental limitations on optimization in variational quantum algorithms''. In \ref{supplementary:preliminaries}, we review and derive several useful identities about integrals over unitary groups and some fundamental inequalities in order to make our proofs more self-contained. In \ref{supplementary:proof_main_theorem}, we give a detailed proof of the Theorem~\textcolor{blue}{1} in the manuscript of a more general version that holds for arbitrary dimensions instead of only qubit systems. Then in \ref{supplementary:proof_prop_qsl}, we provide the proof for Proposition \textcolor{blue}{2}. Finally in \ref{supplementary:varying_layers}, we display some numerical simulation results on the variation range of the cost function with different numbers of circuit layers.

\subsection{Preliminaries}\label{supplementary:preliminaries}

We start from the definition of a unitary $t$-design~\cite{Dankert2009}. Consider a ensemble $\mathbb{V}$ of unitaries $V$ on a $d$-dimensional Hilbert space, and denote $P_{t,t}(V)$ as an arbitrary polynomial of degree at most $t$ in the entries of $V$ and at most $t$ in those of $V^\dagger$. Then $\mathbb{V}$ is a unitary $t$-design if
\begin{equation}\label{Eq:definiton_design}
    \frac{1}{|\mathbb{V}|}\sum_{V\in\mathbb{V}} P_{t,t}(V) = \int_{\mathcal{U}(d)} d\mu(V) P_{t,t}(V),
\end{equation}
where $|\mathbb{V}|$ is the size of the set $\mathbb{V}$, $\mathcal{U}(d)$ is the unitary group of degree $d$ and $d\mu(V)$ is the Haar measure on $\mathcal{U}(d)$. Namely, $P_{t,t}(V)$ averaging over the $t$-design $\mathbb{V}$ will yield exactly the same result as averaging over the entire unitary group $\mathcal{U}(d)$.
Fortunately, these integrals over polynomials can be analytically solved and expressed into closed forms. For example, the following element-wise identities hold for the first two moments~\cite{Collins2006,Puchaa2017}
\begin{subequations}
    \begin{align}
        \int_{\mathcal{U}(d)} d \mu(V) v_{i, j} v_{i', j'}^{*} 
        & = \frac{\delta_{i, i'} \delta_{j, j'}}{d}, 
        \label{Eq:1-moment_element-wise}\\
        \int_{\mathcal{U}(d)} d \mu(V) 
        v_{i_{1}, j_{1}} 
        v_{i_{2}, j_{2}} 
        v_{i_{1}^{\prime}, j_{1}^{\prime}}^{*} 
        v_{i_{2}^{\prime}, j_{2}^{\prime}}^{*}
        &=\frac{1}{d^{2}-1}\left(
        \delta_{i_{1}, i_{1}^{\prime}} 
        \delta_{i_{2}, i_{2}^{\prime}} 
        \delta_{j_{1}, j_{1}^{\prime}} 
        \delta_{j_{2}, j_{2}^{\prime}}
        +
        \delta_{i_{1}, i_{2}^{\prime}} 
        \delta_{i_{2}, i_{1}^{\prime}} 
        \delta_{j_{1}, j_{2}^{\prime}} 
        \delta_{j_{2}, j_{1}^{\prime}} \right)\notag \\
        &-\frac{1}{d\left(d^{2}-1\right)}\left(
        \delta_{i_{1}, i_{1}^{\prime}} 
        \delta_{i_{2}, i_{2}^{\prime}} 
        \delta_{j_{1}, j_{2}^{\prime}} 
        \delta_{j_{2}, j_{1}^{\prime}}
        +
        \delta_{i_{1}, i_{2}^{\prime}} 
        \delta_{i_{2}, i_{1}^{\prime}} 
        \delta_{j_{1}, j_{1}^{\prime}} 
        \delta_{j_{2}, j_{2}^{\prime}} \right),
        \label{Eq:2-moment_element-wise}
    \end{align}
\end{subequations}
where $v_{i,j}$ and $v^*_{i',j'}$ denote the entries of $V$ and $V^*$, respectively, and $\delta_{i,j}$ denotes the Kronecker delta. For practical purposes, these element-wise identities need to be transformed into various matrix forms, during which one will encounter many contraction operations. Here we take advantage of tensor network notations to deal with the contraction operations. For example, if we arrange the indices like
\begin{equation}
    v_{i,j} = \left(
    \diagram{
	    \def\r{0.5};
	    \def\spacex{1.4};
	    \def\spacey{1.4};
	    \def\xc{0};
	    \def\yc{0};
	    \coordinate (pc) at (\xc,\yc);
	    \draw (\xc-2*\r,\yc) -- (\xc-\r,\yc);
	    \draw (\xc+\r,\yc) -- (\xc+2*\r,\yc);
	    \draw[ten, shift=(pc)] (-\r,-\r) rectangle (\r,\r);
		\node at (pc) {\footnotesize $V$};
		\node at (\xc-3*\r,\yc) {\footnotesize $i$};
		\node at (\xc+3*\r,\yc) {\footnotesize $j$};
	}\right),~~
	v^*_{i',j'} = \left(
    \diagram{
	    \def\r{0.5};
	    \def\spacex{1.4};
	    \def\spacey{1.4};
	    \def\xc{0};
	    \def\yc{0};
	    \coordinate (pc) at (\xc,\yc);
	    \draw (\xc-2*\r,\yc) -- (\xc-\r,\yc);
	    \draw (\xc+\r,\yc) -- (\xc+2*\r,\yc);
	    \draw[ten, shift=(pc)] (-\r,-\r) rectangle (\r,\r);
		\node at (pc) {\footnotesize $V^*$};
		\node at (\xc-3*\r,\yc) {\footnotesize $i'$};
		\node at (\xc+3*\r,\yc) {\footnotesize $j'$};
	}\right),
\end{equation}
(\ref{Eq:1-moment_element-wise}) could be represented as the following diagram
\begin{equation}\label{Eq:1-moment_tn}
    \int_{\mathcal{U}(d)} d \mu(V) \left(
    \diagram{
	    \def\r{0.5};
	    \def\spacex{1.4};
	    \def\spacey{1.4};
	    \def\xc{0};
	    \def\yc{0};
	    \def\yu{1*\spacey};
	    \coordinate (pc) at (\xc,\yc);
	    \coordinate (puc) at (\xc,\yu);
	    \draw (\xc-2*\r,\yc) -- (\xc-\r,\yc);
	    \draw (\xc+\r,\yc) -- (\xc+2*\r,\yc);
	    \draw (\xc-2*\r,\yu) -- (\xc-\r,\yu);
	    \draw (\xc+\r,\yu) -- (\xc+2*\r,\yu);
	    \draw[ten, shift=(puc)] (-\r,-\r) rectangle (\r,\r);
		\node at (puc) {\scriptsize $V$};
	    \draw[ten, shift=(pc)] (-\r,-\r) rectangle (\r,\r);
		\node at (pc) {\scriptsize $V^*$};
	}\right)
	=  \frac{1}{d} \left(
    \diagram{
	    \def\r{0.5};
	    \def\spacex{1.4};
	    \def\spacey{1.4};
	    \def\xc{0};
	    \def\yc{0};
	    \def\yu{1*\spacey};
	    \coordinate (pc) at (\xc,\yc);
	    \coordinate (puc) at (\xc,\yu);
        \draw (\xc-2*\r,\yc) .. controls (\xc,\yc) and (\xc,\yu) .. (\xc-2*\r,\yu);
        \draw (\xc+2*\r,\yc) .. controls (\xc,\yc) and (\xc,\yu) .. (\xc+2*\r,\yu);
    }\right),
\end{equation}
where the arcs on the right hand side of (\ref{Eq:1-moment_tn}) represent identity matrices, i.e. the Kronecker delta $\delta_{i,i'}$ and $\delta_{j,j'}$. As a simple instance, we first prove Lemma \ref{lemma:integral_VAV=trAd} using (\ref{Eq:1-moment_tn}).
\begin{lemma}\label{lemma:integral_VAV=trAd}
For an arbitrary linear operator $A$ on the $d$-dimensional Hilbert space, the following equality holds
\begin{equation}\label{Eq:integral_VAV=trAd}
    \int_{\mathcal{U}(d)} V A V^\dagger d \mu(V)
    = \frac{\tr(A)}{d} I,
\end{equation}
where $I$ is the identity operator on the $d$-dimensional Hilbert space.
\end{lemma}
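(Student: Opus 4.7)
The plan is to reduce the lemma directly to the scalar first-moment identity (\ref{Eq:1-moment_element-wise}). Concretely, I would compute the $(k,l)$ entry of $VAV^\dagger$ as $\sum_{i,j} a_{i,j}\, v_{k,i}\, v^*_{l,j}$, exchange the finite sum with the Haar integral by linearity, and then apply (\ref{Eq:1-moment_element-wise}) to each term with the index identification $(i,j,i',j') \leftrightarrow (k,i,l,j)$. The resulting factor $\delta_{k,l}\delta_{i,j}/d$ collapses the double sum to $\delta_{k,l}\,\tr(A)/d$, which is exactly the $(k,l)$ entry of $\tr(A) I / d$. Because the index identification is immediate and no estimation is required, the calculation is essentially a one-line instance of the first-moment identity.

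A slicker coordinate-free alternative uses only bi-invariance of the Haar measure together with Schur's lemma. Let $M := \int_{\mathcal{U}(d)} V A V^\dagger \, d\mu(V)$. For any fixed $W \in \mathcal{U}(d)$, the substitution $V \mapsto WV$ leaves the Haar measure invariant, so $W M W^\dagger = M$, meaning $M$ commutes with every unitary. Since the defining representation of $\mathcal{U}(d)$ on $\mathbb{C}^d$ is irreducible, Schur's lemma forces $M = \lambda I$ for some scalar $\lambda$, and taking the trace of both sides pins down $\lambda = \tr(A)/d$.

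I do not anticipate any real obstacle: the lemma is a standard first-moment identity and either route takes only a few lines. The element-wise approach fits more naturally with the tensor-network graphical calculus just introduced in (\ref{Eq:1-moment_tn}) — diagrammatically, capping the upper free legs of both $V$ and $V^\dagger$ with $A$ and contracting the arcs on the right-hand side of (\ref{Eq:1-moment_tn}) literally produces the trace of $A$ times an identity loop — so I would present that as the primary proof. The Schur-lemma argument serves as a conceptual sanity check and foreshadows the representation-theoretic flavor of the higher-moment computations that will be needed later in the proof of Theorem~\ref{theorem:main-theorem}.
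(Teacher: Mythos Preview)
Your primary element-wise argument is correct and is exactly the paper's proof: the paper carries out the same computation using the tensor-network diagram (\ref{Eq:1-moment_tn}), which is nothing more than the graphical transcription of your index calculation $\sum_{i,j} a_{i,j}\, v_{k,i}\, v^*_{l,j} \to \delta_{k,l}\tr(A)/d$. The Schur-lemma alternative you sketch is also valid but does not appear in the paper.
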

\begin{proof}
By tensor network notations and (\ref{Eq:1-moment_tn}), we have
\begin{equation}
    \int_{\mathcal{U}(d)} V A V^\dagger d \mu(V) 
    = \int_{\mathcal{U}(d)} d \mu(V) \left(
    \diagram{
	    \def\r{0.5};
	    \def\spacex{1.4};
	    \def\spacey{1.4};
	    \def\xc{0};
	    \def\xr{\xc+3*\r}
	    \def\yc{0};
	    \def\yu{1*\spacey};
	    \coordinate (pc) at (\xc,\yc);
	    \coordinate (puc) at (\xc,\yu);
	    \coordinate (pur) at (\xr,\yu);
	    \draw (\xc-2*\r,\yc) -- (\xc-\r,\yc);
	    \draw (\xc+\r,\yc) -- (\xr+\r,\yc);
	    \draw (\xc-2*\r,\yu) -- (\xc-\r,\yu);
	    \draw (\xc+\r,\yu) -- (\xr+\r,\yu);
	    \draw (\xr+\r,\yu) .. controls (\xr+3*\r,\yu) and (\xr+3*\r,\yc) .. (\xr+\r,\yc);
	    \draw[ten, shift=(puc)] (-\r,-\r) rectangle (\r,\r);
		\node at (puc) {\scriptsize $V$};
	    \draw[ten, shift=(pc)] (-\r,-\r) rectangle (\r,\r);
		\node at (pc) {\scriptsize $V^*$};
	    \draw[ten, shift=(pur)] (-\r,-\r) rectangle (\r,\r);
		\node at (pur) {\scriptsize $A$};
	}\right)
	=  \frac{1}{d} \left(
    \diagram{
	    \def\r{0.5};
	    \def\spacex{1.4};
	    \def\spacey{1.4};
	    \def\xc{0};
	    \def\xr{\xc+3*\r}
	    \def\yc{0};
	    \def\yu{1*\spacey};
	    \coordinate (pc) at (\xc,\yc);
	    \coordinate (puc) at (\xc,\yu);
	    \draw (\xr-\r,\yc) -- (\xr+\r,\yc);
        \draw (\xc-2*\r,\yc) .. controls (\xc,\yc) and (\xc,\yu) .. (\xc-2*\r,\yu);
        \draw (\xc+2*\r,\yc) .. controls (\xc,\yc) and (\xc,\yu) .. (\xc+2*\r,\yu);
        \draw (\xr+\r,\yu) .. controls (\xr+3*\r,\yu) and (\xr+3*\r,\yc) .. (\xr+\r,\yc);
        \draw[ten, shift=(pur)] (-\r,-\r) rectangle (\r,\r);
		\node at (pur) {\scriptsize $A$};
    }\right)
    = \frac{\tr(A)}{d} I,
\end{equation}
which is exactly the same with (\ref{Eq:integral_VAV=trAd}).
\end{proof}

Similarly, (\ref{Eq:2-moment_element-wise}) could be represented by tensor network notations as the following diagram
\begin{equation}\label{Eq:2-moment_element-wise_tn}
    \int_{\mathcal{U}(d)} d \mu(V) \left(
    \diagram{
	    \def\r{0.5};
	    \def\spacex{1.4};
	    \def\spacey{1.4};
	    \def\xc{0};
	    \def\yc{0};
	    \def\yu{1*\spacey};
	    \def\yuu{2*\spacey};
	    \def\yuuu{3*\spacey};
	    \coordinate (pc) at (\xc,\yc);
	    \coordinate (pu) at (\xc,\yu);
	    \coordinate (puu) at (\xc,\yuu);
	    \coordinate (puuu) at (\xc,\yuuu);
	    \draw (\xc-2*\r,\yc) -- (\xc-\r,\yc);
	    \draw (\xc+\r,\yc) -- (\xc+2*\r,\yc);
	    \draw (\xc-2*\r,\yu) -- (\xc-\r,\yu);
	    \draw (\xc+\r,\yu) -- (\xc+2*\r,\yu);
	    \draw (\xc-2*\r,\yuu) -- (\xc-\r,\yuu);
	    \draw (\xc+\r,\yuu) -- (\xc+2*\r,\yuu);
	    \draw (\xc-2*\r,\yuuu) -- (\xc-\r,\yuuu);
	    \draw (\xc+\r,\yuuu) -- (\xc+2*\r,\yuuu);
	    \draw[ten, shift=(puuu)] (-\r,-\r) rectangle (\r,\r);
		\node at (puuu) {\scriptsize $V$};
	    \draw[ten, shift=(puu)] (-\r,-\r) rectangle (\r,\r);
		\node at (puu) {\scriptsize $V^*$};
	    \draw[ten, shift=(pu)] (-\r,-\r) rectangle (\r,\r);
		\node at (pu) {\scriptsize $V$};
	    \draw[ten, shift=(pc)] (-\r,-\r) rectangle (\r,\r);
		\node at (pc) {\scriptsize $V^*$};
	}\right)
	=  \frac{1}{d^2 - 1} \left(
    \diagram{
	    \def\r{0.5};
	    \def\spacex{1.4};
	    \def\spacey{1.4};
	    \def\xc{0};
	    \def\yc{0};
	    \def\yu{1*\spacey};
	    \def\yuu{2*\spacey};
	    \def\yuuu{3*\spacey};
	    \coordinate (pc) at (\xc,\yc);
	    \coordinate (pu) at (\xc,\yu);
	    \coordinate (puu) at (\xc,\yuu);
	    \coordinate (puuu) at (\xc,\yuuu);
        \draw (\xc-2*\r,\yc) .. controls (\xc,\yc) and (\xc,\yu) .. (\xc-2*\r,\yu);
        \draw (\xc-2*\r,\yuu) .. controls (\xc,\yuu) and (\xc,\yuuu) .. (\xc-2*\r,\yuuu);
        \draw (\xc+2*\r,\yc) .. controls (\xc,\yc) and (\xc,\yu) .. (\xc+2*\r,\yu);
        \draw (\xc+2*\r,\yuu) .. controls (\xc,\yuu) and (\xc,\yuuu) .. (\xc+2*\r,\yuuu);
    }
    +
    \diagram{
	    \def\r{0.5};
	    \def\spacex{1.4};
	    \def\spacey{1.4};
	    \def\xc{0};
	    \def\yc{0};
	    \def\yu{1*\spacey};
	    \def\yuu{2*\spacey};
	    \def\yuuu{3*\spacey};
	    \coordinate (pc) at (\xc,\yc);
	    \coordinate (pu) at (\xc,\yu);
	    \coordinate (puu) at (\xc,\yuu);
	    \coordinate (puuu) at (\xc,\yuuu);
        \draw (\xc-2*\r,\yc) .. controls (\xc,\yc) and (\xc,\yuuu) .. (\xc-2*\r,\yuuu);
        \draw (\xc-2*\r,\yu) .. controls (\xc-\r,\yu) and (\xc-\r,\yuu) .. (\xc-2*\r,\yuu);
        \draw (\xc+2*\r,\yc) .. controls (\xc,\yc) and (\xc,\yuuu) .. (\xc+2*\r,\yuuu);
        \draw (\xc+2*\r,\yu) .. controls (\xc+\r,\yu) and (\xc+\r,\yuu) .. (\xc+2*\r,\yuu);
    }\right)
    - \frac{1}{d(d^2-1)}\left(
        \diagram{
	    \def\r{0.5};
	    \def\spacex{1.4};
	    \def\spacey{1.4};
	    \def\xc{0};
	    \def\yc{0};
	    \def\yu{1*\spacey};
	    \def\yuu{2*\spacey};
	    \def\yuuu{3*\spacey};
	    \coordinate (pc) at (\xc,\yc);
	    \coordinate (pu) at (\xc,\yu);
	    \coordinate (puu) at (\xc,\yuu);
	    \coordinate (puuu) at (\xc,\yuuu);
        \draw (\xc-2*\r,\yc) .. controls (\xc,\yc) and (\xc,\yu) .. (\xc-2*\r,\yu);
        \draw (\xc-2*\r,\yuu) .. controls (\xc,\yuu) and (\xc,\yuuu) .. (\xc-2*\r,\yuuu);
        \draw (\xc+2*\r,\yc) .. controls (\xc,\yc) and (\xc,\yuuu) .. (\xc+2*\r,\yuuu);
        \draw (\xc+2*\r,\yu) .. controls (\xc+\r,\yu) and (\xc+\r,\yuu) .. (\xc+2*\r,\yuu);
    }
    +
    \diagram{
	    \def\r{0.5};
	    \def\spacex{1.4};
	    \def\spacey{1.4};
	    \def\xc{0};
	    \def\yc{0};
	    \def\yu{1*\spacey};
	    \def\yuu{2*\spacey};
	    \def\yuuu{3*\spacey};
	    \coordinate (pc) at (\xc,\yc);
	    \coordinate (pu) at (\xc,\yu);
	    \coordinate (puu) at (\xc,\yuu);
	    \coordinate (puuu) at (\xc,\yuuu);
        \draw (\xc-2*\r,\yc) .. controls (\xc,\yc) and (\xc,\yuuu) .. (\xc-2*\r,\yuuu);
        \draw (\xc-2*\r,\yu) .. controls (\xc-\r,\yu) and (\xc-\r,\yuu) .. (\xc-2*\r,\yuu);
        \draw (\xc+2*\r,\yc) .. controls (\xc,\yc) and (\xc,\yu) .. (\xc+2*\r,\yu);
        \draw (\xc+2*\r,\yuu) .. controls (\xc,\yuu) and (\xc,\yuuu) .. (\xc+2*\r,\yuuu);
    }\right).
\end{equation}
Now we utilize (\ref{Eq:2-moment_element-wise_tn}) to derive a central identity used in the proof in the next section as Lemma \ref{lemma:EV_QVPV}.

\begin{lemma}\label{lemma:EV_QVPV}
Suppose $V\in\mathbb{V}$ is a unitary on the Hilbert space $\mathcal{H}_A\otimes \mathcal{H}_B$ with $\dim{(\mathcal{H}_A)}=d_A$ and $\dim{(\mathcal{H}_B)}=d_B$ where $\mathbb{V}$ is a unitary $2$-design. For any linear operators $P,Q$ on $\mathcal{H}_A\otimes \mathcal{H}_B$, the following identity holds
\begin{equation}\label{Eq:EV_QVPV}
    \begin{aligned}
        \mathbb{E}_{V} \left[ \left\| \tr_B \left(Q V P V^\dagger\right) \right\|^2_2 \right]
        = \frac{1}{d^2-1}\left[
        \|\tr_B Q\|^2_2 \left(|\tr P |^2 - \frac{\|P\|^2_2)}{d}\right)
        + d_A \|Q\|^2_2 \left(\|P\|^2_2 - \frac{ |\tr P|^2}{d} \right)
        \right],
    \end{aligned}
\end{equation}
where $\|\cdot\|_2$ is the Schatten $2$-norm and $d=d_Ad_B$ denotes the dimension of the whole Hilbert space $\mathcal{H}_A\otimes \mathcal{H}_B$.
\end{lemma}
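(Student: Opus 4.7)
The plan is to reduce the expectation to a Haar integral via the 2-design hypothesis, apply the standard Schur--Weyl twirl for $V^{\otimes 2}$, and finish with two short partial-trace computations using swap-operator bookkeeping.

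The starting move is to unfold the squared Schatten 2-norm on two copies of $\mathcal{H}_A\otimes\mathcal{H}_B$ via the swap trick. Setting $M=QVPV^\dagger$ and writing $S_A$ for the swap on $\mathcal{H}_A^{\otimes 2}$, one has
\begin{equation*}
    \|\tr_B M\|_2^2 \;=\; \tr\!\left[(M\otimes M^\dagger)(S_A\otimes I_B)\right],
\end{equation*}
and the operator $M\otimes M^\dagger$ factors as $(Q\otimes I)(V\otimes V)(P\otimes P^\dagger)(V^\dagger\otimes V^\dagger)(I\otimes Q^\dagger)$, confining all the $V$-dependence to a polynomial of bidegree $(2,2)$ in the entries of $V$. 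Hence the 2-design property lets me replace $\mathbb{E}_V$ by the Haar average on this polynomial.

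The next step is Schur--Weyl duality: since the commutant of $\{V\otimes V : V\in\mathcal{U}(d)\}$ is spanned by $I$ and the full swap $S$ on $(\mathcal{H}_A\otimes\mathcal{H}_B)^{\otimes 2}$, I obtain
\begin{equation*}
    \mathbb{E}_V\!\left[(V\otimes V)(P\otimes P^\dagger)(V^\dagger\otimes V^\dagger)\right] \;=\; \alpha\,I + \beta\,S,
\end{equation*}
with $\alpha,\beta$ fixed by tracing against $I$ and $S$, which yields the two scalar equations $\alpha d^2+\beta d = |\tr P|^2$ and $\alpha d+\beta d^2 = \|P\|_2^2$. Solving produces $\alpha = (d|\tr P|^2-\|P\|_2^2)/[d(d^2-1)]$ and $\beta = (d\|P\|_2^2-|\tr P|^2)/[d(d^2-1)]$.

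Finally, I would evaluate the two residual traces. The intertwining $S(I\otimes Q^\dagger)=(Q^\dagger\otimes I)S$ together with $S(S_A\otimes I_B)=I_A\otimes S_B$ reduces the $I$-term to $\tr[(Q\otimes Q^\dagger)(S_A\otimes I_B)]=\|\tr_B Q\|_2^2$ and the $S$-term to $\tr[(QQ^\dagger\otimes I)(I_A\otimes S_B)]=d_A\|Q\|_2^2$, the latter after using $\tr_A(I)=d_A I_B$. Substituting $\alpha$ and $\beta$ collects the four terms into (\ref{Eq:EV_QVPV}) after one line of algebra. The main care-requiring step is the bookkeeping of the partial-trace identities $\tr[(X\otimes Y)(S_A\otimes I_B)]=\tr_A[\tr_B(X)\tr_B(Y)]$ and its $A\leftrightarrow B$ counterpart; once these are in hand, which can also be read off directly from the diagrammatic conventions underlying (\ref{Eq:2-moment_element-wise_tn}), the rest of the argument is routine.
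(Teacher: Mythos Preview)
Your argument is correct and is essentially the same computation as the paper's, just phrased in operator/Schur--Weyl language (swap operators and the twirl $\mathbb{E}_V[V^{\otimes 2}(\cdot)V^{\dagger\otimes 2}]=\alpha I+\beta S$) rather than the paper's tensor-network diagrams built from the element-wise Weingarten identity~(\ref{Eq:2-moment_element-wise_tn}). The four diagrammatic terms the paper obtains are precisely your $\alpha$- and $\beta$-contributions (each carrying two Weingarten pieces), and your two residual trace identities correspond exactly to reading off the contracted diagrams; so the two proofs coincide up to notation.
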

\begin{proof}
Note that $\mathbb{V}$ is a unitary $2$-design and $\left\| \tr_B \left(Q V P V^\dagger\right) \right\|^2_2$ is a polynomial of degree at most $2$ in the entries of $V$. By the definition of unitary $2$-designs in (\ref{Eq:definiton_design}), the left hand side of (\ref{Eq:EV_QVPV}) could be rewritten as
\begin{equation}\label{Eq:EV_QVPV_rewrite}
    \begin{aligned}
        \mathbb{E}_{V} \left[ \left\| \tr_B \left(Q V P V^\dagger\right) \right\|^2_2 \right]
        = \int_{\mathcal{U}(d)} d \mu(V) \tr 
        \left( \tr_B (Q V P V^\dagger) \tr_B (V P^\dagger V^\dagger Q^\dagger) \right).
    \end{aligned}
\end{equation}
Since the Hilbert space $\mathcal{H}_A\otimes\mathcal{H}_B$ has a bipartite tensor product structure, the linear operators on $\mathcal{H}_A\otimes\mathcal{H}_B$ could be represented as $4$-degree tensors. We take the convention for the arrangement of the indices of $V$ and $V^*$ corresponding to $\mathcal{H}_A$, $\mathcal{H}_B$ as follows
\begin{equation}
    \left(
    \diagram{
	    \def\r{0.9};
	    \def\spacex{1.4};
	    \def\spacey{1.4};
	    \def\xc{0};
	    \def\yc{0};
	    \def\yu{0.5*\r};
	    \def\yd{-0.5*\r};
	    \coordinate (pc) at (\xc,\yc);
	    \draw (\xc-2*\r,\yu) -- (\xc-\r,\yu);
	    \draw (\xc+\r,\yu) -- (\xc+2*\r,\yu);
	    \draw (\xc-2*\r,\yd) -- (\xc-\r,\yd);
	    \draw (\xc+\r,\yd) -- (\xc+2*\r,\yd);
	    \draw[ten, shift=(pc)] (-\r,-\r) rectangle (\r,\r);
		\node at (pc) {\normalsize $V$};
		\node at (\xc-3*\r,\yu) {\footnotesize $\mathcal{H}_A$};
		\node at (\xc+3*\r,\yu) {\footnotesize $\mathcal{H}_A$};
		\node at (\xc-3*\r,\yd) {\footnotesize $\mathcal{H}_B$};
		\node at (\xc+3*\r,\yd) {\footnotesize $\mathcal{H}_B$};
	}\right),~~\left(
    \diagram{
	    \def\r{0.9};
	    \def\spacex{1.4};
	    \def\spacey{1.4};
	    \def\xc{0};
	    \def\yc{0};
	    \def\yu{0.5*\r};
	    \def\yd{-0.5*\r};
	    \coordinate (pc) at (\xc,\yc);
	    \draw (\xc-2*\r,\yu) -- (\xc-\r,\yu);
	    \draw (\xc+\r,\yu) -- (\xc+2*\r,\yu);
	    \draw (\xc-2*\r,\yd) -- (\xc-\r,\yd);
	    \draw (\xc+\r,\yd) -- (\xc+2*\r,\yd);
	    \draw[ten, shift=(pc)] (-\r,-\r) rectangle (\r,\r);
		\node at (pc) {\normalsize $V^*$};
		\node at (\xc-3*\r,\yu) {\footnotesize $\mathcal{H}_B$};
		\node at (\xc+3*\r,\yu) {\footnotesize $\mathcal{H}_B$};
		\node at (\xc-3*\r,\yd) {\footnotesize $\mathcal{H}_A$};
		\node at (\xc+3*\r,\yd) {\footnotesize $\mathcal{H}_A$};
	}\right).
\end{equation}
The arrangements of indices for $P,Q$ and $P^*,Q^*$ are the same as $V$ and $V^*$, respectively. The integrand on the right hand side of (\ref{Eq:EV_QVPV_rewrite}) could be represented diagrammatically as
\begin{equation}\label{Eq:OAOBVPV_tn}
    \diagram{
	    \def\r{0.5};
	    \def\spacex{1.4};
	    \def\spacey{1.4};
	    \def\xc{0};
	    \def\xu{\xc+2*\r};
	    \def\xp{\xc+5*\r};
	    \def\xq{\xc-\r};
	    \def\yc{0};
	    \def\yu{1*\spacey};
	    \def\yuu{2*\spacey};
	    \def\yuuu{3*\spacey};
	    \coordinate (pc) at (\xu,\yc);
	    \coordinate (pu) at (\xu,\yu);
	    \coordinate (puu) at (\xu,\yuu);
	    \coordinate (puuu) at (\xu,\yuuu);
	 
	    \coordinate (ppu) at (\xp,\yuuu);
	    \coordinate (ppd) at (\xp,\yc);
	    
	    \coordinate (pqu) at (\xq, \yuuu);
	    \coordinate (pqd) at (\xq, \yc);
	    
		\draw (\xq-\r,\yuuu+0.5*\r) .. controls (\xq-3*\r,\yuuu+0.5*\r) and (\xq-3*\r,\yc-0.5*\r) .. (\xq-\r,\yc-0.5*\r);
		\draw (\xq-\r,\yuuu-0.5*\r) .. controls (\xq-1.8*\r,\yuuu-0.5*\r) and (\xq-1.8*\r,\yuu+0.5*\r) .. (\xq-\r,\yuu+0.5*\r);
		\draw (\xq-\r,\yuu-0.5*\r) .. controls (\xq-1.8*\r,\yuu-0.5*\r) and (\xq-1.8*\r,\yu+0.5*\r) .. (\xq-\r,\yu+0.5*\r);
		\draw (\xq-\r,\yu-0.5*\r) .. controls (\xq-1.8*\r,\yu-0.5*\r) and (\xq-1.8*\r,\yc+0.5*\r) .. (\xq-\r,\yc+0.5*\r);
		
		\draw (\xp+\r,\yuuu+0.5*\r) .. controls (\xp+3*\r,\yuuu+0.5*\r) and (\xp+3*\r,\yuu-0.5*\r) .. (\xp+\r,\yuu-0.5*\r);
		\draw (\xp+\r,\yuuu-0.5*\r) .. controls (\xp+1.8*\r,\yuuu-0.5*\r) and (\xp+1.8*\r,\yuu+0.5*\r) .. (\xp+\r,\yuu+0.5*\r);
		\draw (\xp+\r,\yu+0.5*\r) .. controls (\xp+3*\r,\yu+0.5*\r) and (\xp+3*\r,\yc-0.5*\r) .. (\xp+\r,\yc-0.5*\r);
		\draw (\xp+\r,\yu-0.5*\r) .. controls (\xp+1.8*\r,\yu-0.5*\r) and (\xp+1.8*\r,\yc+0.5*\r) .. (\xp+\r,\yc+0.5*\r);

	   	\draw (\xc-2*\r,\yuuu+0.5*\r) -- (\xp+\r,\yuuu+0.5*\r);
	    \draw (\xc-2*\r,\yuuu-0.5*\r) -- (\xp+\r,\yuuu-0.5*\r);
	    \draw (\xc-2*\r,\yuu+0.5*\r) -- (\xp+\r,\yuu+0.5*\r);
	    \draw (\xc-2*\r,\yuu-0.5*\r) -- (\xp+\r,\yuu-0.5*\r);
	    \draw (\xc-2*\r,\yu+0.5*\r) -- (\xp+\r,\yu+0.5*\r);
	    \draw (\xc-2*\r,\yu-0.5*\r) -- (\xp+\r,\yu-0.5*\r);
	    \draw (\xc-2*\r,\yc+0.5*\r) -- (\xp+\r,\yc+0.5*\r);
	    \draw (\xc-2*\r,\yc-0.5*\r) -- (\xp+\r,\yc-0.5*\r);
	    
 	    \draw[ten, shift=(puuu)] (-\r,-\r) rectangle (\r,\r);
 	    \node at (puuu) {\scriptsize $V$};
 	    \draw[ten, shift=(puu)] (-\r,-\r) rectangle (\r,\r);
 		\node at (puu) {\scriptsize $V^*$};
 	    \draw[ten, shift=(pu)] (-\r,-\r) rectangle (\r,\r);
 		\node at (pu) {\scriptsize $V$};
 	    \draw[ten, shift=(pc)] (-\r,-\r) rectangle (\r,\r);
 		\node at (pc) {\scriptsize $V^*$};
 		
 		\draw[ten, shift=(ppu)] (-\r,-\r) rectangle (\r,\r);
 	    \node at (ppu) {\scriptsize $P$};
 	    \draw[ten, shift=(ppd)] (-\r,-\r) rectangle (\r,\r);
 	    \node at (ppd) {\scriptsize $P^*$};

  		\draw[ten, shift=(pqu)] (-\r,-\r) rectangle (\r,\r);
  		\node at (pqu) {\scriptsize $Q$};
  		 \draw[ten, shift=(pqd)] (-\r,-\r) rectangle (\r,\r);
  		\node at (pqd) {\scriptsize $Q^*$};
	},
\end{equation}
Combining (\ref{Eq:2-moment_element-wise_tn}), (\ref{Eq:EV_QVPV_rewrite}) and (\ref{Eq:OAOBVPV_tn}), the left hand side of (\ref{Eq:EV_QVPV}) is equal to
\begin{equation}
\begin{aligned}
    & \mathbb{E}_{V} \left\| \tr_B \left(Q V P V^\dagger\right) \right\|^2_2 \\
    =& \frac{1}{d^2-1} \left(
    \diagram{
	    \def\r{0.5};
	    \def\spacex{1.4};
	    \def\spacey{1.4};
	    \def\xc{0};
	    \def\xu{\xc+2*\r};
	    \def\xp{\xc+5*\r};
	    \def\xq{\xc-\r};
	    \def\yc{0};
	    \def\yu{1*\spacey};
	    \def\yuu{2*\spacey};
	    \def\yuuu{3*\spacey};
	    \coordinate (pc) at (\xu,\yc);
	    \coordinate (pu) at (\xu,\yu);
	    \coordinate (puu) at (\xu,\yuu);
	    \coordinate (puuu) at (\xu,\yuuu);
	 
	    \coordinate (ppu) at (\xp,\yuuu);
	    \coordinate (ppd) at (\xp,\yc);
	    
	    \coordinate (pqu) at (\xq, \yuuu);
	    \coordinate (pqd) at (\xq, \yc);
	    
		\draw (\xq-\r,\yuuu+0.5*\r) .. controls (\xq-3*\r,\yuuu+0.5*\r) and (\xq-3*\r,\yc-0.5*\r) .. (\xq-\r,\yc-0.5*\r);
		\draw (\xq-\r,\yuuu-0.5*\r) .. controls (\xq-1.8*\r,\yuuu-0.5*\r) and (\xq-1.8*\r,\yuu+0.5*\r) .. (\xq-\r,\yuu+0.5*\r);
		\draw (\xq-\r,\yuu-0.5*\r) .. controls (\xq-1.8*\r,\yuu-0.5*\r) and (\xq-1.8*\r,\yu+0.5*\r) .. (\xq-\r,\yu+0.5*\r);
		\draw (\xq-\r,\yu-0.5*\r) .. controls (\xq-1.8*\r,\yu-0.5*\r) and (\xq-1.8*\r,\yc+0.5*\r) .. (\xq-\r,\yc+0.5*\r);
		\draw (\xp+\r,\yuuu+0.5*\r) .. controls (\xp+3*\r,\yuuu+0.5*\r) and (\xp+3*\r,\yuu-0.5*\r) .. (\xp+\r,\yuu-0.5*\r);
		\draw (\xp+\r,\yuuu-0.5*\r) .. controls (\xp+1.8*\r,\yuuu-0.5*\r) and (\xp+1.8*\r,\yuu+0.5*\r) .. (\xp+\r,\yuu+0.5*\r);
		\draw (\xp+\r,\yu+0.5*\r) .. controls (\xp+3*\r,\yu+0.5*\r) and (\xp+3*\r,\yc-0.5*\r) .. (\xp+\r,\yc-0.5*\r);
		\draw (\xp+\r,\yu-0.5*\r) .. controls (\xp+1.8*\r,\yu-0.5*\r) and (\xp+1.8*\r,\yc+0.5*\r) .. (\xp+\r,\yc+0.5*\r);
		\draw (\xq+\r,\yuuu+0.5*\r) .. controls (\xq+3*\r,\yuuu+0.5*\r) and (\xq+3*\r,\yuu-0.5*\r) .. (\xq+\r,\yuu-0.5*\r);
		\draw (\xq+\r,\yuuu-0.5*\r) .. controls (\xq+1.8*\r,\yuuu-0.5*\r) and (\xq+1.8*\r,\yuu+0.5*\r) .. (\xq+\r,\yuu+0.5*\r);
		\draw (\xq+\r,\yu+0.5*\r) .. controls (\xq+3*\r,\yu+0.5*\r) and (\xq+3*\r,\yc-0.5*\r) .. (\xq+\r,\yc-0.5*\r);
		\draw (\xq+\r,\yu-0.5*\r) .. controls (\xq+1.8*\r,\yu-0.5*\r) and (\xq+1.8*\r,\yc+0.5*\r) .. (\xq+\r,\yc+0.5*\r);
		\draw (\xp-\r,\yuuu+0.5*\r) .. controls (\xp-3*\r,\yuuu+0.5*\r) and (\xp-3*\r,\yuu-0.5*\r) .. (\xp-\r,\yuu-0.5*\r);
		\draw (\xp-\r,\yuuu-0.5*\r) .. controls (\xp-1.8*\r,\yuuu-0.5*\r) and (\xp-1.8*\r,\yuu+0.5*\r) .. (\xp-\r,\yuu+0.5*\r);
		\draw (\xp-\r,\yu+0.5*\r) .. controls (\xp-3*\r,\yu+0.5*\r) and (\xp-3*\r,\yc-0.5*\r) .. (\xp-\r,\yc-0.5*\r);
		\draw (\xp-\r,\yu-0.5*\r) .. controls (\xp-1.8*\r,\yu-0.5*\r) and (\xp-1.8*\r,\yc+0.5*\r) .. (\xp-\r,\yc+0.5*\r);
		
	    \draw (\xq-\r,\yuu+0.5*\r) -- (\xq+\r,\yuu+0.5*\r);
	    \draw (\xq-\r,\yuu-0.5*\r) -- (\xq+\r,\yuu-0.5*\r);
	    \draw (\xq-\r,\yu+0.5*\r) -- (\xq+\r,\yu+0.5*\r);
	    \draw (\xq-\r,\yu-0.5*\r) -- (\xq+\r,\yu-0.5*\r);
	    
	    \draw (\xp-\r,\yuu+0.5*\r) -- (\xp+\r,\yuu+0.5*\r);
	    \draw (\xp-\r,\yuu-0.5*\r) -- (\xp+\r,\yuu-0.5*\r);
	    \draw (\xp-\r,\yu+0.5*\r) -- (\xp+\r,\yu+0.5*\r);
	    \draw (\xp-\r,\yu-0.5*\r) -- (\xp+\r,\yu-0.5*\r);
 		
 		\draw[ten, shift=(ppu)] (-\r,-\r) rectangle (\r,\r);
 	    \node at (ppu) {\scriptsize $P$};
 	    \draw[ten, shift=(ppd)] (-\r,-\r) rectangle (\r,\r);
 	    \node at (ppd) {\scriptsize $P^*$};

  		\draw[ten, shift=(pqu)] (-\r,-\r) rectangle (\r,\r);
  		\node at (pqu) {\scriptsize $Q$};
  		 \draw[ten, shift=(pqd)] (-\r,-\r) rectangle (\r,\r);
  		\node at (pqd) {\scriptsize $Q^*$};
	}
	+
	\diagram{
	    \def\r{0.5};
	    \def\spacex{1.4};
	    \def\spacey{1.4};
	    \def\xc{0};
	    \def\xu{\xc+2*\r};
	    \def\xp{\xc+5*\r};
	    \def\xq{\xc-\r};
	    \def\yc{0};
	    \def\yu{1*\spacey};
	    \def\yuu{2*\spacey};
	    \def\yuuu{3*\spacey};
	    \coordinate (pc) at (\xu,\yc);
	    \coordinate (pu) at (\xu,\yu);
	    \coordinate (puu) at (\xu,\yuu);
	    \coordinate (puuu) at (\xu,\yuuu);
	 
	    \coordinate (ppu) at (\xp,\yuuu);
	    \coordinate (ppd) at (\xp,\yc);
	    
	    \coordinate (pqu) at (\xq, \yuuu);
	    \coordinate (pqd) at (\xq, \yc);
	    
		\draw (\xq-\r,\yuuu+0.5*\r) .. controls (\xq-3*\r,\yuuu+0.5*\r) and (\xq-3*\r,\yc-0.5*\r) .. (\xq-\r,\yc-0.5*\r);
		\draw (\xq-\r,\yuuu-0.5*\r) .. controls (\xq-1.8*\r,\yuuu-0.5*\r) and (\xq-1.8*\r,\yuu+0.5*\r) .. (\xq-\r,\yuu+0.5*\r);
		\draw (\xq-\r,\yuu-0.5*\r) .. controls (\xq-1.8*\r,\yuu-0.5*\r) and (\xq-1.8*\r,\yu+0.5*\r) .. (\xq-\r,\yu+0.5*\r);
		\draw (\xq-\r,\yu-0.5*\r) .. controls (\xq-1.8*\r,\yu-0.5*\r) and (\xq-1.8*\r,\yc+0.5*\r) .. (\xq-\r,\yc+0.5*\r);
		\draw (\xp+\r,\yuuu+0.5*\r) .. controls (\xp+3*\r,\yuuu+0.5*\r) and (\xp+3*\r,\yuu-0.5*\r) .. (\xp+\r,\yuu-0.5*\r);
		\draw (\xp+\r,\yuuu-0.5*\r) .. controls (\xp+1.8*\r,\yuuu-0.5*\r) and (\xp+1.8*\r,\yuu+0.5*\r) .. (\xp+\r,\yuu+0.5*\r);
		\draw (\xp+\r,\yu+0.5*\r) .. controls (\xp+3*\r,\yu+0.5*\r) and (\xp+3*\r,\yc-0.5*\r) .. (\xp+\r,\yc-0.5*\r);
		\draw (\xp+\r,\yu-0.5*\r) .. controls (\xp+1.8*\r,\yu-0.5*\r) and (\xp+1.8*\r,\yc+0.5*\r) .. (\xp+\r,\yc+0.5*\r);
		\draw (\xq+\r,\yuuu+0.5*\r) .. controls (\xq+3*\r,\yuuu+0.5*\r) and (\xq+3*\r,\yc-0.5*\r) .. (\xq+\r,\yc-0.5*\r);
		\draw (\xq+\r,\yuuu-0.5*\r) .. controls (\xq+2.6*\r,\yuuu-0.5*\r) and (\xq+2.6*\r,\yc+0.5*\r) .. (\xq+\r,\yc+0.5*\r);
		\draw (\xq+\r,\yu+0.5*\r) .. controls (\xq+1.8*\r,\yu+0.5*\r) and (\xq+1.8*\r,\yuu-0.5*\r) .. (\xq+\r,\yuu-0.5*\r);
		\draw (\xq+\r,\yu-0.5*\r) .. controls (\xq+2.2*\r,\yu-0.5*\r) and (\xq+2.2*\r,\yuu+0.5*\r) .. (\xq+\r,\yuu+0.5*\r);
		\draw (\xp-\r,\yuuu+0.5*\r) .. controls (\xp-3*\r,\yuuu+0.5*\r) and (\xp-3*\r,\yc-0.5*\r) .. (\xp-\r,\yc-0.5*\r);
		\draw (\xp-\r,\yuuu-0.5*\r) .. controls (\xp-2.6*\r,\yuuu-0.5*\r) and (\xp-2.6*\r,\yc+0.5*\r) .. (\xp-\r,\yc+0.5*\r);
		\draw (\xp-\r,\yu+0.5*\r) .. controls (\xp-1.8*\r,\yu+0.5*\r) and (\xp-1.8*\r,\yuu-0.5*\r) .. (\xp-\r,\yuu-0.5*\r);
		\draw (\xp-\r,\yu-0.5*\r) .. controls (\xp-2.2*\r,\yu-0.5*\r) and (\xp-2.2*\r,\yuu+0.5*\r) .. (\xp-\r,\yuu+0.5*\r);
		
	    \draw (\xq-\r,\yuu+0.5*\r) -- (\xq+\r,\yuu+0.5*\r);
	    \draw (\xq-\r,\yuu-0.5*\r) -- (\xq+\r,\yuu-0.5*\r);
	    \draw (\xq-\r,\yu+0.5*\r) -- (\xq+\r,\yu+0.5*\r);
	    \draw (\xq-\r,\yu-0.5*\r) -- (\xq+\r,\yu-0.5*\r);
	    
	    \draw (\xp-\r,\yuu+0.5*\r) -- (\xp+\r,\yuu+0.5*\r);
	    \draw (\xp-\r,\yuu-0.5*\r) -- (\xp+\r,\yuu-0.5*\r);
	    \draw (\xp-\r,\yu+0.5*\r) -- (\xp+\r,\yu+0.5*\r);
	    \draw (\xp-\r,\yu-0.5*\r) -- (\xp+\r,\yu-0.5*\r);
 		
 		\draw[ten, shift=(ppu)] (-\r,-\r) rectangle (\r,\r);
 	    \node at (ppu) {\scriptsize $P$};
 	    \draw[ten, shift=(ppd)] (-\r,-\r) rectangle (\r,\r);
 	    \node at (ppd) {\scriptsize $P^*$};

  		\draw[ten, shift=(pqu)] (-\r,-\r) rectangle (\r,\r);
  		\node at (pqu) {\scriptsize $Q$};
  		 \draw[ten, shift=(pqd)] (-\r,-\r) rectangle (\r,\r);
  		\node at (pqd) {\scriptsize $Q^*$};
	}
	\right) \\
	& - \frac{1}{d(d^2-1)} \left(
    \diagram{
	    \def\r{0.5};
	    \def\spacex{1.4};
	    \def\spacey{1.4};
	    \def\xc{0};
	    \def\xu{\xc+2*\r};
	    \def\xp{\xc+5*\r};
	    \def\xq{\xc-\r};
	    \def\yc{0};
	    \def\yu{1*\spacey};
	    \def\yuu{2*\spacey};
	    \def\yuuu{3*\spacey};
	    \coordinate (pc) at (\xu,\yc);
	    \coordinate (pu) at (\xu,\yu);
	    \coordinate (puu) at (\xu,\yuu);
	    \coordinate (puuu) at (\xu,\yuuu);
	 
	    \coordinate (ppu) at (\xp,\yuuu);
	    \coordinate (ppd) at (\xp,\yc);
	    
	    \coordinate (pqu) at (\xq, \yuuu);
	    \coordinate (pqd) at (\xq, \yc);
	    
		\draw (\xq-\r,\yuuu+0.5*\r) .. controls (\xq-3*\r,\yuuu+0.5*\r) and (\xq-3*\r,\yc-0.5*\r) .. (\xq-\r,\yc-0.5*\r);
		\draw (\xq-\r,\yuuu-0.5*\r) .. controls (\xq-1.8*\r,\yuuu-0.5*\r) and (\xq-1.8*\r,\yuu+0.5*\r) .. (\xq-\r,\yuu+0.5*\r);
		\draw (\xq-\r,\yuu-0.5*\r) .. controls (\xq-1.8*\r,\yuu-0.5*\r) and (\xq-1.8*\r,\yu+0.5*\r) .. (\xq-\r,\yu+0.5*\r);
		\draw (\xq-\r,\yu-0.5*\r) .. controls (\xq-1.8*\r,\yu-0.5*\r) and (\xq-1.8*\r,\yc+0.5*\r) .. (\xq-\r,\yc+0.5*\r);
		\draw (\xp+\r,\yuuu+0.5*\r) .. controls (\xp+3*\r,\yuuu+0.5*\r) and (\xp+3*\r,\yuu-0.5*\r) .. (\xp+\r,\yuu-0.5*\r);
		\draw (\xp+\r,\yuuu-0.5*\r) .. controls (\xp+1.8*\r,\yuuu-0.5*\r) and (\xp+1.8*\r,\yuu+0.5*\r) .. (\xp+\r,\yuu+0.5*\r);
		\draw (\xp+\r,\yu+0.5*\r) .. controls (\xp+3*\r,\yu+0.5*\r) and (\xp+3*\r,\yc-0.5*\r) .. (\xp+\r,\yc-0.5*\r);
		\draw (\xp+\r,\yu-0.5*\r) .. controls (\xp+1.8*\r,\yu-0.5*\r) and (\xp+1.8*\r,\yc+0.5*\r) .. (\xp+\r,\yc+0.5*\r);
		\draw (\xq+\r,\yuuu+0.5*\r) .. controls (\xq+3*\r,\yuuu+0.5*\r) and (\xq+3*\r,\yuu-0.5*\r) .. (\xq+\r,\yuu-0.5*\r);
		\draw (\xq+\r,\yuuu-0.5*\r) .. controls (\xq+1.8*\r,\yuuu-0.5*\r) and (\xq+1.8*\r,\yuu+0.5*\r) .. (\xq+\r,\yuu+0.5*\r);
		\draw (\xq+\r,\yu+0.5*\r) .. controls (\xq+3*\r,\yu+0.5*\r) and (\xq+3*\r,\yc-0.5*\r) .. (\xq+\r,\yc-0.5*\r);
		\draw (\xq+\r,\yu-0.5*\r) .. controls (\xq+1.8*\r,\yu-0.5*\r) and (\xq+1.8*\r,\yc+0.5*\r) .. (\xq+\r,\yc+0.5*\r);
		\draw (\xp-\r,\yuuu+0.5*\r) .. controls (\xp-3*\r,\yuuu+0.5*\r) and (\xp-3*\r,\yc-0.5*\r) .. (\xp-\r,\yc-0.5*\r);
		\draw (\xp-\r,\yuuu-0.5*\r) .. controls (\xp-2.6*\r,\yuuu-0.5*\r) and (\xp-2.6*\r,\yc+0.5*\r) .. (\xp-\r,\yc+0.5*\r);
		\draw (\xp-\r,\yu+0.5*\r) .. controls (\xp-1.8*\r,\yu+0.5*\r) and (\xp-1.8*\r,\yuu-0.5*\r) .. (\xp-\r,\yuu-0.5*\r);
		\draw (\xp-\r,\yu-0.5*\r) .. controls (\xp-2.2*\r,\yu-0.5*\r) and (\xp-2.2*\r,\yuu+0.5*\r) .. (\xp-\r,\yuu+0.5*\r);
		
	    \draw (\xq-\r,\yuu+0.5*\r) -- (\xq+\r,\yuu+0.5*\r);
	    \draw (\xq-\r,\yuu-0.5*\r) -- (\xq+\r,\yuu-0.5*\r);
	    \draw (\xq-\r,\yu+0.5*\r) -- (\xq+\r,\yu+0.5*\r);
	    \draw (\xq-\r,\yu-0.5*\r) -- (\xq+\r,\yu-0.5*\r);
	    
	    \draw (\xp-\r,\yuu+0.5*\r) -- (\xp+\r,\yuu+0.5*\r);
	    \draw (\xp-\r,\yuu-0.5*\r) -- (\xp+\r,\yuu-0.5*\r);
	    \draw (\xp-\r,\yu+0.5*\r) -- (\xp+\r,\yu+0.5*\r);
	    \draw (\xp-\r,\yu-0.5*\r) -- (\xp+\r,\yu-0.5*\r);
 		
 		\draw[ten, shift=(ppu)] (-\r,-\r) rectangle (\r,\r);
 	    \node at (ppu) {\scriptsize $P$};
 	    \draw[ten, shift=(ppd)] (-\r,-\r) rectangle (\r,\r);
 	    \node at (ppd) {\scriptsize $P^*$};

  		\draw[ten, shift=(pqu)] (-\r,-\r) rectangle (\r,\r);
  		\node at (pqu) {\scriptsize $Q$};
  		 \draw[ten, shift=(pqd)] (-\r,-\r) rectangle (\r,\r);
  		\node at (pqd) {\scriptsize $Q^*$};
	}
	+
	\diagram{
	    \def\r{0.5};
	    \def\spacex{1.4};
	    \def\spacey{1.4};
	    \def\xc{0};
	    \def\xu{\xc+2*\r};
	    \def\xp{\xc+5*\r};
	    \def\xq{\xc-\r};
	    \def\yc{0};
	    \def\yu{1*\spacey};
	    \def\yuu{2*\spacey};
	    \def\yuuu{3*\spacey};
	    \coordinate (pc) at (\xu,\yc);
	    \coordinate (pu) at (\xu,\yu);
	    \coordinate (puu) at (\xu,\yuu);
	    \coordinate (puuu) at (\xu,\yuuu);
	 
	    \coordinate (ppu) at (\xp,\yuuu);
	    \coordinate (ppd) at (\xp,\yc);
	    
	    \coordinate (pqu) at (\xq, \yuuu);
	    \coordinate (pqd) at (\xq, \yc);
	    
		\draw (\xq-\r,\yuuu+0.5*\r) .. controls (\xq-3*\r,\yuuu+0.5*\r) and (\xq-3*\r,\yc-0.5*\r) .. (\xq-\r,\yc-0.5*\r);
		\draw (\xq-\r,\yuuu-0.5*\r) .. controls (\xq-1.8*\r,\yuuu-0.5*\r) and (\xq-1.8*\r,\yuu+0.5*\r) .. (\xq-\r,\yuu+0.5*\r);
		\draw (\xq-\r,\yuu-0.5*\r) .. controls (\xq-1.8*\r,\yuu-0.5*\r) and (\xq-1.8*\r,\yu+0.5*\r) .. (\xq-\r,\yu+0.5*\r);
		\draw (\xq-\r,\yu-0.5*\r) .. controls (\xq-1.8*\r,\yu-0.5*\r) and (\xq-1.8*\r,\yc+0.5*\r) .. (\xq-\r,\yc+0.5*\r);
		\draw (\xp+\r,\yuuu+0.5*\r) .. controls (\xp+3*\r,\yuuu+0.5*\r) and (\xp+3*\r,\yuu-0.5*\r) .. (\xp+\r,\yuu-0.5*\r);
		\draw (\xp+\r,\yuuu-0.5*\r) .. controls (\xp+1.8*\r,\yuuu-0.5*\r) and (\xp+1.8*\r,\yuu+0.5*\r) .. (\xp+\r,\yuu+0.5*\r);
		\draw (\xp+\r,\yu+0.5*\r) .. controls (\xp+3*\r,\yu+0.5*\r) and (\xp+3*\r,\yc-0.5*\r) .. (\xp+\r,\yc-0.5*\r);
		\draw (\xp+\r,\yu-0.5*\r) .. controls (\xp+1.8*\r,\yu-0.5*\r) and (\xp+1.8*\r,\yc+0.5*\r) .. (\xp+\r,\yc+0.5*\r);
		\draw (\xq+\r,\yuuu+0.5*\r) .. controls (\xq+3*\r,\yuuu+0.5*\r) and (\xq+3*\r,\yc-0.5*\r) .. (\xq+\r,\yc-0.5*\r);
		\draw (\xq+\r,\yuuu-0.5*\r) .. controls (\xq+2.6*\r,\yuuu-0.5*\r) and (\xq+2.6*\r,\yc+0.5*\r) .. (\xq+\r,\yc+0.5*\r);
		\draw (\xq+\r,\yu+0.5*\r) .. controls (\xq+1.8*\r,\yu+0.5*\r) and (\xq+1.8*\r,\yuu-0.5*\r) .. (\xq+\r,\yuu-0.5*\r);
		\draw (\xq+\r,\yu-0.5*\r) .. controls (\xq+2.2*\r,\yu-0.5*\r) and (\xq+2.2*\r,\yuu+0.5*\r) .. (\xq+\r,\yuu+0.5*\r);
		\draw (\xp-\r,\yuuu+0.5*\r) .. controls (\xp-3*\r,\yuuu+0.5*\r) and (\xp-3*\r,\yuu-0.5*\r) .. (\xp-\r,\yuu-0.5*\r);
		\draw (\xp-\r,\yuuu-0.5*\r) .. controls (\xp-1.8*\r,\yuuu-0.5*\r) and (\xp-1.8*\r,\yuu+0.5*\r) .. (\xp-\r,\yuu+0.5*\r);
		\draw (\xp-\r,\yu+0.5*\r) .. controls (\xp-3*\r,\yu+0.5*\r) and (\xp-3*\r,\yc-0.5*\r) .. (\xp-\r,\yc-0.5*\r);
		\draw (\xp-\r,\yu-0.5*\r) .. controls (\xp-1.8*\r,\yu-0.5*\r) and (\xp-1.8*\r,\yc+0.5*\r) .. (\xp-\r,\yc+0.5*\r);
		
	    \draw (\xq-\r,\yuu+0.5*\r) -- (\xq+\r,\yuu+0.5*\r);
	    \draw (\xq-\r,\yuu-0.5*\r) -- (\xq+\r,\yuu-0.5*\r);
	    \draw (\xq-\r,\yu+0.5*\r) -- (\xq+\r,\yu+0.5*\r);
	    \draw (\xq-\r,\yu-0.5*\r) -- (\xq+\r,\yu-0.5*\r);
	    
	    \draw (\xp-\r,\yuu+0.5*\r) -- (\xp+\r,\yuu+0.5*\r);
	    \draw (\xp-\r,\yuu-0.5*\r) -- (\xp+\r,\yuu-0.5*\r);
	    \draw (\xp-\r,\yu+0.5*\r) -- (\xp+\r,\yu+0.5*\r);
	    \draw (\xp-\r,\yu-0.5*\r) -- (\xp+\r,\yu-0.5*\r);
 		
 		\draw[ten, shift=(ppu)] (-\r,-\r) rectangle (\r,\r);
 	    \node at (ppu) {\scriptsize $P$};
 	    \draw[ten, shift=(ppd)] (-\r,-\r) rectangle (\r,\r);
 	    \node at (ppd) {\scriptsize $P^*$};

  		\draw[ten, shift=(pqu)] (-\r,-\r) rectangle (\r,\r);
  		\node at (pqu) {\scriptsize $Q$};
  		 \draw[ten, shift=(pqd)] (-\r,-\r) rectangle (\r,\r);
  		\node at (pqd) {\scriptsize $Q^*$};
	}
	\right) \\
	=& \frac{1}{d^2-1}\left[ \tr(\tr_B Q^\dagger \tr_B Q) \left(|\tr(P)|^2 - \frac{\tr(P^\dagger P)}{d}\right) + d_A \tr(Q^\dagger Q) \left(\tr(P^\dagger P) - \frac{|\tr P|^2}{d} \right) \right],
\end{aligned}
\end{equation}
which is exactly the desired identity (\ref{Eq:EV_QVPV}).
\end{proof}

Then, we will explicitly write down several special cases of Lemma \ref{lemma:EV_QVPV} for the sake of convenience.

\begin{corollary}\label{corollary:ave_purity_from_mixed}
Suppose $V\in\mathbb{V}$ is a unitary on the Hilbert space $\mathcal{H}_A\otimes \mathcal{H}_B$ with $\dim{(\mathcal{H}_A)}=d_A$ and $\dim{(\mathcal{H}_B)}=d_B$ where $\mathbb{V}$ is a unitary $2$-design. Let $\rho$ be an arbitrary density matrix on $\mathcal{H}_A\otimes \mathcal{H}_B$ and $\rho_A=\tr_B(V \rho V^\dagger)$ be the reduced density matrix on $\mathcal{H}_A$ from $V \rho V^\dagger $. The expectation of the purity of $\rho_A$ is
\begin{equation}\label{Eq:ave_purity_from_mixed}
    \mathbb{E}_{V}\left[\tr\left( \rho_A^2 \right)\right] 
    = \frac{(d_A^2 - 1)d_B}{ d^2 - 1} \tr(\rho^2) + \frac{(d_B^2 - 1)d_A}{d^2 - 1}.
\end{equation}
where $d=d_Ad_B$ denotes the dimension of the whole Hilbert space $\mathcal{H}_A\otimes \mathcal{H}_B$. Since pure states satisfy $\tr(\rho^2)=1$, {\rm (\ref{Eq:ave_purity_from_mixed})} can be further simplified for pure states as
\begin{equation}
    \mathbb{E}_{V}\left[\tr\left( \rho_A^2 \right)\right] = \frac{d_A+d_B}{d_Ad_B+1}.
\end{equation}
\end{corollary}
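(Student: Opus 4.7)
The plan is to recognize that the purity $\tr(\rho_A^2)$ is exactly a squared Schatten $2$-norm of a partial trace, which places it directly in the form handled by Lemma~\ref{lemma:EV_QVPV}. Writing $\rho_A = \tr_B(V\rho V^\dagger)$, we have
\begin{equation}
    \tr(\rho_A^2) = \left\| \tr_B\!\left(V\rho V^\dagger\right) \right\|_2^2 = \left\| \tr_B\!\left(Q\, V\rho V^\dagger \right) \right\|_2^2
\end{equation}
with the trivial choice $Q = I$ on $\mathcal{H}_A\otimes\mathcal{H}_B$ and $P = \rho$. So the first step is simply to invoke Lemma~\ref{lemma:EV_QVPV} with this substitution.

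Next, I would compute each of the four ingredients that appear on the right-hand side of (\ref{Eq:EV_QVPV}). Taking $Q=I$ gives $\tr_B Q = d_B I_A$, hence $\|\tr_B Q\|_2^2 = d_A d_B^2$, while $\|Q\|_2^2 = d = d_A d_B$. On the $P$-side, since $\rho$ is a density matrix, $\tr P = 1$ and $\|P\|_2^2 = \tr(\rho^2)$. Substituting these into Lemma~\ref{lemma:EV_QVPV} yields
\begin{equation}
    \mathbb{E}_V\!\left[\tr(\rho_A^2)\right] = \frac{1}{d^2-1}\left[ d_A d_B^2\!\left(1 - \frac{\tr(\rho^2)}{d}\right) + d_A^2 d_B\!\left(\tr(\rho^2) - \frac{1}{d}\right) \right].
\end{equation}

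The final step is pure bookkeeping: using $d = d_A d_B$ to cancel common factors, the constant term collects to $d_A(d_B^2-1)/(d^2-1)$ and the $\tr(\rho^2)$ coefficient collects to $d_B(d_A^2-1)/(d^2-1)$, which is exactly (\ref{Eq:ave_purity_from_mixed}). The pure-state specialization then follows by setting $\tr(\rho^2) = 1$, combining the two terms over the common denominator, and factoring $(d_A d_B - 1)$ from numerator and denominator to obtain $(d_A+d_B)/(d_Ad_B+1)$. There is no real obstacle here: once the lemma is in hand, the entire argument is a substitution plus elementary algebraic simplification; the only thing to watch is not to confuse $\|\tr_B Q\|_2^2$ with $\tr_B\|Q\|_2^2$, so the Schatten norm of the $d_B I_A$ partial trace must be computed carefully to get the factor $d_A d_B^2$ rather than $d_B^2$.
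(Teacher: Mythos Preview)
Your proposal is correct and is exactly the paper's approach: the paper's entire proof is the one-line remark that this is the special case of Lemma~\ref{lemma:EV_QVPV} with $P=\rho$ and $Q=I_A\otimes I_B$, and you have simply spelled out the substitution and algebraic simplification in detail. The computations of $\|\tr_B Q\|_2^2=d_Ad_B^2$, $\|Q\|_2^2=d$, and the subsequent collection of the $\tr(\rho^2)$ and constant terms are all accurate.
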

\begin{proof}
This is a special case of Lemma \ref{lemma:EV_QVPV} by taking $P=\rho$ and $Q=I_A\otimes I_B$.
\end{proof}
\begin{corollary}\label{corollary:EV_IAtracelessOBVrhoV}
Suppose $V\in\mathbb{V}$ is a unitary on the Hilbert space $\mathcal{H}_A\otimes \mathcal{H}_B$ with $\dim{(\mathcal{H}_A)}=d_A$ and $\dim{(\mathcal{H}_B)}=d_B$ where $\mathbb{V}$ is a unitary $2$-design. Let $\rho$ be an arbitrary density matrix on $\mathcal{H}_A\otimes \mathcal{H}_B$. For any traceless operator $O_B$ on $\mathcal{H}_B$, the following identity holds
\begin{equation}\label{Eq:EUIAtracelessOBUrhoU}
    \mathbb{E}_{V} \left[ \left\| \tr_B \left( I_A\otimes O_B V \rho V^\dagger \right) \right \|^2_2 \right]
    = \frac{d_A^2 \|O_B\|^2_2 }{d^2-1} \left( \tr(\rho^2) -\frac{1}{d} \right),
\end{equation}
where $d=d_Ad_B$ denotes the dimension of the whole Hilbert space $\mathcal{H}_A\otimes \mathcal{H}_B$ and $I_A$ is the identity on $\mathcal{H}_A$.
\end{corollary}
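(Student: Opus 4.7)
The plan is to derive Corollary \ref{corollary:EV_IAtracelessOBVrhoV} as a direct specialization of Lemma \ref{lemma:EV_QVPV}. Concretely, I would set $P = \rho$ and $Q = I_A \otimes O_B$ in the general identity
\begin{equation*}
    \mathbb{E}_{V} \left[ \left\| \tr_B (Q V P V^\dagger) \right\|^2_2 \right]
    = \frac{1}{d^2-1}\left[\|\tr_B Q\|^2_2 \left(|\tr P|^2 - \tfrac{\|P\|^2_2}{d}\right) + d_A \|Q\|^2_2 \left(\|P\|^2_2 - \tfrac{|\tr P|^2}{d}\right)\right],
\end{equation*}
and then simplify the two prefactors using the traceless hypothesis on $O_B$.

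The first step is to evaluate $\tr_B Q$: since partial trace factors through the tensor product, $\tr_B(I_A \otimes O_B) = \tr(O_B)\, I_A = 0$ by the assumption $\tr(O_B) = 0$. This immediately kills the first bracket, so only the second term in Lemma~\ref{lemma:EV_QVPV} survives. The second step is to compute $\|Q\|_2^2$. Using multiplicativity of the Hilbert--Schmidt norm under tensor products, $\|I_A \otimes O_B\|_2^2 = \|I_A\|_2^2 \,\|O_B\|_2^2 = d_A \|O_B\|_2^2$. The third step is to insert $\|P\|_2^2 = \tr(\rho^2)$ and $|\tr P|^2 = 1$ (since $\rho$ is a density matrix), which yields
\begin{equation*}
    \mathbb{E}_V \left[ \left\| \tr_B ((I_A \otimes O_B)\, V\rho V^\dagger) \right\|_2^2 \right]
    = \frac{d_A \cdot d_A \|O_B\|_2^2}{d^2-1}\left(\tr(\rho^2) - \tfrac{1}{d}\right),
\end{equation*}
which is exactly the claimed identity after collecting the factor $d_A^2$.

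There is no real obstacle here because Lemma~\ref{lemma:EV_QVPV} already packages all the genuinely hard work (namely, the element-wise Weingarten calculation and the tensor-network contractions shown in the preliminaries). The only things to be careful about are (i) checking that the hypothesis ``$\mathbb{V}$ is a unitary $2$-design on $\mathcal{H}_A \otimes \mathcal{H}_B$'' is inherited directly, and (ii) verifying that $I_A \otimes O_B$ does indeed make the first term of Lemma~\ref{lemma:EV_QVPV} vanish, so that the result does not depend on $|\tr \rho|^2$ being exactly one—the tracelessness of $O_B$ decouples the answer from the trace structure of $\rho$ and leaves only the purity $\tr(\rho^2)$ and the offset $1/d$, as expected.
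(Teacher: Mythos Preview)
Your proposal is correct and follows exactly the paper's approach: the paper's proof is the single line ``This is a special case of Lemma~\ref{lemma:EV_QVPV} by taking $P=\rho$, $Q=I_A\otimes O_B$ with $\tr(O_B)=0$,'' and you have simply spelled out the substitutions in more detail.
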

\begin{proof}
This is a special case of Lemma \ref{lemma:EV_QVPV} by taking $P=\rho$, $Q=I_A\otimes O_B$ with $\tr(O_B)=0$.
\end{proof}
\begin{corollary}\label{corollary:EV_2-norm_OAOBtracelessP}
Suppose $V\in\mathbb{V}$ is a unitary on the Hilbert space $\mathcal{H}_A\otimes \mathcal{H}_B$ with $\dim{(\mathcal{H}_A)}=d_A$ and $\dim{(\mathcal{H}_B)}=d_B$ where $\mathbb{V}$ is a unitary $2$-design. For any traceless operator $P$ on $\mathcal{H}_A\otimes \mathcal{H}_B$ and any linear operators $O_A,O_B$ on $\mathcal{H}_A$, $\mathcal{H}_B$ respectively, the following identity holds
\begin{equation}\label{Eq:EV_QVPVtracelessP}
    \begin{aligned}
        \mathbb{E}_{V} \left[ \left\| \tr_B \left( O_A\otimes O_B V P V^\dagger\right) \right\|^2_2 \right]
        = \frac{\|O_A\|^2_2 \|P\|^2_2 }{d^2-1}\left[
        d_A \|O_B\|^2_2 - \frac{|\tr O_B|^2}{d}
        \right],
    \end{aligned}
\end{equation}
where $d=d_Ad_B$ denotes the dimension of the whole Hilbert space $\mathcal{H}_A\otimes \mathcal{H}_B$.
\end{corollary}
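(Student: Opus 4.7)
The plan is to derive this corollary as a direct specialization of Lemma~\ref{lemma:EV_QVPV}, by choosing the operator $Q$ to have a tensor product structure and using the tracelessness assumption on $P$. Since Lemma~\ref{lemma:EV_QVPV} already handles the general unitary $2$-design integral $\mathbb{E}_V \|\tr_B(Q V P V^\dagger)\|_2^2$ for arbitrary $P$ and $Q$, no new integration over the unitary group is needed; the work is purely algebraic.

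Concretely, I would proceed as follows. First, set $Q = O_A \otimes O_B$ in Lemma~\ref{lemma:EV_QVPV}. Then compute the two norms of $Q$ that appear on the right-hand side: the partial trace $\tr_B Q = \tr(O_B)\, O_A$ yields
\begin{equation}
\|\tr_B Q\|_2^2 = |\tr O_B|^2 \, \|O_A\|_2^2,
\end{equation}
while the tensor product structure gives $\|Q\|_2^2 = \|O_A\|_2^2 \, \|O_B\|_2^2$ by multiplicativity of the Schatten $2$-norm under tensor products. Second, invoke the hypothesis $\tr P = 0$ to eliminate the $|\tr P|^2$ terms from Lemma~\ref{lemma:EV_QVPV}, leaving only the contributions proportional to $\|P\|_2^2$. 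Third, substitute and collect the factor $\|O_A\|_2^2 \|P\|_2^2/(d^2-1)$, which produces the bracketed expression $d_A \|O_B\|_2^2 - |\tr O_B|^2/d$ and matches the claimed identity exactly.

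There is no real obstacle here, since the lemma does all the heavy lifting. The only point to be careful about is checking the two bookkeeping identities $\tr_B(O_A \otimes O_B) = \tr(O_B) O_A$ and $\|O_A \otimes O_B\|_2^2 = \|O_A\|_2^2 \|O_B\|_2^2$, and making sure the signs from dropping $|\tr P|^2 = 0$ combine correctly into the stated formula. Both are standard, so the corollary follows in a few lines from Lemma~\ref{lemma:EV_QVPV}.
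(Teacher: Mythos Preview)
Your proposal is correct and matches the paper's own proof essentially verbatim: the paper also derives this corollary as a direct specialization of Lemma~\ref{lemma:EV_QVPV} by taking $Q=O_A\otimes O_B$ and $\tr(P)=0$. The bookkeeping identities you list are exactly the ones needed, and your substitution into the lemma yields the stated formula.
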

\begin{proof}
This is a special case of Lemma \ref{lemma:EV_QVPV} by taking $\tr (P)=0$ and $Q=O_A\otimes O_B$.
\end{proof}

In the end of this section, we recall several fundamental inequalities in linear algebra and probability theory to make our proofs in the next section more self-contained.
\begin{lemma}\label{lemma:holder}{\rm (Hölder's inequality for tracial matrices)}
For any linear operators $X,Y$, the following inequality holds
\begin{equation}
    \left| \tr (X^\dagger Y) \right| \leq  \|X\|_p \|Y\|_q,
\end{equation}
where $p,q$ satisfy $\frac{1}{p}+\frac{1}{q} = 1$ and $\|\cdot\|_p$ denotes the Schatten $p$-norm defined by $\|A\|_p = \left(\tr|A|^p\right)^{1/p}$, $|A|=\sqrt{A^\dagger A}$.
\end{lemma}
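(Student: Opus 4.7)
The plan is to reduce the tracial Hölder inequality to the classical scalar Hölder inequality by passing to singular values. The key intermediate fact is von Neumann's trace inequality, which states that for any linear operators $A, B$ on the same finite-dimensional Hilbert space,
\begin{equation}
    |\tr(AB)| \leq \sum_i \sigma_i(A)\,\sigma_i(B),
\end{equation}
where $\sigma_i(\cdot)$ denote the singular values arranged in decreasing order. Setting $A = X^\dagger$ and $B = Y$, and using $\sigma_i(X^\dagger) = \sigma_i(X)$, this immediately gives $|\tr(X^\dagger Y)| \leq \sum_i \sigma_i(X)\sigma_i(Y)$.

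Next I would apply the classical scalar Hölder inequality to the nonnegative sequences $\{\sigma_i(X)\}$ and $\{\sigma_i(Y)\}$ with conjugate exponents $p, q$, obtaining
\begin{equation}
    \sum_i \sigma_i(X)\sigma_i(Y) \leq \Bigl(\sum_i \sigma_i(X)^p\Bigr)^{1/p} \Bigl(\sum_i \sigma_i(Y)^q\Bigr)^{1/q}.
\end{equation}
Finally, I would identify the two factors on the right with the Schatten norms. Since $|X| = \sqrt{X^\dagger X}$ has eigenvalues equal to the singular values of $X$, it follows that $\sum_i \sigma_i(X)^p = \tr|X|^p = \|X\|_p^p$, and analogously for $Y$. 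Concatenating the three inequalities yields the claimed bound $|\tr(X^\dagger Y)| \leq \|X\|_p \|Y\|_q$.

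The main obstacle is the rigorous justification of von Neumann's trace inequality, since everything else reduces to elementary scalar analysis. A clean route is via singular value decompositions $X = U_X \Sigma_X V_X^\dagger$ and $Y = U_Y \Sigma_Y V_Y^\dagger$: after substitution and cyclicity of the trace, $\tr(X^\dagger Y)$ becomes a bilinear expression $\sum_{i,j} \sigma_i(X)\sigma_j(Y) M_{ij}$ where $M$ is a matrix whose entries $|M_{ij}|^2$ form a doubly substochastic pattern. Invoking the Birkhoff–von Neumann theorem (any doubly stochastic matrix is a convex combination of permutation matrices) then reduces the bound to showing $\sum_i \sigma_i(X)\sigma_{\pi(i)}(Y) \leq \sum_i \sigma_i(X)\sigma_i(Y)$ for every permutation $\pi$, which follows from the rearrangement inequality. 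The boundary cases $p = 1, q = \infty$ (and vice versa) deserve a brief separate remark, where $\|Y\|_\infty = \sigma_1(Y)$ is the operator norm and the inequality reads $|\tr(X^\dagger Y)| \leq \sigma_1(Y) \sum_i \sigma_i(X)$, a direct consequence of the same von Neumann bound.
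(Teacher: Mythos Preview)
The paper does not actually prove this lemma; it is listed in Supplementary Note~1 among a collection of standard preliminary facts (alongside Markov's inequality, Jensen's inequality, etc.) that are simply stated without proof. So there is no ``paper's own proof'' to compare against.

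Your argument is the standard and correct one: von Neumann's trace inequality reduces the matrix problem to scalar sequences of singular values, and then the classical H\"older inequality finishes it. One small imprecision in your sketch of von Neumann's inequality: after writing $\tr(X^\dagger Y) = \tr(W_1 \Sigma_X W_2 \Sigma_Y)$ with $W_1 = V_Y^\dagger V_X$ and $W_2 = U_X^\dagger U_Y$, the relevant coefficient matrix is $D_{ij} = |(W_1)_{ij}(W_2)_{ji}|$, and it is this matrix (not $|M_{ij}|^2$) that is doubly substochastic, via Cauchy--Schwarz applied to the rows and columns of the unitary $W_1, W_2$. Once you have double substochasticity, you can dominate $D$ by a doubly stochastic matrix and invoke Birkhoff--von Neumann plus the rearrangement inequality exactly as you describe. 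With that adjustment, the proposal is complete.
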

\begin{lemma}\label{lemma:partial_trace_norm}{\rm (Partial trace monotonicity)}
For any linear operator $H$ on the Hilbert space $\mathcal{H}_A\otimes \mathcal{H}_B$ with $\dim\mathcal{H}_{B}=d_B$, the following inequality holds  {\rm~\cite{Rastegin2012}}
\begin{equation}
    \| \tr_B H \|_{p} \leq d_B^{(p-1)/p} \|H\|_{p}.
\end{equation}
Namely, the Schatten $p$-norm is non-increasing under partial tracing up to a constant coefficient. Specially, we have
\begin{equation}
    \| \tr_B H \|_{1} \leq \|H\|_{1},~~~
    \| \tr_B H \|_{2} \leq \sqrt{d_B} \|H\|_{2},~~~
    \| \tr_B H \|_{\infty} \leq d_B \|H\|_{\infty}.
\end{equation}
\end{lemma}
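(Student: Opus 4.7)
The plan is to exploit the duality between $\tr_B$ and its Hilbert-Schmidt adjoint, which is exactly the ``tensor with identity'' map. Concretely, for any operator $Y$ on $\mathcal{H}_A$ one has the trace identity
\begin{equation}
    \tr\bigl[\tr_B(H)\, Y\bigr] \;=\; \tr\bigl[H\,(Y\otimes I_B)\bigr],
\end{equation}
which is immediate from expanding $\tr_B$ in any product basis. This turns every statement about the partial trace of $H$ into a statement about $H$ paired against operators with a rigid tensor-product form.

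First I would invoke the standard variational formula for the Schatten $p$-norm,
\begin{equation}
    \|\tr_B H\|_p \;=\; \sup\bigl\{\,|\tr(\tr_B(H)\,Y)|\;:\; Y \text{ on } \mathcal{H}_A,\; \|Y\|_q \le 1\bigr\},
\end{equation}
where $q$ is the Hölder conjugate of $p$, i.e.\ $1/p+1/q=1$. Using the adjoint identity above, the supremum rewrites as the same supremum of $|\tr(H(Y\otimes I_B))|$ over $\|Y\|_q\le 1$. Applying Hölder's inequality (Lemma~\ref{lemma:holder}) to this trace yields
\begin{equation}
    |\tr(H(Y\otimes I_B))| \;\le\; \|H\|_p\,\|Y\otimes I_B\|_q .
\end{equation}

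Next I would use the multiplicativity of the Schatten norm under tensor products, $\|Y\otimes I_B\|_q = \|Y\|_q\,\|I_B\|_q$, together with the elementary fact that the identity on a $d_B$-dimensional space has Schatten $q$-norm equal to $d_B^{1/q}$. Since $\|Y\|_q\le 1$ and $1/q=(p-1)/p$, the right-hand side is bounded by $d_B^{(p-1)/p}\|H\|_p$, giving the claimed inequality after taking the supremum on the left. The three special cases $p=1,2,\infty$ follow by direct substitution.

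The only delicate points are the duality formula for the Schatten norm and the multiplicativity of the norm on elementary tensors; both are textbook, but if one wanted to avoid duality, an alternative route would be to verify the endpoints $p=1$ and $p=\infty$ by elementary arguments (for $p=1$ use the polar decomposition and the fact that $\|Y_A\otimes I_B\|_\infty = \|Y_A\|_\infty$; for $p=\infty$ take the sup over unit vectors in $\mathcal{H}_A$ and insert a completeness relation on $\mathcal{H}_B$) and then interpolate via Riesz-Thorin. I expect the duality-plus-Hölder route above to be the cleanest, with no substantial obstacle beyond citing these two standard facts.
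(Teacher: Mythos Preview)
Your proof is correct. The duality identity $\tr[\tr_B(H)\,Y]=\tr[H\,(Y\otimes I_B)]$, the Schatten-norm variational formula, Hölder's inequality, and the multiplicativity $\|Y\otimes I_B\|_q=\|Y\|_q\,d_B^{1/q}$ combine exactly as you describe, and $1/q=(p-1)/p$ gives the stated constant.

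There is nothing to compare against here: the paper does not prove this lemma but simply records it as a known inequality with a citation to \cite{Rastegin2012}, alongside Hölder, Markov, and Jensen, which are likewise stated without proof. Your duality-plus-Hölder argument is the standard route and would serve perfectly well as a self-contained proof; the Riesz--Thorin alternative you mention also works but is unnecessary given how cleanly the direct argument goes through.
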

\begin{lemma}\label{lemma:markov_inequality} {\rm (Markov's inequality)}
Let $X$ be a random variable taking non-negative real value. For any $\epsilon>0$, the following inequality holds
\begin{equation}
    \operatorname{Pr}[X\geq \epsilon] \leq \frac{\mathbb{E}[X]}{\epsilon},
\end{equation}
where $\operatorname{Pr}[X\geq \epsilon]$ denotes the probability of $X\geq \epsilon$ and $\mathbb{E}[X]$ denotes the expectation of the random variable $X$.
\end{lemma}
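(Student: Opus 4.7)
The plan is to prove Markov's inequality by the classical indicator-function argument, which requires no machinery beyond the monotonicity and linearity of expectation; in particular, none of the preceding preliminaries on Haar integrals or Schatten norms play any role here.

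First, I would introduce the indicator random variable $\mathbf{1}_{\{X \geq \epsilon\}}$, equal to $1$ on the event $\{X \geq \epsilon\}$ and $0$ otherwise, and establish the pointwise bound $\epsilon \cdot \mathbf{1}_{\{X \geq \epsilon\}} \leq X$. This holds because $X$ is non-negative by hypothesis: on the event $\{X \geq \epsilon\}$ the left side equals exactly $\epsilon$ while the right side is at least $\epsilon$, and on its complement the left side vanishes while the right side is still $\geq 0$. Second, I would take expectations on both sides; monotonicity of expectation then gives $\epsilon \cdot \mathbb{E}[\mathbf{1}_{\{X \geq \epsilon\}}] \leq \mathbb{E}[X]$, and recognising $\mathbb{E}[\mathbf{1}_{\{X \geq \epsilon\}}] = \operatorname{Pr}[X \geq \epsilon]$ by definition of the probability of an event, division by $\epsilon > 0$ produces the stated inequality.

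No real obstacle is expected. The only mild care concerns the case $\mathbb{E}[X] = +\infty$, for which the bound is trivially true, so we may restrict attention to $\mathbb{E}[X] < \infty$; all remaining steps are then elementary consequences of monotonicity of the Lebesgue integral applied to non-negative integrands. It is worth noting that in the main text this lemma is applied with $X = \Delta_{H,\rho}(V_1,V_2)$, which is manifestly non-negative (as a difference of a maximum and a minimum of the same function), so the hypotheses of the lemma are automatically satisfied whenever one wishes to convert the expectation bound of Theorem~\ref{theorem:main-theorem} into the concentration statement of Eq.~(\ref{Eq:PrV1V2_delta}).
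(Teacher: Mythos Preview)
Your argument is the standard indicator-function proof of Markov's inequality and is correct in every detail. Note, however, that the paper does not actually supply a proof of this lemma: it is listed alongside H\"older's and Jensen's inequalities in the preliminaries as a classical fact quoted without demonstration, so there is no ``paper's own proof'' to compare against. Your write-up would serve perfectly well as the omitted justification.
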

\begin{lemma}\label{lemma:jensen_inequality} {\rm (Jensen's inequality)}
Let $X$ be a random variable and $f:\mathbb{R}\rightarrow\mathbb{R}$ is a convex function. The following inequality holds
\begin{equation}
    f(\mathbb{E}[X]) \leq \mathbb{E}[f(X)].
\end{equation}
\end{lemma}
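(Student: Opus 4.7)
The plan is to reduce Jensen's inequality to the supporting hyperplane (subgradient) property of convex functions. Since $f:\mathbb{R}\rightarrow\mathbb{R}$ is convex, at every interior point $c$ of its effective domain there exists a real number $a$ (any element of the subdifferential, e.g. the right derivative $f'_+(c)$) such that the affine minorant inequality
\begin{equation}
    f(x) \geq f(c) + a(x - c)
\end{equation}
holds for all $x \in \mathbb{R}$. The whole proof hinges on instantiating this with $c = \mathbb{E}[X]$ and then averaging.

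First I would verify this subgradient property from scratch using only the defining inequality of convexity, $f(\lambda x + (1-\lambda) y) \leq \lambda f(x) + (1-\lambda) f(y)$ for $\lambda \in [0,1]$. Fixing $c$, one shows that the slope function $s(x) := (f(x) - f(c))/(x - c)$ is nondecreasing in $x$ on $\mathbb{R}\setminus\{c\}$, so that the one-sided derivatives $f'_\pm(c)$ exist and are finite whenever $c$ lies in the interior of the domain; any value $a \in [f'_-(c), f'_+(c)]$ then satisfies the affine minorant inequality displayed above. This is a standard one-variable calculus argument and provides the only analytic input required.

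Second, I would apply the minorant with $c := \mathbb{E}[X]$, assuming $\mathbb{E}[X]$ is finite and lies in the interior of the domain of $f$ (the extreme cases are either trivial or handled by a one-sided limit). Substituting the random variable $X(\omega)$ for $x$ gives the pointwise bound
\begin{equation}
    f(X(\omega)) \geq f(\mathbb{E}[X]) + a\bigl(X(\omega) - \mathbb{E}[X]\bigr),
\end{equation}
valid for every outcome $\omega$. Taking expectations of both sides and exploiting linearity of expectation together with $\mathbb{E}[X - \mathbb{E}[X]] = 0$ collapses the linear term and yields $\mathbb{E}[f(X)] \geq f(\mathbb{E}[X])$, which is exactly the claim.

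The main obstacle is not analytic but is rather a measurability and integrability bookkeeping issue: one must ensure $f \circ X$ is measurable, which follows because a convex function on $\mathbb{R}$ is continuous on the interior of its domain and hence Borel measurable, and one must interpret $\mathbb{E}[f(X)]$ appropriately. The affine lower bound shows $f(X)$ dominates an integrable random variable, so its expectation is well-defined in $(-\infty, +\infty]$; if $\mathbb{E}[f(X)] = +\infty$ the inequality is automatic, and otherwise the expectation step above is legitimate. No machinery beyond the one-dimensional subgradient is needed.
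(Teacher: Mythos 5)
Your proof is correct: it is the canonical supporting-line (subgradient) argument for Jensen's inequality, with the measurability and integrability caveats handled appropriately. The paper simply recalls this lemma as a standard fact in its preliminaries and gives no proof of its own, so there is nothing to compare against; your argument is a complete and valid justification of the statement as used in the paper.
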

\begin{lemma}\label{lemma:var<expec}
Suppose that $X$ is a random variable taking real values in $[0,a]$. The following inequality holds
\begin{equation}
    \operatorname{Var}[X] \leq a \cdot \mathbb{E}[X]. 
\end{equation}
\end{lemma}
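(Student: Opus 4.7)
The plan is to exploit the fact that $X$ is bounded both below by $0$ and above by $a$ in order to bound the second moment $\mathbb{E}[X^2]$ linearly in terms of the first moment $\mathbb{E}[X]$, and then drop the $(\mathbb{E}[X])^2$ term in the variance using non-negativity.

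First I would observe the pointwise inequality: since $0 \le X \le a$ holds almost surely, multiplying the upper bound $X \le a$ by the non-negative quantity $X$ preserves the inequality and yields $X^2 \le aX$ almost surely. Taking expectations on both sides and using monotonicity and linearity of expectation gives $\mathbb{E}[X^2] \le a\,\mathbb{E}[X]$.

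Next I would recall the identity $\operatorname{Var}[X] = \mathbb{E}[X^2] - (\mathbb{E}[X])^2$. Because $(\mathbb{E}[X])^2 \ge 0$, we can simply drop it to obtain $\operatorname{Var}[X] \le \mathbb{E}[X^2]$. Chaining this with the previous inequality yields $\operatorname{Var}[X] \le a\,\mathbb{E}[X]$, which is the desired claim.

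There is no real obstacle here; the only thing to be slightly careful about is justifying that $X^2 \le aX$ with probability one (which is immediate from $X \in [0,a]$) so that the inequality survives under expectation, and noting that dropping $(\mathbb{E}[X])^2$ only weakens the bound. This lemma is precisely what is needed in the main text to convert the expectation bound of Theorem~\ref{theorem:main-theorem} into the variance bound in equation~(\ref{Eq:VarV1V2_delta}), applied with $X = \Delta_{H,\rho}(V_1,V_2)$ and $a = w(H)$.
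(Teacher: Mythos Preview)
Your proof is correct and matches the paper's own argument essentially line for line: both use $X^2 \le aX$ pointwise on $[0,a]$, take expectations, and then drop the non-negative $(\mathbb{E}[X])^2$ term from the variance.
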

\begin{proof}
According to the relation $x^2\leq ax$, we have
\begin{equation}
    \operatorname{Var}[X] \leq \mathbb{E}[X^2] \leq \mathbb{E}[a X] 
    = a \cdot \mathbb{E}[X]. 
\end{equation}
\end{proof}


\subsection{Proof of Theorem~\textcolor{blue}{1}}\label{supplementary:proof_main_theorem}
To make the proof easy to read and emphasize important intermediate results, we prove Lemma 
\ref{lemma:1-norm_to_maxmixed}-\ref{lemma:luo-UAV2-H}
first and derive Theorem~\textcolor{blue}{1} by use of these lemmas.
\begin{lemma}\label{lemma:1-norm_to_maxmixed}
Suppose $V\in\mathbb{V}$ is a unitary on the Hilbert space $\mathcal{H}_A\otimes \mathcal{H}_B$ with $\dim{(\mathcal{H}_A)}=d_A$ and $\dim{(\mathcal{H}_B)}=d_B$ where $\mathbb{V}$ is a unitary $2$-design. Let $\rho$ be an arbitrary density matrix on $\mathcal{H}_A\otimes \mathcal{H}_B$ and $\rho_A=\tr_B(V \rho V^\dagger)$ be the reduced density matrix on $\mathcal{H}_A$ from $V\rho V^\dagger$. The expectation of the $2$-norm distance between $\rho_A$ and the maximally mixed state $I_A/d_A$ satisfies
\begin{equation}\label{Eq:rhoA-Id1<sqrtdA-dB}
    \mathbb{E}_{V}\left\|\rho_{A}-\frac{I_{A}}{d_A}\right\|_{2} \leq \sqrt{\frac{1}{d_B}}.
\end{equation}
\end{lemma}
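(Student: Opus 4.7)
The plan is to pass to the squared $2$-norm via Jensen's inequality, expand it in terms of the purity of $\rho_A$, and then apply Corollary~\ref{corollary:ave_purity_from_mixed} to evaluate the expected purity explicitly.

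First, since $x\mapsto x^{2}$ is convex and $\sqrt{\cdot}$ is concave, Jensen's inequality (Lemma~\ref{lemma:jensen_inequality}) gives
\begin{equation}
    \mathbb{E}_{V}\left\|\rho_{A}-\tfrac{I_{A}}{d_A}\right\|_{2}
    \;\leq\; \sqrt{\mathbb{E}_{V}\left\|\rho_{A}-\tfrac{I_{A}}{d_A}\right\|_{2}^{2}}.
\end{equation}
Using $\tr(\rho_A)=1$, a direct expansion yields
\begin{equation}
    \left\|\rho_{A}-\tfrac{I_{A}}{d_A}\right\|_{2}^{2}
    = \tr(\rho_A^2) - \tfrac{2}{d_A}\tr(\rho_A) + \tfrac{1}{d_A}
    = \tr(\rho_A^2) - \tfrac{1}{d_A},
\end{equation}
so the problem reduces to bounding $\mathbb{E}_V[\tr(\rho_A^2)]-1/d_A$.

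Next, I would invoke Corollary~\ref{corollary:ave_purity_from_mixed}, which (since $\mathbb{V}$ is a unitary $2$-design) gives the closed form
\begin{equation}
    \mathbb{E}_V[\tr(\rho_A^2)] = \frac{(d_A^2-1)d_B}{d^2-1}\tr(\rho^2) + \frac{(d_B^2-1)d_A}{d^2-1}.
\end{equation}
Subtracting $1/d_A$ and putting things over a common denominator, the cross terms cancel neatly to give
\begin{equation}
    \mathbb{E}_V[\tr(\rho_A^2)] - \tfrac{1}{d_A}
    = \frac{d_A^2-1}{d^2-1}\left[d_B\,\tr(\rho^2) - \tfrac{1}{d_A}\right].
\end{equation}

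The final step is purely algebraic. Using $\tr(\rho^2)\leq 1$ (since $\rho$ is a density matrix) and $d^2-1=(d_Ad_B-1)(d_Ad_B+1)$, one obtains
\begin{equation}
    \mathbb{E}_V[\tr(\rho_A^2)] - \tfrac{1}{d_A}
    \;\leq\; \frac{(d_A^2-1)(d_Ad_B-1)}{d_A(d^2-1)}
    = \frac{d_A^2-1}{d_A(d+1)},
\end{equation}
and the elementary inequality $(d_A^2-1)d_B \leq d_A(d+1)$ (which rearranges to the trivial $-d_B\leq d_A$) shows this is at most $1/d_B$. Combining with the Jensen step yields the claimed bound. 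The only real obstacle is the last algebraic simplification, where one must be careful that the bound holds for \emph{arbitrary} $\rho$ and not just pure states; this is ensured by using $\tr(\rho^2)\leq 1$ before the final cancellation.
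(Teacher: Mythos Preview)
Your proof is correct and follows essentially the same approach as the paper: Jensen's inequality, expansion of the squared $2$-norm in terms of the purity, application of Corollary~\ref{corollary:ave_purity_from_mixed}, and then the bound $\tr(\rho^{2})\leq 1$. The only difference is cosmetic ordering of the algebra---the paper first applies $\tr(\rho^{2})\leq 1$ and then simplifies to $\tfrac{d_A+d_B}{d_Ad_B+1}-\tfrac{1}{d_A}\leq \tfrac{1}{d_B}$, whereas you first factor out $\tfrac{d_A^{2}-1}{d^{2}-1}$ and arrive at the equivalent $\tfrac{d_A^{2}-1}{d_A(d+1)}\leq \tfrac{1}{d_B}$; both reduce to the same trivial inequality.
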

\begin{proof}
According to the concavity of the square root function and Jensen's inequality in Lemma \ref{lemma:jensen_inequality}, we have
\begin{equation}\label{Eq:rhoA-Id1}
    \mathbb{E}_{V}\left\|\rho_{A}-\frac{I_{A}}{d_A}\right\|_{2}
    \leq \sqrt{\mathbb{E}_{V} \left[ \left\| \rho_{A}-\frac{I_{A}}{d_A} \right\|^2_{2} \right] }.
\end{equation}
Using Corollary \ref{corollary:ave_purity_from_mixed}, the expectation under the square root on the right hand side of (\ref{Eq:rhoA-Id1}) can be exactly calculated as
\begin{equation}\label{Eq:rhoA-Id2=trrho2}
    \begin{aligned}
        \mathbb{E}_{V} \left[ \left\| \rho_{A} - \frac{I_{A}}{d_A} \right\|_{2}^{2} \right]
        & = \mathbb{E}_{V} \tr \left[ \left(\rho_{A} - \frac{I_{A}}{d_A}\right)^{2} \right] 
        = \mathbb{E}_{V} \tr\left(\rho_{A}^{2}-\frac{2}{d_A} \rho_{A}+\frac{I_{A}}{d_A^{2}}\right) \\
        & = \frac{(d_A^2 - 1)d_B}{ d^2 - 1} \tr(\rho^2) + \frac{(d_B^2 - 1)d_A}{d^2 - 1} - \frac{1}{d_A}.
    \end{aligned}
\end{equation}
By the upper bound of the purity $\tr(\rho^2)\leq 1$, (\ref{Eq:rhoA-Id2=trrho2}) could be further relaxed to
\begin{equation}\label{Eq:rhoA-Id2<1dB}
    \begin{aligned}
        \mathbb{E}_{V} \left[ \left\| \rho_{A} - \frac{I_{A}}{d_A} \right\|_{2}^{2} \right]
        & \leq \frac{(d_A^2 - 1)d_B}{ d^2 - 1} + \frac{(d_B^2 - 1)d_A}{d^2 - 1} - \frac{1}{d_A} \\
        & = \frac{d_A + d_B}{d_A d_B + 1} - \frac{1}{d_A} \leq \frac{1}{d_B}.
    \end{aligned}
\end{equation}
Combining (\ref{Eq:rhoA-Id1}) and (\ref{Eq:rhoA-Id2<1dB}), we arrive at 
(\ref{Eq:rhoA-Id1<sqrtdA-dB}).
\end{proof}

\begin{lemma}\label{lemma:EV_IAOB_VrhoV}
Suppose $V \in\mathbb{V}$ is a unitary on the Hilbert space $\mathcal{H}_A\otimes \mathcal{H}_B$ with $\dim{(\mathcal{H}_A)}=d_A$ and $\dim{(\mathcal{H}_B)}=d_B$ where $\mathbb{V}$ is a unitary $2$-design. For any density matrix $\rho$ on $\mathcal{H}_A\otimes \mathcal{H}_B$ and any traceless operator $O_B$ on $\mathcal{H}_B$, the following inequality holds
\begin{equation}\label{Eq:EV_IAOB_VrhoV}
    \mathbb{E}_{V} \left\|\tr_B\left((I_A\otimes O_B)V \rho V^\dagger \right)\right\|_2 
    \leq \| O_B \|_{\infty} \sqrt{ \frac{1}{d_B} }.
\end{equation}
\end{lemma}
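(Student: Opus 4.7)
The plan is to reduce the claim to the exact second-moment identity already established in Corollary~\ref{corollary:EV_IAtracelessOBVrhoV}, and then apply two elementary estimates: concavity of the square root and comparison of Schatten norms.

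First, I would apply Jensen's inequality (Lemma~\ref{lemma:jensen_inequality}) to the concave function $\sqrt{\cdot}$ to pass from the $1$-power expectation to the second-moment expectation:
\begin{equation}
\mathbb{E}_{V}\!\left\|\tr_B\bigl((I_A\otimes O_B)V\rho V^\dagger\bigr)\right\|_2
\;\leq\; \sqrt{\,\mathbb{E}_{V}\!\left\|\tr_B\bigl((I_A\otimes O_B)V\rho V^\dagger\bigr)\right\|_2^{2}\,}.
\end{equation}
Since $O_B$ is assumed traceless and $\rho$ is a density matrix, Corollary~\ref{corollary:EV_IAtracelessOBVrhoV} applies directly to the expression under the square root, giving the closed form
\begin{equation}
\mathbb{E}_{V}\!\left\|\tr_B\bigl((I_A\otimes O_B)V\rho V^\dagger\bigr)\right\|_2^{2}
\;=\;\frac{d_A^{2}\,\|O_B\|_2^{2}}{d^{2}-1}\!\left(\tr(\rho^{2})-\frac{1}{d}\right).
\end{equation}

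Next I would bound the two structural factors. For the state factor, using $\tr(\rho^{2})\leq 1$ yields $\tr(\rho^{2})-1/d\leq (d-1)/d$, which telescopes nicely with $d^{2}-1=(d-1)(d+1)$ to produce $d_A^{2}/\bigl(d(d+1)\bigr)$. For the operator factor, the standard Schatten-norm comparison $\|O_B\|_2^{2}\leq d_B\|O_B\|_\infty^{2}$ (or equivalently the fact that the Hilbert-Schmidt norm is bounded by the operator norm times the square root of the dimension) replaces $\|O_B\|_2^{2}$ by $d_B\|O_B\|_\infty^{2}$. Substituting $d=d_Ad_B$ then gives
\begin{equation}
\frac{d_A^{2}\,d_B\,\|O_B\|_\infty^{2}}{d(d+1)}
\;=\;\frac{d_A\,\|O_B\|_\infty^{2}}{d_Ad_B+1}
\;\leq\;\frac{\|O_B\|_\infty^{2}}{d_B},
\end{equation}
where the last inequality is just $d_Ad_B\leq d_Ad_B+1$. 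Taking the square root recovers the claimed bound.

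I do not expect any real obstacle: the content is already packaged in Corollary~\ref{corollary:EV_IAtracelessOBVrhoV}, and everything else is bookkeeping. The only place requiring a bit of care is keeping the $-1/d$ term from the variance expression, rather than crudely bounding $\tr(\rho^2)\leq 1$ immediately in the numerator of $d_A^2\|O_B\|_2^2/(d^2-1)$; dropping the $-1/d$ too early would give a weaker inequality $\|O_B\|_\infty^2/d_B\cdot\bigl(d/(d-1)\text{-like factor}\bigr)$ that does not match the stated constant. Apart from this minor arithmetic subtlety, the derivation is straightforward.
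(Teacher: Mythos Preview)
Your proposal is correct and follows essentially the same route as the paper: Jensen to pass to the second moment, Corollary~\ref{corollary:EV_IAtracelessOBVrhoV} for the exact value, then $\tr(\rho^2)\leq 1$ and $\|O_B\|_2^2\leq d_B\|O_B\|_\infty^2$ to finish. The paper merely orders the last two bounds differently (first simplifying $d_A^2/(d(d+1))\leq 1/d_B^2$ and then invoking the Schatten comparison), and your observation about retaining the $-1/d$ term is exactly the point that makes the constant come out right.
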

\begin{proof}
According to the concavity of the square root function and Jensen's inequality in Lemma \ref{lemma:jensen_inequality}, we have
\begin{equation}\label{Eq:EV_IAOB_VrhoV_1to2}
    \mathbb{E}_{V} \left\|\tr_B\left((I_A\otimes O_B) V\rho V^\dagger \right)\right\|_2
    \leq \sqrt{ \mathbb{E}_{V} \left[ \left\|\tr_B\left((I_A\otimes O_B) V\rho V^\dagger \right)\right\|_2^2 \right] }.
\end{equation}
Using Corollary \ref{corollary:EV_IAtracelessOBVrhoV}, the expectation under the square root in (\ref{Eq:EV_IAOB_VrhoV_1to2}) can be exactly calculated as
\begin{equation}\label{Eq:EV_IAOB_VrhoV_22}
    \begin{aligned}
        \mathbb{E}_{V} \left[ \left\|\tr_B\left((I_A\otimes O_B) V\rho V^\dagger \right)\right\|_2^2 \right]
        = \frac{d_A^2 \|O_B\|^2_2}{d^2-1} \left( \tr(\rho^2) -\frac{1}{d} \right),
    \end{aligned}
\end{equation}
By the upper bound of the purity $\tr(\rho^2)\leq 1$, (\ref{Eq:EV_IAOB_VrhoV_22}) could be further relaxed to
\begin{equation}
    \begin{aligned}
        \mathbb{E}_{V} \left[ \left\|\tr_B\left((I_A\otimes O_B) V\rho V^\dagger \right)\right\|_2^2 \right]
        \leq \frac{d_A^2 \|O_B\|^2_2}{d^2-1} \left( 1 -\frac{1}{d} \right)
        = \frac{d_A^2 \|O_B\|^2_2}{d(d+1)} \leq \frac{\|O_B\|_2^2}{d_B^2}.
    \end{aligned}
\end{equation}
Considering $\|O_B\|_2 \leq \sqrt{d_B} \|O_B\|_{\infty}$, we further obtain
\begin{equation}\label{Eq:EV_IAOB_infty}
  \mathbb{E}_{V} \left[ \left\|\tr_B\left((I_A\otimes O_B) V\rho V^\dagger \right)\right\|_2^2 \right]
  \leq \frac{\|O_B\|_{\infty}^2}{d_B}.
\end{equation}
Combining (\ref{Eq:EV_IAOB_VrhoV_1to2}) and (\ref{Eq:EV_IAOB_infty}), we arrive at (\ref{Eq:EV_IAOB_VrhoV}).
\end{proof}

\begin{lemma}\label{lemma:tracelessOA_OB_sqrtdAdB}
Suppose $V\in\mathbb{V}$ is a unitary on the Hilbert space $\mathcal{H}_A\otimes \mathcal{H}_B$ with $\dim{(\mathcal{H}_A)}=d_A$ and $\dim{(\mathcal{H}_B)}=d_B$ where $\mathbb{V}$ is a unitary $2$-design. Let $O_A$ be an arbitrary traceless operator on $\mathcal{H}_A$ and $O_B$ be either an arbitrary traceless operator or a homothety $c I_B$ on $\mathcal{H}_B$, where $I_B$ is the identity operator on $\mathcal{H}_B$ and $c\in\mathbb{C}$ is an arbitrary complex number. Denote $U_A\in\mathcal{U}(d_A)$ as a unitary operator on $\mathcal{H}_A$. For any density matrix $\rho$ on $\mathcal{H}_A\otimes \mathcal{H}_B$, the following inequality holds
\begin{equation}\label{Eq:tracelessOA_OB_sqrtdAdB}
    \mathbb{E}_{V} \left[ \max_{U_A} \left|
    \tr \left[ (O_A \otimes O_B) ( U_A \otimes I_B ) V \rho V^\dagger ( U_A^\dagger \otimes I_B ) \right] \right|
    \right] 
    \leq \|O_A\|_{\infty} \|O_B\|_{\infty} \sqrt{\frac{ d_A }{ d_B }}.
\end{equation}
\end{lemma}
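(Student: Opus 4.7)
The plan is to reduce the quantity inside the expectation to a trace on $\mathcal{H}_A$ only, pull the maximization over $U_A$ through using Hölder's inequality, convert the resulting Schatten $1$-norm into a Schatten $2$-norm at the cost of a $\sqrt{d_A}$ factor, and then use Jensen's inequality together with the $2$-design second-moment estimates already derived in Lemma~\ref{lemma:1-norm_to_maxmixed} and Lemma~\ref{lemma:EV_IAOB_VrhoV}. Two cases must be treated separately depending on whether $O_B$ is traceless or a homothety, because the lemmas available for these two situations are different.

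First I would cycle the $U_A$ factors inside the trace and take the partial trace over $B$ to write
\begin{equation}
\tr\bigl[(O_A\otimes O_B)(U_A\otimes I_B) V\rho V^\dagger (U_A^\dagger\otimes I_B)\bigr]
= \tr_A\bigl[U_A^\dagger O_A U_A\, N(V)\bigr],
\qquad N(V) := \tr_B\bigl[(I_A\otimes O_B) V\rho V^\dagger\bigr].
\end{equation}
Hölder's inequality (Lemma~\ref{lemma:holder}) with the pair $(\infty,1)$ gives the unitary-invariant bound
\begin{equation}
\max_{U_A}\bigl|\tr_A[U_A^\dagger O_A U_A N(V)]\bigr|
\leq \|O_A\|_\infty\,\|N(V)\|_1,
\end{equation}
which already removes the $\max_{U_A}$. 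Converting to the $2$-norm via $\|N(V)\|_1\leq \sqrt{d_A}\|N(V)\|_2$ and then pulling the expectation inside a square root by Jensen's inequality (Lemma~\ref{lemma:jensen_inequality}) yields
\begin{equation}
\mathbb{E}_V\Bigl[\max_{U_A}|\cdot|\Bigr]
\leq \|O_A\|_\infty\sqrt{d_A}\,\sqrt{\mathbb{E}_V\|N(V)\|_2^{\,2}}.
\end{equation}

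If $O_B$ is traceless, Lemma~\ref{lemma:EV_IAOB_VrhoV} (and the stronger squared version in its proof) directly bounds $\mathbb{E}_V\|N(V)\|_2^{\,2}\leq \|O_B\|_\infty^{\,2}/d_B$, and the desired inequality follows. The main obstacle is the homothety case $O_B=cI_B$, because then $N(V)=c\rho_A$ is a scaled density matrix with $\|N(V)\|_1=|c|$, and the naive bound above only gives $\|O_A\|_\infty\|O_B\|_\infty$, which is too weak. The fix is to exploit the hypothesis that $O_A$ is traceless: since $\tr(O_A)=0$, we may subtract the maximally mixed contribution without changing the trace, obtaining
\begin{equation}
\tr_A[O_A U_A \rho_A U_A^\dagger]
= \tr_A\Bigl[O_A U_A\Bigl(\rho_A-\tfrac{I_A}{d_A}\Bigr) U_A^\dagger\Bigr],
\end{equation}
and then Hölder's inequality gives $|\cdot|\leq \|O_A\|_\infty\|\rho_A-I_A/d_A\|_1\leq \|O_A\|_\infty\sqrt{d_A}\,\|\rho_A-I_A/d_A\|_2$. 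Lemma~\ref{lemma:1-norm_to_maxmixed} then supplies $\mathbb{E}_V\|\rho_A-I_A/d_A\|_2\leq 1/\sqrt{d_B}$, and recalling $\|O_B\|_\infty=|c|$ we recover the stated bound $\|O_A\|_\infty\|O_B\|_\infty\sqrt{d_A/d_B}$. Combining the two cases completes the proof.
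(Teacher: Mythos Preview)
Your proposal is correct and follows essentially the same approach as the paper: rewrite the trace as $\tr_A[U_A^\dagger O_A U_A\, N(V)]$, eliminate the $\max_{U_A}$ via H\"older, and handle the traceless-$O_B$ and homothety cases separately using Lemma~\ref{lemma:EV_IAOB_VrhoV} and Lemma~\ref{lemma:1-norm_to_maxmixed} respectively, with the key trick of subtracting $I_A/d_A$ in the homothety case enabled by $\tr(O_A)=0$. The only cosmetic difference is that the paper applies H\"older with the $(2,2)$ pair directly (bounding $\|U_A^\dagger O_A U_A\|_2\leq\sqrt{d_A}\|O_A\|_\infty$), whereas you use the $(\infty,1)$ pair followed by $\|\cdot\|_1\leq\sqrt{d_A}\|\cdot\|_2$; both routes land on the same bound $\sqrt{d_A}\,\|O_A\|_\infty\|N(V)\|_2$.
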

\begin{proof}
The trace expression on the left hand side of (\ref{Eq:tracelessOA_OB_sqrtdAdB}) can be rewritten as
\begin{equation}\label{Eq:tracelessOA_OB_rewritten}
    \begin{aligned}
        \tr \left[ (O_A \otimes O_B) ( U_A \otimes I_B ) V \rho V^\dagger ( U_A^\dagger \otimes I_B ) \right] 
        = \tr \left[ (U_A^\dagger O_A U_A) \tr_B \left( (I_A \otimes O_B) V \rho V^\dagger \right) \right].
    \end{aligned}
\end{equation}
On the one hand, if $O_B$ is traceless, by using Hölder's inequality in Lemma \ref{lemma:holder}, we obtain
\begin{equation}\label{Eq:tracelessOB_holder}
    \begin{aligned}
        \left| \tr \left[ (U_A^\dagger O_A U_A) \tr_B \left( (I_A \otimes O_B) V \rho V^\dagger \right) \right] \right|
        & \leq \left\| U_A^\dagger O_A U_A \right\|_{2} \left\| \tr_B \left( (I_A \otimes O_B) V \rho V^\dagger \right) \right\|_{2} \\
        & \leq \sqrt{d_A} \left\| O_A \right\|_{\infty} \left\| \tr_B \left( (I_A \otimes O_B) V \rho V^\dagger \right) \right\|_{2},
    \end{aligned}
\end{equation}
where we have used the unitary invariance of the Schatten norms to eliminate $U_A$. Since (\ref{Eq:tracelessOB_holder}) holds for any $U_A$, it certainly holds when taking the maximum, i.e. 
\begin{equation}\label{Eq:tracelessOB_holder_maxUA}
    \begin{aligned}
        \max_{U_A} \left| \tr \left[ (U_A^\dagger O_A U_A) \tr_B \left( (I_A \otimes O_B) V \rho V^\dagger \right) \right] \right|
        \leq \sqrt{d_A} \left\| O_A \right\|_{\infty} \left\| \tr_B \left( (I_A \otimes O_B) V \rho V^\dagger \right) \right\|_{2}.
    \end{aligned}
\end{equation}
Together with Lemma \ref{lemma:EV_IAOB_VrhoV}, we arrive at
\begin{equation}\label{Eq:tracelessOB_sqrtdAdB}
    \begin{aligned}
        & \mathbb{E}_{V} \left[ \max_{U_A} \left|
        \tr \left[ (O_A \otimes O_B) ( U_A \otimes I_B ) V \rho V^\dagger ( U_A^\dagger \otimes I_B ) \right] \right| \right] \\
        & \leq \sqrt{d_A} \left\| O_A \right\|_{\infty} \mathbb{E}_V \left\| \tr_B \left( (I_A \otimes O_B) V \rho V^\dagger \right) \right\|_{2}
        \leq \|O_A\|_{\infty} \|O_B\|_{\infty} \sqrt{\frac{d_A}{d_B}}.
    \end{aligned}
\end{equation}
On the other hand, if $O_B = c I_B$, the right hand side of (\ref{Eq:tracelessOA_OB_rewritten}) can be further rewritten as
\begin{equation}\label{Eq:homothetyOB_rewritten}
    \begin{aligned}
        \tr \left[ (U_A^\dagger O_A U_A) \tr_B \left( (I_A \otimes O_B) V \rho V^\dagger \right) \right] 
        = c\cdot \tr \left[ U_A^\dagger O_A U_A \rho_A \right]
        = c\cdot \tr \left[ U_A^\dagger O_A U_A \left(\rho_A - \frac{I_A}{d_A} \right) \right].
    \end{aligned}
\end{equation}
where we have used the traceless condition of $O_A$ and $\rho_A=\tr_B(V \rho V^\dagger)$ is the reduced density matrix on $\mathcal{H}_A$ from $V \rho V^\dagger$. Again, by Hölder's inequality in Lemma \ref{lemma:holder}, we obtain
\begin{equation}\label{Eq:homothetyOB_holder}
    \begin{aligned}
        \left| c \cdot \tr \left[ U_A^\dagger O_A U_A \left(\rho_A - \frac{I_A}{d_A} \right) \right] \right|
        & \leq |c| \left\| U_A^\dagger O_A U_A \right\|_{2} \left\| \rho_A - \frac{I_A}{d_A} \right\|_2 \\
        & \leq \sqrt{d_A} \left\| O_B \right\|_{\infty} \left\| O_A \right\|_{\infty} \left\| \rho_A - \frac{I_A}{d_A} \right\|_2.
    \end{aligned}
\end{equation}
Since (\ref{Eq:homothetyOB_holder}) holds for any $U_A$, it certainly holds when taking the maximum. Together with (\ref{Eq:tracelessOA_OB_rewritten}), (\ref{Eq:homothetyOB_rewritten}) and Lemma \ref{lemma:1-norm_to_maxmixed}, we arrive at
\begin{equation}\label{Eq:homothetyOB_sqrtdAdB}
    \begin{aligned}
        & \mathbb{E}_{V} \left[ \max_{U_A} \left|
        \tr \left[ (O_A \otimes O_B) ( U_A \otimes I_B ) V \rho V^\dagger ( U_A^\dagger \otimes I_B ) \right] \right| \right] \\
        & \leq \sqrt{d_A} \left\| O_B \right\|_{\infty} \left\| O_A \right\|_{\infty} \mathbb{E}_V \left\| \rho_A - \frac{I_A}{d_A} \right\|_2
        \leq \left\| O_B \right\|_{\infty} \left\| O_A \right\|_{\infty} \sqrt{\frac{d_A}{d_B}}.
    \end{aligned}
\end{equation}
Combining (\ref{Eq:tracelessOB_sqrtdAdB}) and (\ref{Eq:homothetyOB_sqrtdAdB}), we know that (\ref{Eq:tracelessOA_OB_sqrtdAdB}) holds whether $O_B$ is traceless or $O_B=c I_B$, $c\in\mathbb{C}$.
\end{proof}

\begin{lemma}\label{lemma:OAOBVtracelessHV_sqrtdAdB}
Suppose $V\in\mathbb{V}$ is a unitary on the Hilbert space $\mathcal{H}_A\otimes \mathcal{H}_B$ with $\dim{(\mathcal{H}_A)}=d_A$ and $\dim{(\mathcal{H}_B)}=d_B$ where $\mathbb{V}$ is a unitary $2$-design. Let $O_A$,$O_B$ be arbitrary linear operators on $\mathcal{H}_A$,$\mathcal{H}_B$, respectively. Denote $U_A\in\mathcal{U}(d_A)$ as a unitary operator on $\mathcal{H}_A$. For any traceless matrix $H$ on $\mathcal{H}_A\otimes \mathcal{H}_B$, the following inequality holds
\begin{equation}\label{Eq:OAOBVtracelessHV_sqrtdAdB}
    \mathbb{E}_{V} \left\| \tr \left( (O_A \otimes O_B) V H V^\dagger \right) \right\|_2  
    \leq \|O_A\|_{2} \|O_B\|_{2} \|H\|_{\infty} \sqrt{\frac{d_A}{d-1}},
\end{equation}
where $d=d_Ad_B$ denotes the dimension of the whole Hilbert space $\mathcal{H}_A\otimes \mathcal{H}_B$.
\end{lemma}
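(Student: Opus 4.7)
The plan is to reduce the claim to a direct application of Corollary~\ref{corollary:EV_2-norm_OAOBtracelessP} (the $2$-design second-moment identity with a traceless input $P$), after first squaring the quantity inside the expectation. Note that the expression $\tr((O_A\otimes O_B)VHV^\dagger)$ on the left-hand side should be read as the partial trace $\tr_B((O_A\otimes O_B)VHV^\dagger)$, since otherwise the outer Schatten-$2$ norm is applied to a scalar; with this reading the target inequality is an $A$-side matrix bound, which is exactly the format handled by Corollary~\ref{corollary:EV_2-norm_OAOBtracelessP}.

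First, I would use concavity of $\sqrt{\cdot}$ together with Jensen's inequality (Lemma~\ref{lemma:jensen_inequality}) to obtain
\begin{equation}
    \mathbb{E}_V\bigl\|\tr_B\bigl((O_A\otimes O_B)VHV^\dagger\bigr)\bigr\|_2
    \;\leq\; \sqrt{\mathbb{E}_V\bigl\|\tr_B\bigl((O_A\otimes O_B)VHV^\dagger\bigr)\bigr\|_2^{\,2}}.
\end{equation}
Because $H$ is traceless by hypothesis, Corollary~\ref{corollary:EV_2-norm_OAOBtracelessP} applies with $P = H$ and yields an \emph{exact} formula for the quantity under the square root, namely
\begin{equation}
    \mathbb{E}_V\bigl\|\tr_B\bigl((O_A\otimes O_B)VHV^\dagger\bigr)\bigr\|_2^{\,2}
    \;=\; \frac{\|O_A\|_2^{2}\,\|H\|_2^{2}}{d^2-1}\left[\,d_A\|O_B\|_2^{2} - \frac{|\tr O_B|^{2}}{d}\right].
\end{equation}

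Next, I would make two straightforward relaxations. Since $|\tr O_B|^2 \geq 0$, the bracketed expression is bounded above by $d_A\|O_B\|_2^2$. Moreover, $\|H\|_2^2 = \sum_i \sigma_i(H)^2 \leq d\,\|H\|_\infty^2$, and the elementary identity $d/(d^2-1) = d/((d-1)(d+1)) \leq 1/(d-1)$ absorbs the remaining factors. Putting these together gives
\begin{equation}
    \mathbb{E}_V\bigl\|\tr_B\bigl((O_A\otimes O_B)VHV^\dagger\bigr)\bigr\|_2^{\,2}
    \;\leq\; \frac{d_A\,d\,\|O_A\|_2^{2}\,\|O_B\|_2^{2}\,\|H\|_\infty^{2}}{d^2-1}
    \;\leq\; \frac{d_A\,\|O_A\|_2^{2}\,\|O_B\|_2^{2}\,\|H\|_\infty^{2}}{d-1},
\end{equation}
and taking the square root delivers the claimed inequality (\ref{Eq:OAOBVtracelessHV_sqrtdAdB}).

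There is no real obstacle here: once the partial-trace reading is fixed, the proof is essentially one line of Jensen followed by a direct substitution into Corollary~\ref{corollary:EV_2-norm_OAOBtracelessP}. The only point that requires a moment's care is the conversion from the Schatten-$2$ norm of $H$ to its operator norm, which costs a factor of $\sqrt{d}$ and is what fixes the final dependence $\sqrt{d_A/(d-1)}$ rather than the slightly tighter $\sqrt{d_A d/(d^2-1)}$ that the exact formula would give.
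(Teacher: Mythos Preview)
Your proposal is correct and follows essentially the same route as the paper: Jensen's inequality to pass to the squared norm, then Corollary~\ref{corollary:EV_2-norm_OAOBtracelessP} with $P=H$ for the exact second-moment formula, followed by dropping the $|\tr O_B|^2/d$ term and converting $\|H\|_2$ to $\sqrt{d}\,\|H\|_\infty$. Your observation that the ``$\tr$'' in the statement should be read as ``$\tr_B$'' is also consistent with how the paper actually uses and proves this lemma.
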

\begin{proof}
According to the concavity of the square root function and Jensen's inequality in Lemma \ref{lemma:jensen_inequality}, we have
\begin{equation}\label{Eq:OAOBtracelessH_1to2}
    \mathbb{E}_{V} \left\| \tr_B \left( (O_A\otimes O_B) V H V^\dagger \right)\right\|_2
    \leq \sqrt{ \mathbb{E}_{V} \left[ \left\| \tr_B \left( (O_A\otimes O_B) V H V^\dagger \right) \right\|^2_2 \right]}.
\end{equation}
Using Corollary \ref{corollary:EV_2-norm_OAOBtracelessP}, the expectation under the square root in (\ref{Eq:OAOBtracelessH_1to2}) can be exactly calculated as
\begin{equation}\label{Eq:OAOBtracelessH_integral}
    \mathbb{E}_{V} \left[ \left\| \tr_B \left( (O_A\otimes O_B) V H V^\dagger \right) \right\|^2_2 \right]
    = \frac{\|O_A\|_2^2 \|H\|_2^2}{d^2-1} \left[ d_A \|O_B\|_2^2 - \frac{|\tr O_B|^2}{d} \right].
\end{equation}
Combining (\ref{Eq:OAOBtracelessH_1to2}), (\ref{Eq:OAOBtracelessH_integral}) and $\|H\|_2\leq \sqrt{d} \|H\|_{\infty}$, we arrive at
\begin{equation}
    \begin{aligned}
        \mathbb{E}_{V} \left\|\tr_B \left( (O_A\otimes O_B) V H V^\dagger \right) \right\|_2 
        & \leq \sqrt{\frac{1}{d^2 - 1}} \|O_A\|_2 \|H\|_2 \sqrt{ d_A \|O_B\|_2^2 - \frac{|\tr O_B|^2 }{d} } \\
        & \leq \sqrt{\frac{d_A}{d^2-1}} \|O_A\|_2 \|O_B\|_2 \|H\|_2 
        \leq \sqrt{\frac{d_A}{d-1}} \|O_A\|_2 \|O_B\|_2 \|H\|_{\infty},
    \end{aligned}
\end{equation}
which is exactly the same as (\ref{Eq:OAOBVtracelessHV_sqrtdAdB}).
\end{proof}

\begin{lemma}\label{lemma:luo-V1UA-H}
{\rm (Local unitary behind $2$-design circuit)}
Suppose $V\in\mathbb{V}$ is a unitary on the Hilbert space $\mathcal{H}_A\otimes \mathcal{H}_B$ with $\dim{(\mathcal{H}_A)}=d_A$ and $\dim{(\mathcal{H}_B)}=d_B$ where $\mathbb{V}$ is a unitary $2$-design. Denote $U_A\in\mathcal{U}(d_A)$ as a unitary operator on $\mathcal{H}_A$. For any density matrix $\rho$ and any traceless Hermitian operator $H$ on $\mathcal{H}_A \otimes \mathcal{H}_B$, the following inequality holds
\begin{equation}\label{Eq:luo-V1UA-H}
    \mathbb{E}_{V} \left[ \max_{U_A} \left[
    \tr \left( H ( U_A \otimes I_B ) V \rho V^\dagger ( U_A^\dagger \otimes I_B )\right) \right] \right]
    \leq \|H\|_{\infty} ( 2 d^2_A - 1) \sqrt{ \frac{ d_A }{ d_B } },
\end{equation}
\end{lemma}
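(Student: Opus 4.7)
My strategy is to split $H$ along its bipartite structure, handle each piece with Lemma~\ref{lemma:tracelessOA_OB_sqrtdAdB}, and exploit the fact that the part of $H$ acting only on $B$ contributes nothing in expectation. Write
\[
H = \underbrace{\tr_B(H) \otimes \frac{I_B}{d_B}}_{H_I} + \underbrace{\frac{I_A}{d_A} \otimes \tr_A(H)}_{H_{II}} + H_{III},
\]
where $H_{III} := H - H_I - H_{II}$. A short calculation using $\tr H=0$ shows $\tr_A H_{III} = \tr_B H_{III} = 0$, so expanding $H_{III}$ in a Pauli-like orthonormal basis $\{\hat\sigma_j^A \otimes \hat\sigma_k^B\}$ (with $\|\hat\sigma_j^{A,B}\|_\infty = 1$ and $\tr\hat\sigma_j^{A,B} = 0$ for $j\geq 1$) forces the $j=0$ and $k=0$ coefficients to vanish and yields
\[
H_{III} = \sum_{j=1}^{d_A^2 - 1} \hat\sigma_j^A \otimes O_j^B,
\]
with each $O_j^B$ traceless.

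Set $X_k(V, U_A) := \tr[H_k (U_A \otimes I_B) V\rho V^\dagger (U_A^\dagger \otimes I_B)]$ for $k\in\{I,II,III\}$. Because $H_{II}$ acts as the identity on $A$, the two $U_A$'s cancel under cyclicity, so $X_{II}$ is independent of $U_A$; and since a 2-design is in particular a 1-design, Lemma~\ref{lemma:integral_VAV=trAd} gives $\mathbb{E}_V [V\rho V^\dagger] = I/d$, whence $\mathbb{E}_V X_{II} = \tr(H_{II})/d = 0$. Applying the trivial inequality $\max_{U_A}(X_I + X_{II} + X_{III}) \leq X_{II} + \max_{U_A} X_I + \max_{U_A} X_{III}$ and taking $\mathbb{E}_V$, the middle term drops and we are reduced to estimating $\mathbb{E}_V \max_{U_A} X_I$ and $\mathbb{E}_V \max_{U_A} X_{III}$. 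The first is immediate from Lemma~\ref{lemma:tracelessOA_OB_sqrtdAdB} applied with $O_A = \tr_B H$ (traceless because $\tr H=0$) and $O_B = I_B/d_B$ (a homothety); combined with $\|\tr_B H\|_\infty \leq d_B \|H\|_\infty$ from Lemma~\ref{lemma:partial_trace_norm}, this yields $\mathbb{E}_V \max_{U_A} X_I \leq \|H\|_\infty \sqrt{d_A/d_B}$.

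The main work lies in the $X_{III}$ term. I would pass the maximum through the $j$-sum using the triangle inequality, then invoke Lemma~\ref{lemma:tracelessOA_OB_sqrtdAdB} once per $j$ with $O_A = \hat\sigma_j^A$ and $O_B = O_j^B$ both traceless, giving $\mathbb{E}_V \max_{U_A} X_{III} \leq \sqrt{d_A/d_B} \sum_{j=1}^{d_A^2-1} \|O_j^B\|_\infty$. The step I expect to be the main obstacle is controlling $\|O_j^B\|_\infty$ in terms of $\|H\|_\infty$. Using the explicit formula $O_j^B = \tr_A(\hat\sigma_j^A H)/d_A - [\tr(\hat\sigma_j^A \tr_B H)/d]\, I_B$, the first summand has operator norm at most $\|H\|_\infty$ via partial-trace monotonicity (Lemma~\ref{lemma:partial_trace_norm}) combined with submultiplicativity of the operator norm, and the coefficient of the identity is also bounded by $\|H\|_\infty$ via H\"older's inequality with $\|\hat\sigma_j^A\|_\infty \leq 1$ and $\|\tr_B H\|_1 \leq \|H\|_1 \leq d\|H\|_\infty$; the triangle inequality then gives $\|O_j^B\|_\infty \leq 2\|H\|_\infty$, the factor $2$ being essentially unavoidable for generic $H$. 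Summing over the $d_A^2 - 1$ values of $j$ produces $2(d_A^2-1)\|H\|_\infty \sqrt{d_A/d_B}$, and adding the $X_I$ contribution delivers the claimed $(2 d_A^2 - 1)\|H\|_\infty \sqrt{d_A/d_B}$.
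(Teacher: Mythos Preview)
Your proof is correct and follows essentially the same route as the paper's: the same three-term decomposition of $H$, the vanishing of the $H_{II}$ contribution via the $1$-design property, termwise application of Lemma~\ref{lemma:tracelessOA_OB_sqrtdAdB} to $H_I$ and the expanded $H_{III}$, and the bound $\|O_j^B\|_\infty \leq 2\|H\|_\infty$ (the paper uses clock-and-shift matrices rather than Pauli strings so that $\|\hat\sigma_j^A\|_\infty=1$ holds for general $d_A$, but this is a cosmetic difference).
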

\begin{proof}
Any traceless Hermitian operator $H$ could be expanded as
\begin{subequations}
    \begin{align}
        & H = H^A + H^B + H^{AB}, \label{Eq:H_decompose}\\
        & H^A := \tr_B(H) \otimes \frac{I_B}{d_B}, \label{Eq:H_decompose_HA}\\
        & H^B := \frac{I_A}{d_A} \otimes \tr_A(H), \label{Eq:H_decompose_HB}\\
        & H^{AB} := H - H^A - H^B, \label{Eq:H_decompose_HAB}
    \end{align}
\end{subequations}
where $H^{A}$, $H^{B}$ only act on $\mathcal{H}_A$, $\mathcal{H}_B$ non-trivially, respectively. $H^{AB}$ acts on $\mathcal{H}_A$ and $\mathcal{H}_B$ both non-trivially. Here a linear operator acting $H^{A}$($H^{B}$) non-trivially means that the operator can not be decomposed to the tensor product form of $I_A\otimes Q_B$($Q_A\otimes I_B$) where $Q_B$($Q_A$) is an arbitrary operator on $\mathcal{H}_B$($\mathcal{H}_A$). Denote $\{\Lambda_j^A\}_{j=0}^{d_A^2-1}$ is the set of clock-and-shift matrices~\cite{Singh2018} on $\mathcal{H}_A$ which is an orthogonal basis in the linear operator space with respect to the Hilbert-Schmidt inner product. $\Lambda_j^A$ are all unitary and hence $\|\Lambda_j^A\|_{\infty}=1$. We assume $\Lambda_0^A=I_A$ without loss of generality. Then $\Lambda_j^A$ are all traceless except $\Lambda_0^A$. Thus, $H^{AB}$ could be further expanded in terms of $\Lambda_j^A$ as
\begin{equation}\label{Eq:HAB_decompose_OB}
    H^{AB} =  \sum_{j=1}^{d_A^2-1} \Lambda_j^A\otimes O_j^B.
\end{equation}
where the explicit expression of $O_j^B$ could be derived from (\ref{Eq:H_decompose_HAB}) as
\begin{equation}
    \begin{aligned}
        O_j^B & = \frac{1}{d_A} \tr_A \left( \left( \Lambda^{A\dagger}_j \otimes I_B \right) H^{AB} \right) \\
        & = \frac{1}{d_A}\tr_A((\Lambda_j^{A\dagger}\otimes I_B)H) - \frac{1}{d_A} \tr_A[(\Lambda_j^{A\dagger} \otimes I_B) H^A] \\
        & = \frac{1}{d_A} \tr_A((\Lambda_j^{A\dagger}\otimes I_B)H) - \frac{1}{d_Ad_B} \tr_A (\Lambda_j^{A\dagger} \tr_B(H)) \otimes I_B.
    \end{aligned}
\end{equation}
By definition, $O_j^B$ are all traceless.  Combining (\ref{Eq:H_decompose}) and (\ref{Eq:HAB_decompose_OB}), we expand $H$ as a summation of bipartite tensor product operators. Next, we will take the maximum for each term in the summation to obtain the desired bound, i.e.
\begin{subequations}
    \begin{align}
        & \mathbb{E}_V \left[ \max_{U_A} \left[
        \tr \left( H ( U_A \otimes I_B ) V \rho V^\dagger ( U_A^\dagger \otimes I_B )\right) \right] \right] \\
        \leq & \mathbb{E}_V \left[ \max_{U_A} \left[
        \tr \left( H^A ( U_A \otimes I_B ) V \rho V^\dagger ( U_A^\dagger \otimes I_B )\right) \right] \right] \label{Eq:max_H_A}\\
        & + \mathbb{E}_V \left[ \max_{U_A} \left[
        \tr \left( H^B ( U_A \otimes I_B ) V \rho V^\dagger ( U_A^\dagger \otimes I_B )\right) \right] \right] \label{Eq:max_H_B}\\
        & + \sum_{j=1}^{d_A^2-1} \mathbb{E}_V \left[ \max_{U_A} \left|
        \tr \left( (\Lambda_j^A \otimes O^B_j) ( U_A \otimes I_B ) V \rho V^\dagger ( U_A^\dagger \otimes I_B )\right) \right| \right]. \label{Eq:sum_max_OB}
    \end{align}
\end{subequations}
For (\ref{Eq:max_H_A}) involving $H^A$ from (\ref{Eq:H_decompose_HA}), Lemma \ref{lemma:tracelessOA_OB_sqrtdAdB} together with $\|\tr_B(H)\|_{\infty}\leq d_B\|H\|_{\infty}$ from Lemma \ref{lemma:partial_trace_norm} gives
\begin{equation}\label{Eq:EV_max_HA_relax}
    \begin{aligned}
        \mathbb{E}_{V} \left[ \max_{U_A} \left[ \tr \left( \left( \tr_B(H) \otimes\frac{I_B}{d_B} \right) (U_A\otimes I_B) V \rho V^\dagger (U_A^\dagger\otimes I_B) \right) \right] \right]
        \leq \frac{ \|\tr_B(H)\|_{\infty} }{d_B} \sqrt{\frac{d_A}{d_B}}
        \leq \|H\|_{\infty} \sqrt{\frac{d_A}{d_B}}.
    \end{aligned}
\end{equation}
For (\ref{Eq:max_H_B}) involving $H^B$ from (\ref{Eq:H_decompose_HB}), Lemma \ref{lemma:integral_VAV=trAd} together with the given condition $\tr(H)=0$ gives
\begin{equation}\label{Eq:EV_HB_eq0}
    \begin{aligned}
        & \mathbb{E}_V \left[ \max_{U_A} \left[ \tr \left( (\frac{I_A}{d_A} \otimes \tr_A H) ( U_A \otimes I_B ) V \rho V^\dagger ( U_A^\dagger \otimes I_B )\right) \right] \right] \\
        & = \mathbb{E}_V \left[\tr \left( (\frac{I_A}{d_A} \otimes \tr_A H) V \rho V^\dagger \right) \right] 
        = \frac{\tr(\rho)}{d} \tr(H) = 0.
    \end{aligned}
\end{equation}
For each term in (\ref{Eq:sum_max_OB}) involving $O^B_j$ from (\ref{Eq:H_decompose_HAB}) and (\ref{Eq:HAB_decompose_OB}), Lemma \ref{lemma:tracelessOA_OB_sqrtdAdB} gives
\begin{equation}\label{Eq:EV_OB_norm)}
    \begin{aligned}
        \mathbb{E}_{V} \left[ \max_{U_A} \left[ \tr \left( (\Lambda_j^A\otimes O_j^B) (U_A\otimes I_B) V \rho V^\dagger (U_A^\dagger \otimes I_B) \right) \right] \right]
        \leq \|\Lambda_j^A\|_{\infty} \|O_j^B\|_{\infty} \sqrt{\frac{d_A}{d_B}} 
        = \|O_j^B\|_{\infty}\sqrt{\frac{d_A}{d_B}},
    \end{aligned}
\end{equation}
Here $\|O_j^B\|_{\infty}$ can be bounded using Lemma \ref{lemma:partial_trace_norm} as
\begin{equation}\label{Eq:OB_norm_bound}
    \begin{aligned}
        \|O_j^B\|_{\infty}
        &= \left\|\frac{1}{d_A} \tr_A((\Lambda_j^{A \dagger}\otimes I_B)H) - \frac{1}{d_Ad_B} \tr_A (\Lambda_j^{A \dagger} \tr_B(H)) \otimes I_B \right\|_{\infty} \\
        & \leq \frac{1}{d_A} \left\| \tr_A( (\Lambda_j^{A \dagger} \otimes I_B) H)\right\|_{\infty} + \frac{1}{d_Ad_B} \left\| \tr_A (\Lambda_j^{A \dagger} \tr_B(H)) \otimes I_B \right\|_{\infty} \\
        & \leq \left\| (\Lambda_j^{A \dagger} \otimes I_B) H \right\|_{\infty} + \frac{1}{d_B} \left\| \Lambda_j^{A \dagger} \tr_B(H) \right\|_{\infty} \\
        & = \left\| H \right\|_{\infty} + \frac{1}{d_B} \left\| \tr_B(H) \right\|_{\infty}
        \leq \left\| H \right\|_{\infty} + \left\| H \right\|_{\infty} = 2 \left\| H \right\|_{\infty},
    \end{aligned}
\end{equation}
where we have used the unitarity of $\Lambda_j^{A}$ and the unitary invariance of the Schatten norms. (\ref{Eq:EV_OB_norm)}) and (\ref{Eq:OB_norm_bound}) are summarized as
\begin{equation}\label{Eq:EV_OB_H_norm}
    \mathbb{E}_{V} \left[ \max_{U_A} \left| \tr \left( (\Lambda_j^A\otimes O_j^B) (U_A\otimes I_B) V \rho V^\dagger (U_A^\dagger \otimes I_B) \right) \right| \right]
    \leq 2 \|H\|_{\infty}\sqrt{\frac{d_A}{d_B}}.
\end{equation}
Finally, combining (\ref{Eq:EV_max_HA_relax}), (\ref{Eq:EV_HB_eq0}) and (\ref{Eq:EV_OB_H_norm}), we obtain
\begin{equation}\label{Eq:EV_max_H_final}
    \begin{aligned}
        & \mathbb{E}_{V} \left[ \max_{U_A} \left[
        \tr \left( H ( U_A \otimes I_B ) V \rho V^\dagger ( U_A^\dagger \otimes I_B )\right) \right] \right] \\
        & \leq \|H\|_{\infty} \sqrt{\frac{d_A}{d_B}} + (d_A^2-1)\cdot 2 \|H\|_{\infty} \sqrt{\frac{d_A}{d_B}}
        = (2d_A^2 - 1) \|H\|_{\infty}\sqrt{\frac{d_A}{d_B}},
    \end{aligned}
\end{equation}
which is exactly the desired inequality (\ref{Eq:luo-V1UA-H}).
\end{proof}

\begin{lemma}\label{lemma:luo-UAV2-H}
{\rm (Local unitary before $2$-design circuit)}
Suppose $V\in\mathbb{V}$ is a unitary on the Hilbert space $\mathcal{H}_A\otimes \mathcal{H}_B$ with $\dim{(\mathcal{H}_A)}=d_A$ and $\dim{(\mathcal{H}_B)}=d_B$ where $\mathbb{V}$ is a unitary $2$-design. Denote $U_A\in\mathcal{U}(d_A)$ as a unitary operator on $\mathcal{H}_A$. For any density matrix $\rho$ and any traceless Hermitian operator $H$ on $\mathcal{H}_A \otimes \mathcal{H}_B$, the following inequality holds
\begin{equation}\label{Eq:luo-UAV2-H}
    \mathbb{E}_{V} \left[ \max_{U_A} \left[ \tr \left( H V ( U_A \otimes I_B ) \rho ( U_A^\dagger \otimes I_B ) V^\dagger \right) \right] \right]
    \leq \|H\|_{\infty} \frac{d_A^2}{ \sqrt{ d_A d_B - 1 } }.
\end{equation}
\end{lemma}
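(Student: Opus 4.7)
The plan is to first use the cyclicity of the trace to transfer the conjugation by $V$ from the state onto the observable, rewriting the integrand as $\tr(H V(U_A\otimes I_B)\rho(U_A^\dagger\otimes I_B)V^\dagger) = \tr((V^\dagger H V)(U_A\otimes I_B)\rho(U_A^\dagger\otimes I_B))$. This puts the problem into a form dual to that of Lemma~\ref{lemma:luo-V1UA-H}: the random operator now sits on the observable side, so the natural strategy---as indicated by the Methods sketch for the case in which $\mathbb{V}_2$ is the $2$-design---is to apply a Pauli-type decomposition to the fixed state $\rho$ rather than to the observable.

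Next I will split $\rho = I_A/d_A \otimes \tr_A(\rho) + \rho_{\mathrm{rest}}$, where $\rho_{\mathrm{rest}} := \rho - I_A/d_A \otimes \tr_A(\rho)$ satisfies $\tr_A(\rho_{\mathrm{rest}}) = 0$ and therefore admits the clock-and-shift expansion $\rho_{\mathrm{rest}} = \sum_{j=1}^{d_A^2-1}\Lambda_j^A \otimes M_j^B$. Under $(U_A\otimes I_B)$-conjugation the first summand is invariant, so its contribution is independent of $U_A$ and its $V$-expectation vanishes by Lemma~\ref{lemma:integral_VAV=trAd} together with $\tr H = 0$. For the second summand, the contribution from the $j$-th term becomes, after cyclicity, $\tr(U_A \Lambda_j^A U_A^\dagger \cdot K_j(V))$ with $K_j(V) := \tr_B((I_A\otimes M_j^B)(V^\dagger H V))$; H\"older's inequality (Lemma~\ref{lemma:holder}) combined with the unitary invariance of the Schatten $2$-norm bounds this by $\sqrt{d_A}\|K_j(V)\|_2$ after maximizing over $U_A$, and the subadditivity $\max_{U_A}\sum_j \leq \sum_j \max_{U_A}$ reduces the problem to bounding $\mathbb{E}_V\|K_j(V)\|_2$ term by term.

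I will then control $\mathbb{E}_V \|K_j(V)\|_2$ by Jensen's inequality (Lemma~\ref{lemma:jensen_inequality}) and Lemma~\ref{lemma:EV_QVPV} applied with $Q = I_A\otimes M_j^B$ and $P = H$. The traceless condition $\tr H = 0$ together with $\tr_B Q = \tr(M_j^B) I_A$ produces a harmlessly non-positive contribution proportional to $|\tr M_j^B|^2$ which I discard, yielding $\mathbb{E}_V\|K_j(V)\|_2 \leq d_A \|M_j^B\|_2 \|H\|_2/\sqrt{d^2-1}$. Summing over the $d_A^2-1$ indices via Cauchy--Schwarz and invoking the Hilbert--Schmidt identity $\sum_j \|M_j^B\|_2^2 = \|\rho_{\mathrm{rest}}\|_2^2/d_A \leq \|\rho\|_2^2/d_A \leq 1/d_A$ (using the orthogonality of $\rho_{\mathrm{rest}}$ with $I_A/d_A\otimes \tr_A\rho$), together with $\|H\|_2 \leq \sqrt{d}\|H\|_\infty$, produces the intermediate bound $\|H\|_\infty \, d_A\sqrt{d(d_A^2-1)/(d^2-1)}$, which collapses to the claimed $\|H\|_\infty d_A^2/\sqrt{d-1}$ via the elementary inequality $d(d_A^2-1)(d-1) \leq d_A^2(d^2-1)$, i.e. $-d \leq d_A^2$.

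The main obstacle will be a bookkeeping one: Lemma~\ref{lemma:EV_QVPV} is phrased for the action $VPV^\dagger$, while here the action is $V^\dagger H V$. The rescue is that the map $V \mapsto V^\dagger$ is measure-preserving on any $2$-design since the defining condition~\eqref{Eq:definiton_design} is symmetric under $V \leftrightarrow V^\dagger$ by complex conjugation, so one may simply relabel the integration variable. Care is nevertheless required to keep track of the non-vanishing partial trace $\tr_B(I_A\otimes M_j^B) = \tr(M_j^B)\, I_A$ in the general formula, which is precisely the term one can afford to drop only after exploiting $\tr H = 0$; any attempt to avoid the $I_A/d_A\otimes\tr_A\rho$ subtraction upfront leads to a bound loose by a factor that spoils the asymptotics.
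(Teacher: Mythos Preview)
Your proof is correct and follows essentially the same route as the paper: expand $\rho$ in the clock-and-shift basis on $A$, peel off $U_A$ via H\"older's inequality with the Schatten $2$-norm, and control the resulting partial trace by the $2$-design second-moment identity (the paper invokes its corollary, Lemma~\ref{lemma:OAOBVtracelessHV_sqrtdAdB}, rather than Lemma~\ref{lemma:EV_QVPV} directly, and does not separate out the $U_A$-invariant piece beforehand). The one noteworthy difference is your final aggregation: you bound $\sum_j\|M_j^B\|_2$ via Cauchy--Schwarz together with the Hilbert--Schmidt identity $\sum_j\|M_j^B\|_2^2=\|\rho_{\mathrm{rest}}\|_2^2/d_A\le 1/d_A$, whereas the paper bounds each coefficient individually by $\|O_j^B\|_2\le\|O_j^B\|_1\le 1/d_A$ and then sums $d_A^2$ terms---your estimate is tighter by a factor $\sqrt{d_A}$ and is what lands the constant exactly at $d_A^2$.
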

\begin{proof}
Similar with the proof of Lemma \ref{lemma:luo-V1UA-H}, we denote $\{\Lambda_j^A\}_{j=0}^{d_A^2-1}$ is the set of clock-and-shift matrices~\cite{Singh2018}. Any density matrix $\rho$ can be expanded in terms of $\Lambda_j^A$ as
\begin{equation}\label{Eq:psi_decomp}
    \rho = \sum_{j=0}^{d_A^2-1} \Lambda_j^A\otimes O_j^B,
\end{equation}
where $O_j^{B}$ can be explicitly expressed as 
\begin{equation}\label{Eq:OB_from_rho}
    O_j^B = \frac{1}{d_A} \tr_A( (\Lambda_j^{A \dagger} \otimes I_B) \rho).
\end{equation}
Next, we will take the maximum for each term in the summation in (\ref{Eq:psi_decomp}) to obtain the desired bound, i.e.
\begin{subequations}\label{Eq:max_rho_decompose}
    \begin{align}
        & \mathbb{E}_{V} \left[ \max_{U_A} \left[ \tr \left( H V ( U_A \otimes I_B ) \rho ( U_A^\dagger \otimes I_B ) V^\dagger \right) \right] \right] \label{Eq:max_rho_decompose_a} \\
        & \leq \sum_{j=0}^{d_A^2-1} \mathbb{E}_{V} \left[ \max_{U_A} \left| \tr \left( H V ( U_A \otimes I_B ) ( \Lambda_j^{A} \otimes O_j^B) ( U_A^\dagger \otimes I_B ) V^\dagger \right) \right| \right] \label{Eq:sum_max_rho_decompose} \\
        & = \sum_{j=0}^{d_A^2-1} \mathbb{E}_{V} \left[ \max_{U_A} \left| \tr \left( U_A \Lambda^A_j U_A^\dagger \tr_B(V^\dagger H V (I_A \otimes O_j^B)) \right) \right| \right] \label{Eq:sum_max_rho_decompose_rewrite}
    \end{align}
\end{subequations}
For each term in (\ref{Eq:sum_max_rho_decompose_rewrite}), we employ Hölder's inequality in Lemma \ref{lemma:holder} to obtain
\begin{equation}\label{Eq:UAV2_holder}
    \begin{aligned}
        \left| \tr \left( U_A \Lambda^A_j U_A^\dagger \tr_B(V^\dagger H V (I_A \otimes O_j^B)) \right) \right|
        & \leq \| U_A \Lambda^A_j U_A^\dagger \|_{2} \left\| \tr_B(V^\dagger H V (I_A \otimes O_j^B) ) \right\|_2 \\
        & \leq \sqrt{d_A} \left\| \tr_B(V^\dagger H V (I_A \otimes O_j^B) ) \right\|_2.
    \end{aligned}
\end{equation}
Since (\ref{Eq:UAV2_holder}) holds for any $U_A$, it certainly holds when taking the maximum, i.e. 
\begin{equation}
    \max_{U_A} \left| \tr \left( U_A \Lambda^A_j U_A^\dagger \tr_B(V^\dagger H V (I_A \otimes O_j^B)) \right) \right| \leq \sqrt{d_A} \left\| \tr_B(V^\dagger H V (I_A \otimes O_j^B) ) \right\|_2.
\end{equation}
Together with Lemma \ref{lemma:OAOBVtracelessHV_sqrtdAdB}, we obtain
\begin{equation}\label{Eq:EV_max_IA_OB_OB2_Hinfty}
    \begin{aligned}
        \mathbb{E}_{V} \left[ \max_{U_A} \left| \tr \left( U_A \Lambda^A_j U_A^\dagger \tr_B(V^\dagger H V (I_A \otimes O_j^B)) \right) \right| \right] 
        & \leq \sqrt{d_A} \mathbb{E}_{V} \left\| \tr_B(V^\dagger H V (I_A \otimes O_j^B) ) \right\|_1 \\
        & \leq \frac{ d_A }{\sqrt{d_Ad_B - 1}} \|O_j^B\|_2 \|H\|_{\infty},
    \end{aligned}
\end{equation}
where $\|O_B\|_2$ can be bounded using (\ref{Eq:OB_from_rho}) and Lemma \ref{lemma:partial_trace_norm} as
\begin{equation}\label{Eq:rho_OB_2norm_bound}
    \begin{aligned}
        \| O_B \|_2 \leq \| O_B \|_1 = \frac{1}{d_A} \| \tr_A((\Lambda_j^{A \dagger}\otimes I_B) \rho) \|_1 \leq \frac{1}{d_A} \| (\Lambda_j^{A \dagger}\otimes I_B) \rho \|_1 = \frac{1}{d_A} \| \rho \|_1 = \frac{1}{d_A},
    \end{aligned}
\end{equation}
where we have used the unitarity of $\Lambda_j^{A}$ and the unitary invariance of the Schatten norms. Combining (\ref{Eq:max_rho_decompose}), (\ref{Eq:EV_max_IA_OB_OB2_Hinfty}) and (\ref{Eq:rho_OB_2norm_bound}), we arrive at
\begin{equation}
    \begin{aligned}
        \mathbb{E}_{V} \left[ \max_{U_A} \left|
        \tr \left( H V ( U_A \otimes I_B ) \rho ( U_A^\dagger \otimes I_B ) V^\dagger \right) \right| \right]
        \leq d_A^2 \cdot \frac{d_A}{\sqrt{d_Ad_B-1}} \cdot \frac{1}{d_A} \|H\|_{\infty}
        = \|H\|_{\infty} \frac{d_A^2}{\sqrt{d_Ad_B-1}},
    \end{aligned}
\end{equation}
which is exactly the same as (\ref{Eq:luo-UAV2-H}).
\end{proof}

In fact, in the proofs of Lemma \ref{lemma:luo-V1UA-H} and \ref{lemma:luo-UAV2-H} above, the clock-and-shift matrices could be replaced by Pauli strings specially for qubit systems. Finally, we provide a proof for Theorem~\textcolor{blue}{1}, which we recall for convenience. Note that compared to Theorem~\textcolor{blue}{1} in the manuscript, here we prove a more general version where the Hilbert space dimension is no more restricted to qubit systems.
\renewcommand\theproposition{\textcolor{blue}{1}}
\setcounter{proposition}{\arabic{proposition}-1}
\begin{theorem}
Suppose $V_1 \in\mathbb{V}_1, V_2 \in\mathbb{V}_2$ are unitaries on the Hilbert space $\mathcal{H}_A\otimes \mathcal{H}_B$ with $\dim{(\mathcal{H}_A)}=d_A$ and $\dim{(\mathcal{H}_B)}=d_B$. Denote $U_A\in\mathcal{U}(d_A)$ as a unitary on $\mathcal{H}_A$. If either $\mathbb{V}_1$ or $\mathbb{V}_2$, or both are unitary $2$-designs, then for any density matrix $\rho$ and any Hermitian operator $H$ on $\mathcal{H}_A \otimes \mathcal{H}_B$, then the following inequality holds
\begin{equation}\label{Eq:main-theorem_appendix}
    \mathbb{E}_{V_1, V_2} [ \Delta_{H,\rho}(V_1,V_2) ] \leq 4w(H) d_A^2\sqrt{\frac{d_A}{d_B}}.
\end{equation}
where $\mathbb{E}_{V_1,V_2}$ denotes the expectation over $\mathbb{V}_1,\mathbb{V}_2$ independently. $w(H) = \lambda_{\max}(H) - \lambda_{\min}(H)$ denotes the spectral width of $H$, where $\lambda_{\max}(H)$ is the maximum eigenvalue of $H$ and $\lambda_{\min}(H)$ is the minimum. 
\end{theorem}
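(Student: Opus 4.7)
The plan is to reduce the theorem to the two one-sided estimates already established as Lemmas \ref{lemma:luo-V1UA-H} and \ref{lemma:luo-UAV2-H}, which handle the cases where the 2-design sits either immediately before or immediately after the local unitary $U_A$. First I would perform a normalization step: since $\Delta_{H,\rho}(V_1,V_2)$ is invariant under $H \mapsto H + cI$ for any $c\in\mathbb{R}$, I can shift the spectrum of $H$ so that $\lambda_{\max}(H) = -\lambda_{\min}(H) = w(H)/2$, making $H$ traceless with $\|H\|_\infty = w(H)/2 \leq w(H)$. This removes the trace-full component and lets me feed a traceless Hermitian operator into the lemmas.

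Next I would split the variation range into a max term and a min term,
\begin{equation*}
\mathbb{E}[\Delta_{H,\rho}] = \mathbb{E}\Bigl[\max_{U_A}\tr(H\mathbf{U}\rho\mathbf{U}^\dagger)\Bigr] + \mathbb{E}\Bigl[\max_{U_A}\tr((-H)\mathbf{U}\rho\mathbf{U}^\dagger)\Bigr],
\end{equation*}
so it suffices to bound one such expectation (the other follows by replacing $H$ with $-H$, which has the same operator norm). Rewriting $\tr(H\mathbf{U}\rho\mathbf{U}^\dagger) = \tr(\tilde H(U_A\otimes I_B)V_1\rho V_1^\dagger(U_A^\dagger\otimes I_B))$ with $\tilde H := V_2^\dagger H V_2$, the expression fits the template of Lemma \ref{lemma:luo-V1UA-H} whenever $\mathbb{V}_1$ is a 2-design; taking expectation first over $V_1$ yields a bound proportional to $\|\tilde H\|_\infty\,(2d_A^2-1)\sqrt{d_A/d_B}$. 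The crucial observation is that $\|\tilde H\|_\infty = \|H\|_\infty$ by unitary invariance of the operator norm, so the bound is independent of $V_2$ and survives the outer expectation over $\mathbb{V}_2$. In the other case, where $\mathbb{V}_2$ is the 2-design, I would absorb $V_1$ into $\rho$ as $\tilde\rho = V_1\rho V_1^\dagger$ (still a density matrix) and apply Lemma \ref{lemma:luo-UAV2-H} to get a bound of the form $\|H\|_\infty d_A^2/\sqrt{d_Ad_B-1}$, again $V_1$-independent.

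Putting the pieces together, combining the max and min terms contributes a factor of $2$, and using $\|H\|_\infty \leq w(H)$ together with the loosest of the two constants $(2d_A^2-1)\sqrt{d_A/d_B} \leq 2d_A^2\sqrt{d_A/d_B}$ produces the stated bound $4w(H)d_A^2\sqrt{d_A/d_B}$. The setting where both $\mathbb{V}_1$ and $\mathbb{V}_2$ are 2-designs is automatically covered by either invocation.

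The main conceptual obstacle has already been absorbed into the two lemmas, namely the decomposition of an arbitrary Hermitian (or density) operator into pieces of the form $O_A\otimes O_B$ indexed by a basis of clock-and-shift matrices on $A$, followed by Hölder's inequality to extract $U_A$ and a careful unitary-2-design moment computation (Lemma \ref{lemma:EV_QVPV}) to control the residual $V$-dependent factor. Given those lemmas, the remaining subtlety is only a bookkeeping one: verifying that the constants $\|\tilde H\|_\infty = \|H\|_\infty$ and $\|\tilde\rho\|_1 = 1$ are preserved under the opposite-side unitary, so that the bounds obtained after one expectation remain valid after the second.
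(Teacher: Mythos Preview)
Your proposal follows essentially the same route as the paper's proof: reduce to traceless $H$, split $\Delta$ into two maxima via $-\min=\max(-H)$, then invoke Lemma~\ref{lemma:luo-V1UA-H} (absorbing $V_2$ into $H$) or Lemma~\ref{lemma:luo-UAV2-H} (absorbing $V_1$ into $\rho$) depending on which side carries the 2-design, and finally relax the constants. One small slip: shifting so that $\lambda_{\max}=-\lambda_{\min}=w(H)/2$ does \emph{not} in general make $H$ traceless (the eigenvalue multiplicities need not be symmetric), so the lemmas---which require $\tr H=0$---would not apply; the correct normalization is $H\mapsto H-\tfrac{\tr H}{d}I$, after which $\|H\|_\infty\le w(H)$ as in the paper's inequality~\eqref{Eq:H_infty_wH}, and this weaker bound is all you actually use downstream.
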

\begin{proof}
By definition, we have $\mathbf{U}=V_2(U_A\otimes I_B)V_1$ and
\begin{equation}\label{Eq:EV1V2_Delta}
    \Delta_{H,\rho}(V_1,V_2) = \max_{U_A} \left[ \tr \left( H \mathbf{U} \rho \mathbf{U}^\dagger \right) \right] - \min_{U_A} \left[ \tr \left( H \mathbf{U} \rho \mathbf{U}^\dagger \right) \right],
\end{equation}
where the maximum and minimum with respect to $U_A$ are taken over the entire unitary group $\mathcal{U}(d_A)$ of degree $d_A$. Without loss of generality, we assume that $H$ is traceless since (\ref{Eq:main-theorem_appendix}) is invariant if $H$ is added by a homothety $H\rightarrow H + c I$, $c\in\mathbb{R}$. Moreover, considering that the minimization term in (\ref{Eq:EV1V2_Delta}) could be written as
\begin{equation}\label{Eq:min_to_max}
    - \min_{U_A} \left[ \tr \left( H \mathbf{U} \rho \mathbf{U}^\dagger \right) \right] = \max_{U_A} \left[ \tr \left( (-H) \mathbf{U} \rho \mathbf{U}^\dagger \right) \right],
\end{equation}
and $w(H)=w(-H)$, in order to prove (\ref{Eq:main-theorem_appendix}), we only need to prove that
\begin{equation}\label{Eq:EV1V2_maxUA}
    \mathbb{E}_{V_1, V_2} \left[ \max_{U_A} \left[ \tr \left( H \mathbf{U}\rho \mathbf{U}^\dagger \right) \right] \right] \leq 2 w(H) d_A^2 \sqrt{\frac{d_A}{d_B}},
\end{equation}
holds for any traceless Hermitian operator $H$. On the one hand, if $\mathbb{V}_1$ is a unitary 2-design, Lemma \ref{lemma:luo-V1UA-H} gives
\begin{equation}\label{Eq:EV1_bound}
    \begin{aligned}
        \mathbb{E}_{V_1, V_2} \left[ \max_{U_A} \left[ \tr \left( H \mathbf{U}\rho \mathbf{U}^\dagger \right) \right] \right]
        & = \mathbb{E}_{V_2} \left\{ \mathbb{E}_{V_1} \left[ \max_{U_A} \left[\tr \left( V_2^\dagger H V_2( U_A \otimes I_B ) V_1 \rho V_1^\dagger ( U_A^\dagger \otimes I_B )\right) \right] \right] \right\} \\
        & \leq \mathbb{E}_{V_2} \left[ \| V_2^\dagger H V_2 \|_{\infty} (2 d^2_A - 1) \sqrt{ \frac{ d_A }{ d_B } } \right]
        = \| H \|_{\infty} ( 2 d^2_A - 1) \sqrt{ \frac{ d_A }{ d_B } }.
    \end{aligned}
\end{equation}
where we have used the unitary invariance of the Schatten norms and the normalization condition $\mathbb{E}_{V_2}[1]=1$. On the other hand, if $\mathbb{V}_2$ is a unitary 2-design, Lemma \ref{lemma:luo-UAV2-H} gives
\begin{equation}\label{Eq:EV2_bound}
    \begin{aligned}
        \mathbb{E}_{V_1, V_2} \left[ \max_{U_A} \left[ \tr \left( H \mathbf{U}\rho \mathbf{U}^\dagger \right) \right] \right]
        & = \mathbb{E}_{V_1} \left\{ \mathbb{E}_{V_2} \left[ \max_{U_A} \left[ \tr \left( H V_2 ( U_A \otimes I_B ) V_1 \rho V_1^\dagger ( U_A^\dagger \otimes I_B )V_2^\dagger \right) \right] \right] \right\} \\
        & \leq \mathbb{E}_{V_1} \left\{ \|H\|_{\infty} \frac{d_A^2}{ \sqrt{ d_A d_B - 1 } } \right\}
        = \|H\|_{\infty} \frac{d_A^2}{ \sqrt{ d_A d_B - 1 } }.
    \end{aligned}
\end{equation}
where we have used the fact that $V_1\rho V_1^\dagger$ is also a density matrix and the normalization condition $\mathbb{E}_{V_1}[1]=1$. Note that for any traceless Hermitian operator $H$, we have $\lambda_{\max}(H)\geq 0$, $\lambda_{\min}(H)\leq 0$ and
\begin{equation}\label{Eq:H_infty_wH}
    \|H\|_{\infty} = \max\{ \lambda_{\max}(H), -\lambda_{\min}(H) \} \leq \lambda_{\max}(H) - \lambda_{\min}(H) = w(H).
\end{equation}
Combining (\ref{Eq:EV1_bound}), (\ref{Eq:EV2_bound}), (\ref{Eq:H_infty_wH}) and
\begin{equation}\label{Eq:dAdB_factor_relax}
    \begin{aligned}
        &(2 d^2_A - 1) \sqrt{ \frac{ d_A }{ d_B } } < 2 d^2_A \sqrt{ \frac{ d_A }{ d_B } }, \\
        &\frac{d_A^2}{\sqrt{d_Ad_B-1}} < \frac{d_A^2}{\sqrt{(d_A-1)d_B}} < 2d_A^2 \sqrt{\frac{d_A}{d_B}},
    \end{aligned}
\end{equation}
for $d_A\geq 2$, we know that the inequality
\begin{equation}\label{Eq:EV1V2_maxUA_trHUrhoUdag}
    \mathbb{E}_{V_1, V_2} \left[ \max_{U_A} \left[ \tr \left( H \mathbf{U}\rho \mathbf{U}^\dagger \right) \right] \right] 
    \leq 2 w(H) d_A^2 \sqrt{ \frac{d_A}{d_B} },
\end{equation}
holds if either $\mathbb{V}_1$ or $\mathbb{V}_2$ is a unitary $2$-design. Certainly, (\ref{Eq:EV1V2_maxUA_trHUrhoUdag}) also holds if both $\mathbb{V}_1$ and $\mathbb{V}_2$ are $2$-designs. Together with (\ref{Eq:min_to_max}), we arrive at (\ref{Eq:main-theorem_appendix}).
\end{proof}
\renewcommand{\theproposition}{S\arabic{proposition}}

Note that for qubit systems where $d_A=2^m$ and $d_B=2^{n-m}$, the upper bound in (\ref{Eq:main-theorem_appendix}) reduces to that in the manuscript, i.e.
\begin{equation}
    \mathbb{E}_{V_1, V_2} [ \Delta_{H,\rho}(V_1,V_2) ] \leq \frac{w(H)}{2^{n/2-3m-2}}.
\end{equation}
Although Theorem~\textcolor{blue}{1} only establish an upper bound on the expectation of $\Delta_{H,\rho}(V_1,V_2)$, we can derive the upper bound on the variance of $\Delta_{H,\rho}(V_1,V_2)$ from Theorem~\textcolor{blue}{1} with the non-negativity and boundedness of $\Delta_{H,\rho}(V_1,V_2)$. Namely, since $\Delta_{H,\rho}(V_1,V_2)\in[0,w(H)]$, Lemma \ref{lemma:var<expec} gives
\begin{equation}
    \operatorname{Var}_{V_1,V_2}[\Delta_{H,\rho}(V_1,V_2)] 
    \leq w(H) \cdot \mathbb{E}_{V_1, V_2}[\Delta_{H,\rho}(V_1,V_2)]
    \leq 4 w^2(H) d_A^2 \sqrt{\frac{d_A}{d_B}}.
\end{equation}
Furthermore, Theorem~\textcolor{blue}{1} together with the non-negativity of $\Delta_{H,\rho}(V_1,V_2)$ can also provide an upper bound of the probability that $\Delta_{H,\rho}(V_1,V_2)$ deviates from zero. Specifically, according to Theorem~\textcolor{blue}{1} and Markov's inequality in Lemma \ref{lemma:markov_inequality}, the following concentration inequality
\begin{equation}
    \operatorname{Pr}\left[ \Delta_{H,\rho}(V_1,V_2) \geq \epsilon \right] 
    \leq \frac{\mathbb{E}_{V_1, V_2}[\Delta_{H,\rho}(V_1,V_2)]}{\epsilon}
    \leq \frac{4 w(H) d_A^2}{\epsilon} \sqrt{\frac{d_A}{d_B}},
\end{equation}
holds for any $\epsilon>0$. It is worth noticing that the upper bound in (\ref{Eq:main-theorem_appendix}) only involves $w(H)$ and does not depend on any detail of the Hermitian operator $H$. In order to derive this compact and general upper bound in (\ref{Eq:main-theorem_appendix}), we perform many relaxations such as in (\ref{Eq:HAB_decompose_OB}), (\ref{Eq:H_infty_wH}) and (\ref{Eq:dAdB_factor_relax}). Otherwise, if some specific structures about $H$ are known, a more complicated but tighter bound could be obtained as
\begin{equation}\label{Eq:main-theorem_tighter}
    \mathbb{E}_{V_1, V_2} [ \Delta_{H,\rho}(V_1,V_2) ] \leq \max\{ N_A + 2 N_{AB}, d_A \sqrt{\frac{d}{d-1}} \} \cdot \left\| H - \tr(H)\frac{I}{d} \right\|_\infty  \sqrt{\frac{d_A}{d_B}},
\end{equation}
where $N_A\leq 1$ denotes the number of non-vanishing terms in (\ref{Eq:H_decompose_HA}) and $N_{AB}\leq (d_A^2 - 1)$ denotes the number of non-vanishing terms in (\ref{Eq:HAB_decompose_OB}), which can be seen as a ``coupling rank'' or say ``coupling complexity'' between subsystem $A$ and $B$ of the Hamiltonian $H$. The variational quantum eigensolver (VQE) example of the Heisenberg model $\hat{H}$ in the main text has $N_A=0$, $N_{AB}=3$ and that of quantum autoencoder (QAE) has $N_A=1$, $N_{AB}=0$. Therefore, we have two tighter bound for these two examples as
\begin{equation}
    \begin{aligned}
        \text{Heisenberg:}~~~ &\mathbb{E}_{V_1, V_2} [ \Delta_{{\rm VQE}}(V_1,V_2) ] \leq 24\cdot w(\hat{H}) \cdot \frac{1}{2^{n/2}}, \\
        \text{Autoencoder:}~~~ &\mathbb{E}_{V_1, V_2} [ \Delta_{{\rm QAE}}(V_1,V_2) ] \leq \frac{8}{\sqrt{3}} \cdot \frac{1}{2^{n/2}},
    \end{aligned}
\end{equation}
which are used in the figure of the numerical simulation section in the main text.

\subsection{Proof of Proposition \textcolor{blue}{2}}\label{supplementary:proof_prop_qsl}

In this section, we prove Lemma \ref{lemma:upper_bound_Bures}-\ref{lemma:EV1V2_max_F} first and derive Proposition~\textcolor{blue}{2} by use of these lemmas.
\begin{lemma}\label{lemma:upper_bound_Bures}
For any density matrices $\rho$ and $\sigma$ we have
\begin{equation}\label{Eq:F leq rank HS}
    F(\rho,\sigma)\leq\operatorname{rank}(\rho\sigma)\tr(\rho\sigma),
\end{equation}
where $F(\rho,\sigma)=\left(\tr\sqrt{\rho^{1/2}\sigma\rho^{1/2}}\right)^2$ denotes the Bures fidelity.
\end{lemma}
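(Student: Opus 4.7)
The plan is to rewrite the Bures fidelity as the squared Schatten $1$-norm of a single matrix, then apply the standard Schatten norm inequality relating the $1$-norm to the $2$-norm through the rank. Concretely, let $M := \sqrt{\rho}\sqrt{\sigma}$. I will observe that $M^\dagger M = \sqrt{\sigma}\rho\sqrt{\sigma}$ is unitarily similar to $\sqrt{\rho}\sigma\sqrt{\rho} = \rho^{1/2}\sigma\rho^{1/2}$, so their positive square roots have equal traces. Since $\tr|M| = \tr\sqrt{M^\dagger M}$, this gives
\begin{equation}
    F(\rho,\sigma) = \bigl(\tr\sqrt{\rho^{1/2}\sigma\rho^{1/2}}\bigr)^2 = \|\sqrt{\rho}\sqrt{\sigma}\|_1^2,
\end{equation}
converting the fidelity into a tractable norm of $M$.

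Next I will apply the elementary singular-value inequality $\|M\|_1 \leq \sqrt{\operatorname{rank}(M)}\,\|M\|_2$, which follows from Cauchy--Schwarz on the nonzero singular values (at most $\operatorname{rank}(M)$ of them). Squaring gives $\|M\|_1^2 \leq \operatorname{rank}(M)\,\|M\|_2^2$. The $2$-norm factor is easy: $\|M\|_2^2 = \tr(M^\dagger M) = \tr(\sqrt{\sigma}\rho\sqrt{\sigma}) = \tr(\rho\sigma)$ by the cyclic property of the trace.

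The remaining step, and the only non-cosmetic issue, is identifying $\operatorname{rank}(M)$ with $\operatorname{rank}(\rho\sigma)$. The plan is to diagonalize $\rho = U D_\rho U^\dagger$ and $\sigma = V D_\sigma V^\dagger$ and set $W = U^\dagger V$; then both $D_\rho^{1/2} W D_\sigma^{1/2}$ and $D_\rho W D_\sigma$ reduce, after stripping unitaries on either side, to the same submatrix of $W$ indexed by the supports of $\rho$ and $\sigma$ (the entries get rescaled by positive factors but the sparsity pattern is identical). Hence the two matrices share the same rank, and $\operatorname{rank}(\sqrt{\rho}\sqrt{\sigma}) = \operatorname{rank}(\rho\sigma)$. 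Combining the three ingredients yields $F(\rho,\sigma) \leq \operatorname{rank}(\rho\sigma)\,\tr(\rho\sigma)$ as claimed. I do not expect any step here to present a real obstacle; the rank identity is the only place where a brief argument (rather than a one-line inequality) is needed.
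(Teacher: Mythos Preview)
Your proposal is correct and follows essentially the same route as the paper: both arguments reduce to the Cauchy--Schwarz/$\ell^1$--$\ell^2$ inequality on the singular values of $\sqrt{\rho}\sqrt{\sigma}$ (equivalently, the eigenvalues of $\sqrt{\rho^{1/2}\sigma\rho^{1/2}}$), identify the $\ell^2$ part with $\tr(\rho\sigma)$, and pick up $\operatorname{rank}(\rho\sigma)$ as the count of nonzero terms. Your explicit justification that $\operatorname{rank}(\sqrt{\rho}\sqrt{\sigma})=\operatorname{rank}(\rho\sigma)$ is a detail the paper leaves implicit, but otherwise the two proofs coincide.
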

\begin{proof}
Let $\lambda_i$ be the $i$-th eigenvalue of $\sqrt{\rho^{1/2}\sigma\rho^{1/2}}$ in the non-increasing order. Note that $\lambda_i\geq 0$ holds for any $i$ due to the positive semi-definite property of $\sqrt{\rho^{1/2}\sigma\rho^{1/2}}$. By definition, the square root of the Bures fidelity can be represented as
\begin{equation}\label{Eq:bures as lambda}
    \sqrt{F(\rho,\sigma)} = \sum_i\lambda_i,
\end{equation}
while the square root of the Hilbert-Schmidt inner product of $\rho$ and $\sigma$ can be represented as
\begin{equation}\label{Eq:hs as lambda}
    \sqrt{\tr(\rho\sigma)} = \sqrt{\tr(\rho^{1/2}\sigma\rho^{1/2})} = \sqrt{\sum_i\lambda_i^2}.
\end{equation}
According to the inequality between the vector $1$-norm and $2$-norm $\|\mathbf{x}\|_1\leq\sqrt{n}\|\mathbf{x}\|_2$ for any $n$-dimensional vector $\mathbf{x}$, (\ref{Eq:bures as lambda}) and (\ref{Eq:hs as lambda}) lead to
\begin{equation}
    \sqrt{F(\rho,\sigma)}\leq\sqrt{\operatorname{rank}(\rho\sigma)}\sqrt{\tr(\rho\sigma)}.
\end{equation}
Take the square of both sides and we arrive at (\ref{Eq:F leq rank HS}).
\end{proof}

\begin{lemma}\label{lemma:EVmaxUA_F_dAdB}
Suppose $V\in\mathbb{V}$ is a unitary on the Hilbert space $\mathcal{H}_A\otimes \mathcal{H}_B$ with $\dim{(\mathcal{H}_A)}=d_A$ and $\dim{(\mathcal{H}_B)}=d_B$ where $\mathbb{V}$ is a unitary $1$-design. Denote $U_A\in\mathcal{U}(d_A)$ as a unitary operator on $\mathcal{H}_A$. For any density matrices $\rho$ and $\sigma$ on $\mathcal{H}_A\otimes \mathcal{H}_B$, the following inequality holds
\begin{equation}\label{Eq:EVmaxUA_F_dAdB}
    \mathbb{E}_{V} \left[ 
    \max_{U_A} F ( (U_A \otimes I_B) V \rho V^\dagger (U_A \otimes I_B)^\dagger, \sigma ) 
    \right] \leq \frac{d_A}{d_B}.
\end{equation}
where $F$ denotes the Bures fidelity.
\end{lemma}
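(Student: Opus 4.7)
The plan is to exploit the data-processing inequality for Bures fidelity to wash out the $U_A$-dependence entirely, then combine a rank-times-trace bound (Lemma \ref{lemma:upper_bound_Bures}) with the $1$-design property of $\mathbb{V}$ through Lemma \ref{lemma:integral_VAV=trAd}. The key observation is that since $U_A$ acts only on subsystem $A$ and is unitary, the partial trace $\tr_A\bigl[(U_A \otimes I_B) V \rho V^\dagger (U_A \otimes I_B)^\dagger\bigr]$ collapses to $\tr_A(V \rho V^\dagger)$, so the maximization over $U_A$ becomes trivial after reducing to subsystem $B$.

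Concretely, I would proceed in three steps. First, apply monotonicity of Bures fidelity under partial trace on the $A$ side,
\[
F\bigl((U_A \otimes I_B) V \rho V^\dagger (U_A \otimes I_B)^\dagger,\, \sigma\bigr) \leq F(\rho_B, \sigma_B),
\]
where $\rho_B := \tr_A(V \rho V^\dagger)$ and $\sigma_B := \tr_A(\sigma)$. Since the right-hand side is independent of $U_A$, the maximum over $U_A$ preserves this bound. Second, invoke Lemma \ref{lemma:upper_bound_Bures} to get $F(\rho_B, \sigma_B) \leq \operatorname{rank}(\rho_B \sigma_B)\cdot \tr(\rho_B \sigma_B) \leq d_A\cdot \tr(\rho_B \sigma_B)$. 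Third, take the expectation over $V$. Using Lemma \ref{lemma:integral_VAV=trAd} together with the $1$-design condition, $\mathbb{E}_V[V \rho V^\dagger] = I/d$, hence $\mathbb{E}_V[\rho_B] = I_B/d_B$, giving
\[
\mathbb{E}_V\bigl[\tr(\rho_B \sigma_B)\bigr] = \tr\!\left(\tfrac{I_B}{d_B}\sigma_B\right) = \tfrac{1}{d_B}.
\]
Chaining the three steps yields $\mathbb{E}_V[\max_{U_A} F] \leq d_A \cdot \tfrac{1}{d_B} = d_A/d_B$, as claimed.

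The main obstacle will be justifying the rank bound $\operatorname{rank}(\rho_B\sigma_B) \leq d_A$. For the pure-state case relevant to standard QSL this is immediate from the Schmidt decomposition: if $\rho$ is pure on $AB$ then $\rho_B$ has rank at most $\min(d_A, d_B) \leq d_A$, and $\operatorname{rank}(\rho_B\sigma_B) \leq \operatorname{rank}(\rho_B)$. For the fully general mixed setting some extra care is needed---one should exploit the monotonicity chain $\operatorname{rank}(\rho_B\sigma_B) \leq \min(\operatorname{rank}\rho_B, \operatorname{rank}\sigma_B)$ combined with the fact that the relevant operator $\rho_B^{1/2}\sigma_B\rho_B^{1/2}$ has its support controlled by the target subsystem's dimension. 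Everything else---the data-processing inequality, the Cauchy--Schwarz-type bound of Lemma \ref{lemma:upper_bound_Bures}, and the linear exchange of $\mathbb{E}_V$ with $\tr$---is mechanical, so once the rank step is nailed down the bound $d_A/d_B$ follows directly.
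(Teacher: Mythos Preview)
Your three-step argument---monotonicity of Bures fidelity under $\tr_A$ to eliminate $U_A$, then Lemma~\ref{lemma:upper_bound_Bures}, then the $1$-design average via Lemma~\ref{lemma:integral_VAV=trAd}---is exactly the paper's proof. The paper simply writes $F(\rho_B,\sigma_B)\le d_A\,\tr(\rho_B\sigma_B)$ without comment, so you have in fact been more careful than the paper in flagging $\operatorname{rank}(\rho_B\sigma_B)\le d_A$ as the one nontrivial step.

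Your instinct that this step is delicate is correct, and the ``extra care'' you propose for the mixed case does not close the gap: with $\rho=\sigma=I/(d_Ad_B)$ one gets $\rho_B=\sigma_B=I_B/d_B$, hence $\operatorname{rank}(\rho_B\sigma_B)=d_B$, and indeed $F\equiv 1>d_A/d_B$ whenever $d_A<d_B$. So the inequality~\eqref{Eq:EVmaxUA_F_dAdB} as stated for arbitrary density matrices cannot be rescued by any rank argument---the paper's proof shares this gap. When $\rho$ or $\sigma$ is pure, which is the regime actually used downstream in the QSL application, your Schmidt-rank argument is exactly what is needed and the proof goes through cleanly.
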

\begin{proof}
According to the monotonicity of the Bures fidelity under the action of quantum channels~\cite{Watrous2018}, we have
\begin{equation}\label{Eq:fidelity_monotonicity}
    F ( (U_A \otimes I_B) V \rho V^\dagger (U_A \otimes I_B)^\dagger, \sigma ) 
    \leq F ( \tr_A \left( (U_A \otimes I_B) V \rho V^\dagger (U_A \otimes I_B)^\dagger \right), \tr_A\sigma ) 
    = F ( \tr_A \left( V \rho V^\dagger \right), \tr_A\sigma ).
\end{equation}
Since (\ref{Eq:fidelity_monotonicity}) holds for any $U_A$, it certainly holds when taking the maximum. Together with Lemma \ref{lemma:upper_bound_Bures}, it holds that
\begin{equation}\label{Eq:fidelity_monotonicity_max}
    \max_{U_A} F ( (U_A \otimes I_B) V \rho V^\dagger (U_A \otimes I_B)^\dagger, \sigma )
    \leq F ( \tr_A \left( V \rho V^\dagger \right), \tr_A\sigma )
    \leq d_A \tr( \tr_A \left( V \rho V^\dagger \right) \tr_A\sigma ).
\end{equation}
Because $\mathbb{V}$ is a unitary $1$-design, we can apply Lemma \ref{lemma:integral_VAV=trAd} to obtain
\begin{equation}\label{Eq:EV_trtrAVrhoVtrAsigma}
    \begin{aligned}
        \mathbb{E}_{V} \left[ \tr( \tr_A \left( V \rho V^\dagger \right) \tr_A\sigma ) \right] 
        = \tr\left( \tr_A \left( \frac{\tr(\rho)}{d} I \right) \tr_A\sigma \right) 
        = \tr(\rho) \tr(\sigma) \frac{1}{d_B} \leq \frac{1}{d_B},
    \end{aligned}
\end{equation}
where $d=d_Ad_B$ denotes the dimension of $\mathcal{H}_A\otimes \mathcal{H}_B$. Combining (\ref{Eq:fidelity_monotonicity_max}) and (\ref{Eq:EV_trtrAVrhoVtrAsigma}), we arrive at (\ref{Eq:EVmaxUA_F_dAdB}).
\end{proof}

\begin{lemma}\label{lemma:EV1V2_max_F}
Suppose $V_1 \in\mathbb{V}_1, V_2 \in\mathbb{V}_2$ are independent unitaries on the Hilbert space $\mathcal{H}_A\otimes \mathcal{H}_B$ with $\dim{(\mathcal{H}_A)}=d_A$ and $\dim{(\mathcal{H}_B)}=d_B$. Denote $U_A\in \mathcal{U}(d_A)$ as a unitary operator on $\mathcal{H}_A$. If either $\mathbb{V}_1$ or $\mathbb{V}_2$, or both are unitary $1$-designs, then for any density matrix $\rho$ and $\sigma$ on $\mathcal{H}_A \otimes \mathcal{H}_B$, the following inequality holds
\begin{equation}\label{Eq:EV1V2_max_F}
    \mathbb{E}_{V_1,V_2}\left[ \max_{U_A} F \left( \mathbf{U} \rho \mathbf{U}^\dagger, \sigma \right) \right] \leq \frac{d_A}{d_B},
\end{equation}
where $\mathbf{U} = V_2 (U_A\otimes I_B) V_1$ and $F$ is the Bures fidelity.
\end{lemma}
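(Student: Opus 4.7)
The plan is to reduce Lemma~\ref{lemma:EV1V2_max_F} to Lemma~\ref{lemma:EVmaxUA_F_dAdB} by using unitary invariance and symmetry of the Bures fidelity to absorb the non-$1$-design unitary into a modified density matrix, so that the random unitary sits in exactly the position required by Lemma~\ref{lemma:EVmaxUA_F_dAdB}.

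First I would use the invariance $F(V_2 M V_2^\dagger,\sigma)=F(M,\,V_2^\dagger \sigma V_2)$ to rewrite
\begin{equation}
F(\mathbf{U}\rho\mathbf{U}^\dagger,\sigma)=F\bigl((U_A\otimes I_B)V_1\rho V_1^\dagger(U_A\otimes I_B)^\dagger,\,V_2^\dagger \sigma V_2\bigr).
\end{equation}
When $\mathbb{V}_1$ is a $1$-design, the right-hand side has exactly the form handled by Lemma~\ref{lemma:EVmaxUA_F_dAdB}, with the fixed target density matrix $\tilde\sigma:=V_2^\dagger \sigma V_2$ in place of $\sigma$. Taking $\max_{U_A}$ and $\mathbb{E}_{V_1}$ yields the bound $d_A/d_B$, which is independent of $V_2$, so the outer expectation over $\mathbb{V}_2$ preserves it and delivers (\ref{Eq:EV1V2_max_F}).

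The case in which only $\mathbb{V}_2$ is a $1$-design needs an extra manipulation, because after the above rewriting $V_2$ sits on the ``wrong side'' of the fidelity. My plan is to further apply the symmetry $F(A,B)=F(B,A)$ together with the $(U_A^\dagger\otimes I_B)$-invariance of $F$ to obtain
\begin{equation}
F(\mathbf{U}\rho\mathbf{U}^\dagger,\sigma)=F\bigl((U_A^\dagger\otimes I_B)V_2^\dagger \sigma V_2(U_A\otimes I_B),\,V_1\rho V_1^\dagger\bigr),
\end{equation}
and then relabel $U_A\mapsto U_A^\dagger$ inside $\max_{U_A}$ (legal because $\mathcal{U}(d_A)$ is closed under inversion). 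This recasts the expression into the form of Lemma~\ref{lemma:EVmaxUA_F_dAdB} with random unitary $V_2^\dagger$ (still a $1$-design, since the Haar measure is inversion-invariant), input state $\sigma$, and target state $V_1\rho V_1^\dagger$. Applying Lemma~\ref{lemma:EVmaxUA_F_dAdB} then yields the uniform bound $d_A/d_B$, and since it is independent of $V_1$, the outer $\mathbb{E}_{V_1}$ preserves it.

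I expect the subtle point to be the second case: one must notice that neither partial-trace monotonicity nor unitary invariance alone suffices to expose $V_2$ as the inner random unitary of Lemma~\ref{lemma:EVmaxUA_F_dAdB}, and the combined use of fidelity symmetry together with the $U_A\leftrightarrow U_A^\dagger$ relabeling is the essential ingredient. Everything else follows by direct substitution and by the observation that each intermediate bound is uniform in the non-averaged unitary, so the remaining expectation is trivial.
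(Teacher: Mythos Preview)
Your proposal is correct and follows essentially the same route as the paper. For the $\mathbb{V}_1$ case the argument is identical; for the $\mathbb{V}_2$ case the paper writes the rewritten expression $F\bigl(V_1\rho V_1^\dagger,\,(U_A\otimes I_B)^\dagger V_2^\dagger\sigma V_2(U_A\otimes I_B)\bigr)$ and invokes Lemma~\ref{lemma:EVmaxUA_F_dAdB} without spelling out the fidelity symmetry, the $U_A\leftrightarrow U_A^\dagger$ relabeling, or the fact that $\{V_2^\dagger\}$ is again a $1$-design, whereas you make these steps explicit.
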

\begin{proof}
On the one hand, if $\mathbb{V}_1$ is a unitary $1$-design, Lemma \ref{lemma:EVmaxUA_F_dAdB} gives
\begin{equation}\label{Eq:EV1_qsl}
    \begin{aligned}
        \mathbb{E}_{V_1,V_2}\left[ \max_{U_A} F \left( \mathbf{U} \rho \mathbf{U}^\dagger, \sigma \right) \right]
        & = \mathbb{E}_{V_2} \left\{ \mathbb{E}_{V_1} \left[\max_{U_A} F \left( (U_A\otimes I_B) V_1 \rho V_1^\dagger (U_A \otimes I_B)^\dagger, V_2^\dagger \sigma V_2 \right) \right] \right\} \\
        & \leq \mathbb{E}_{V_2}\left[ \frac{d_A}{d_B} \right] = \frac{d_A}{d_B},
    \end{aligned}
\end{equation}
where we have used the unitary invariance of the fidelity and the normalization condition $\mathbb{E}_{V_2}[1]=1$. Note that in this case there is no restriction on $\mathbb{V}_2$. On the other hand, if $\mathbb{V}_2$ is a unitary $1$-design, similarly, Lemma \ref{lemma:EVmaxUA_F_dAdB} gives
\begin{equation}\label{Eq:EV2_qsl}
    \begin{aligned}
        \mathbb{E}_{V_1,V_2} \left[ \max_{U_A} F \left( \mathbf{U} \rho \mathbf{U}^\dagger, \sigma \right) \right]
        & = \mathbb{E}_{V_1} \left\{ \mathbb{E}_{V_2} \left[\max_{U_A} F \left( V_1 \rho V_1^\dagger, (U_A\otimes I_B)^\dagger V_2^\dagger \sigma V_2 (U_A \otimes I_B) \right) \right] \right\} \\
        & \leq \mathbb{E}_{V_1}\left[ \frac{d_A}{d_B} \right] = \frac{d_A}{d_B},
    \end{aligned}
\end{equation}
where we have used the unitary invariance of the fidelity again and the normalization condition $\mathbb{E}_{V_1}[1]=1$. Combining (\ref{Eq:EV1_qsl}) and (\ref{Eq:EV2_qsl}), we know that (\ref{Eq:EV1V2_max_F}) holds if either $\mathbb{V}_1$ or $\mathbb{V}_2$ is a unitary $1$-design. Certainly, (\ref{Eq:EV1V2_max_F}) also holds if both $\mathbb{V}_1$ and $\mathbb{V}_2$ are $1$-designs.
\end{proof}

Finally, we provide a proof for Proposition \textcolor{blue}{2}. Compared to Proposition \textcolor{blue}{2} in the manuscript, here we prove a more general version where the Hilbert space dimension is no more restricted to qubit systems.
\renewcommand\theproposition{\textcolor{blue}{2}}
\setcounter{proposition}{\arabic{proposition}-1}
\begin{proposition}
Suppose $V_1 \in\mathbb{V}_1, V_2 \in\mathbb{V}_2$ are independent unitaries on the Hilbert space $\mathcal{H}_A\otimes \mathcal{H}_B$ with $\dim{(\mathcal{H}_A)}=d_A$ and $\dim{(\mathcal{H}_B)}=d_B$. Denote $U_A\in \mathcal{U}(d_A)$ as a unitary operator on $\mathcal{H}_A$. If either $V_1$ or $V_2$, or both are from unitary $1$-designs, then for any density matrices $\rho$ and $\sigma$, the following inequality holds
\begin{equation}\label{Eq:state-learning}
    \mathbb{E}_{V_1,V_2} \left[ \Delta_{{\rm Q S L}}(V_1,V_2) \right] \leq \frac{d_A}{d_B},
\end{equation}
where $\mathbb{E}_{V_1,V_2}$ denotes the expectation over $\mathbb{V}_1,\mathbb{V}_2$ independently.
\end{proposition}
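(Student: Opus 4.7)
The plan is to unwind the definition of $\Delta_{\rm QSL}$ and reduce the claim to the already-stated fidelity bound (Lemma on $\mathbb{E}_{V_1,V_2}[\max_{U_A} F(\mathbf{U}\rho\mathbf{U}^\dagger,\sigma)] \leq d_A/d_B$). Substituting $C_{\rm QSL}(\mathbf{U}) = 1 - F(\mathbf{U}\rho\mathbf{U}^\dagger,\sigma)$ into the definition of the variation range, the constants cancel and $\Delta_{\rm QSL}(V_1,V_2)$ becomes the spread of the Bures fidelity over $U_A$. Since $F \geq 0$, the minimization term is non-negative, so discarding it only increases the quantity:
\begin{equation*}
\Delta_{\rm QSL}(V_1,V_2) \;\leq\; \max_{U_A} F(\mathbf{U}\rho\mathbf{U}^\dagger,\sigma).
\end{equation*}
Taking expectation over $V_1,V_2$ and invoking Lemma~\ref{lemma:EV1V2_max_F} then directly yields $\mathbb{E}_{V_1,V_2}[\Delta_{\rm QSL}] \leq d_A/d_B$, which is the claim.

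The real work therefore sits inside the lemma chain. The most delicate step is the pivot from the Bures fidelity, which is not polynomial in the entries of $V_1,V_2$, to a quantity a $1$-design can evaluate. I would carry this out in two stages, exactly as in Lemma~\ref{lemma:EV1V2_max_F}: first, use the monotonicity of $F$ under the partial-trace channel $\tr_A$ to bound $F((U_A\otimes I_B)V_1\rho V_1^\dagger(U_A\otimes I_B)^\dagger,\sigma)$ by $F(\tr_A(V_1\rho V_1^\dagger),\tr_A\sigma)$, which eliminates $U_A$ entirely and lets me drop the maximization without loss; second, apply Lemma~\ref{lemma:upper_bound_Bures} to replace the Bures fidelity by $d_A\tr(\rho_B\sigma_B)$, turning the integrand into a degree-$1$ polynomial in $V_1,V_1^\dagger$ so the $1$-design hypothesis applies.

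At this point the expectation is a single application of Lemma~\ref{lemma:integral_VAV=trAd}: averaging $V_1\rho V_1^\dagger$ yields $I/d$, so $\tr(\tr_A(I/d)\cdot \tr_A\sigma) = \tr(\sigma)/d_B = 1/d_B$, and multiplying by the rank bound $d_A$ gives the desired $d_A/d_B$. The symmetric case where $\mathbb{V}_2$ is the $1$-design is handled by unitary invariance of $F$, absorbing $V_2$ into $\sigma$ via $\sigma \mapsto V_2^\dagger \sigma V_2$ so the same argument applies with $U_A$ now rotating $\sigma$ rather than $\rho$.

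The main obstacle is conceptual rather than computational: the quantity being averaged is a maximum of a non-polynomial function of the random unitary, so one cannot apply design identities directly. Everything hinges on finding pointwise (in $U_A$ and $V$) upper bounds, each polynomial of degree at most one in the entries of one of $V_1,V_2$, that are simultaneously (i) loose enough to allow the supremum over $U_A$ to be dropped, and (ii) tight enough to still yield the $d_A/d_B$ scaling. The monotonicity-plus-rank-bound combination is precisely what threads this needle, which is why the required randomness drops from a $2$-design (as in Theorem~\ref{theorem:main-theorem}) all the way down to a $1$-design in this special fidelity setting.
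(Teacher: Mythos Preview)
Your proposal is correct and follows essentially the same route as the paper: drop the non-negative $\min_{U_A} F$ term, then invoke Lemma~\ref{lemma:EV1V2_max_F}, whose proof in turn rests on the monotonicity of $F$ under partial trace, the rank bound of Lemma~\ref{lemma:upper_bound_Bures}, and the $1$-design identity of Lemma~\ref{lemma:integral_VAV=trAd}, with the $\mathbb{V}_2$ case handled by unitary invariance. Your identification of the key conceptual step---replacing the non-polynomial Bures fidelity by a degree-one quantity so that a $1$-design suffices---matches the paper's reasoning exactly.
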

\begin{proof}
By definition, we have $\mathbf{U}=V_2 (U_A\otimes I_B) V_1$ and 
\begin{equation}
    \Delta_{{\rm Q S L}}(V_1,V_2) = \max_{U_A} F \left( \mathbf{U} \rho \mathbf{U}^\dagger, \sigma \right) - \min_{U_A} F \left( \mathbf{U} \rho \mathbf{U}^\dagger, \sigma \right).
\end{equation}
According to Lemma \ref{lemma:EV1V2_max_F} and the non-negativity of the fidelity, it holds that
\begin{equation}
    \begin{aligned}
        \mathbb{E}_{V_1,V_2} \left[ \Delta_{{\rm Q S L}}(V_1,V_2) \right]
        \leq \mathbb{E}_{V_1,V_2}\left[ \max_{U_A} F \left( \mathbf{U} \rho \mathbf{U}^\dagger, \sigma \right) \right] \leq \frac{d_A}{d_B},
    \end{aligned}
\end{equation}
if either $V_1$ or $V_2$, or both are from unitary $1$-designs.
\end{proof}

For qubit systems where $d_A=2^m$ and $d_B=2^{n-m}$, the upper bound in (\ref{Eq:state-learning}) reduces to that in the manuscript, i.e.
\begin{equation}
    \mathbb{E}_{V_1, V_2} [ \Delta_{{\rm QSL}} (V_1, V_2) ] \leq \frac{1}{2^{n-2m}}.
\end{equation}
Importantly, due to the non-negativity and boundedness of $\Delta_{{\rm QSL}}(V_1,V_2)$, we can derive the upper bound on the variance and the probability tail from Proposition \textcolor{blue}{2} using Lemma \ref{lemma:var<expec} and Markov's inequality in Lemma \ref{lemma:markov_inequality}, i.e.
\begin{equation}
    \begin{aligned}
        & \operatorname{Var}_{V_1,V_2}[\Delta_{{\rm QSL}}(V_1,V_2)] 
        \leq 1 \cdot \mathbb{E}_{V_1, V_2}[\Delta_{{\rm QSL}}(V_1,V_2)]
        \leq \frac{d_A}{d_B}, \\
        & \operatorname{Pr}\left[ \Delta_{{\rm QSL}}(V_1,V_2) \geq \epsilon \right]
        \leq \frac{\mathbb{E}_{V_1, V_2}[\Delta_{{\rm QSL}}(V_1,V_2)]}{\epsilon}
        \leq \frac{1}{\epsilon} \frac{d_A}{d_B},~~\forall~\epsilon>0.
    \end{aligned}
\end{equation}

\subsection{Numerical simulation with varying layers}\label{supplementary:varying_layers}
This section provides some experimental results on how the variation range of the cost function caused by a local unitary varies with the number of circuit layers. We construct circuits of $V_1$ with different numbers of layers to perform experiments with other settings the same as those in the manuscript. As shown in Fig.~\ref{fig:with_layers}, different lines with markers represent the average value of $\Delta_{{\rm VQE}}(V_1,V_2)$ over samples vs. the number of qubits $n$ corresponding to different numbers of layers we laid in $V_1$. We can see that as the number of layers increases, these lines become more and more parallel to the dashed reference line, which has a slope of $-0.5$, i.e., the exponential decay rate we derived in Theorem~\textcolor{blue}{1}. Thus there is a transition to $2$-design where $\mathbb{E}_{V_1,V_2}[\Delta_{H,\rho}(V_1,V_2)]$ converges. This implies that Theorem~\textcolor{blue}{1} is valid when the circuit is sufficiently deep, practically with depth around $10\times n$.
\begin{figure}
    \includegraphics[scale=0.4]{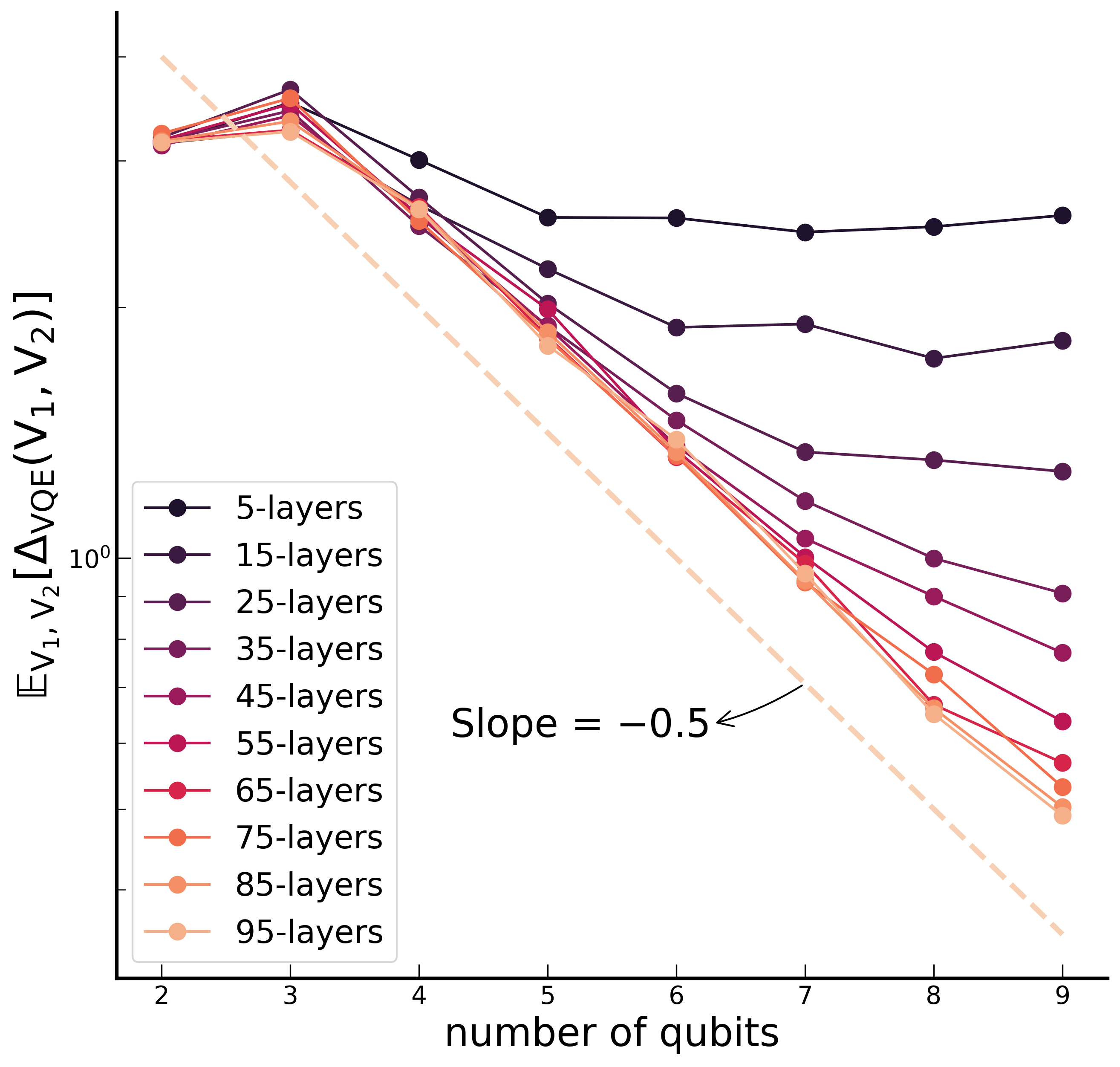}
    \caption{The semi-log plot of the average value of the variation range $\Delta_{{\rm VQE}}(V_1,V_2)$ vs. the number of qubits. The cost function used here is the energy expectation of the $1$-dimensional antiferromagnetic Heisenberg model. Different lines represent different numbers of circuit layers from $5$ to $95$ with step length $10$, with the line for $5$ layers on the top and $95$ layers on the bottom. And the dashed line, as a guide to the eye, has a slope of $-0.5$, which is the exponential decay rate we derived in Theorem~\textcolor{blue}{1}.}
    \label{fig:with_layers}
\end{figure}

\end{document}